\renewcommand{\paragraph}[1]{\medskip \noindent {\bf #1}}
\newcommand{\OUT}{{\text{out}}}
\newcommand{\D}{\mathcal{D}}
\newcommand{\poly}{\operatorname{poly}}
\newcommand{\diam}{\operatorname{diam}}
\newcommand{\W}{\mathcal{W}}
\newcommand{\C}{\mathcal{C}}
\renewcommand{\D}{\mathcal{O}}
\newcommand{\Q}{\mathcal{Q}}
\newcommand{\dist}{\phi}
\newcommand{\eps}{\varepsilon}
\newcommand{\T}{\mathcal{T}}
\newcommand{\norm}[1]{\left\lVert#1\right\rVert}
\renewcommand{\O}{\widetilde{O}}
\newcommand{\false}{\textrm{false}}
\renewcommand{\Re}{\mathbb{R}}
\newcommand{\graph}{G'}
\newcommand{\setCliques}{\mathcal{F}}
\renewcommand{\H}{\mathcal{H}}
\newcommand{\AM}{\bar{A}_\square}
\newcommand{\BM}{\bar{B}_\square}
\title{Dynamic Enumeration of Similarity Joins} %TODO Please add
\author{Pankaj K. Agarwal}{Duke University}{}{}{}%TODO mandatory, please use full name; only 1 author per \author macro; first two parameters are mandatory, other parameters can be empty. Please provide at least the name of the affiliation and the country. The full address is optional
\author{Xiao Hu}{Duke University}{}{}{}
\author{Stavros Sintos}{University of Chicago}{}{}{}
\author{Jun Yang}{Duke University}{}{}{}
\authorrunning{P. K. Agarwal et al.} %TODO mandatory. First: Use abbreviated first/middle names. Second (only in severe cases): Use first author plus 'et al.'
\keywords{dynamic enumeration, similarity joins, worst-case delay guarantee} %TODO mandatory; please add comma-separated list of keywords
\DeclareMathOperator{\sign}{sgn}
\begin{document}

\maketitle

\begin{abstract}
This paper considers enumerating answers to {\em similarity-join} queries under dynamic updates: Given two sets of $n$ points $A,B$ in $\Re^d$, a metric $\dist(\cdot)$, and a distance threshold $r > 0$, report all pairs of points $(a, b) \in A \times B$ with $\dist(a,b) \le r$. Our goal is to store $A,B$ into a dynamic data structure that, whenever asked, can enumerate all result pairs with worst-case {\em delay} guarantee, i.e., the time between enumerating two consecutive pairs is bounded. Furthermore, the data structure can be efficiently updated when a point is inserted into or deleted from $A$ or $B$. 

We propose several efficient data structures for answering similarity-join queries in low dimension. For exact enumeration of similarity join, we present near-linear-size data structures for $\ell_1, \ell_\infty$ metrics with $\log^{O(1)} n$ update time and delay. We show that such a data structure is not feasible for the $\ell_2$ metric for $d \ge 4$. For {\em approximate} enumeration of similarity join, where the distance threshold is a soft constraint, we obtain a unified  linear-size data structure for $\ell_p$ metric, with $\log^{O(1)} n$ delay and update time. In high dimensions, we present an efficient data structure with worst-case delay-guarantee using {\em locality sensitive hashing} (LSH).  
\end{abstract}

\section{Introduction}
	\label{sec:intro}
	There has been extensive work in many areas including theoretical computer science, computational geometry, and database systems on designing efficient dynamic data structures to store a set $\D$ of objects so that certain queries on $\D$ can be answered quickly and objects can be inserted into or deleted from $\D$ dynamically. A query $\Q$ is specified by a set of constraints and the goal is to report the subset $\Q(\D) \subseteq \D$ of objects that satisfy the constraints, the so-called {\em reporting} or {\em enumeration} queries. More generally, $\Q$ may be specified on $k$-tuples of objects in $\D$, and we return the subset of $\D^k$ that satisfy $\Q$. One may also ask to return certain statistics on $\Q(\D)$ instead of $\Q(\D)$ itself, but here we focus on enumeration queries. As an example, $\D$ is set of points in $\Re^d$ and a query $\Q$ specifies a simple geometric region $\Delta$ (e.g., box, ball, simplex) and asks to return $\D \cap \Delta$, the so-called {\em range-reporting} problem. As another example, $\D$ is again a set of points in $\Re^d$, and $\Q$ now specifies a value $r \ge 0$ and asks to return all pairs $(p,q) \in \D \times \D$ with $\lVert p-q\rVert \le r$. Traditionally, the performance of a data structure has been measured by its size, the time needed to update the data structure when an object is inserted or deleted, and the {\em total time} spent in reporting $\Q(\D)$. In some applications, especially in exploratory or interactive data analysis of large datasets, it is desirable to report $\Q(\D)$ incrementally one by one so that users 
    %Dynamic queries are central to many real-time data analytical systems, where input data can be updated. Given a database $\D$ and a query $Q$, let $Q(\D)$ be the result of query over $\D$. An update $\Delta$ on $\D$ could be an insertion or deletion of a tuple. The goal is to design data structures such that $Q(\D + \Delta)$ can be computed efficiently. Traditional databases have approached this problem with {\em incremental view maintenance}~\cite{ashish93view} or {\em high-order incremental view maintenance}~\cite{ahmad2012dbtoaster, koch2010incremental, nikolic2016win, agarwal1995dynamic}, by materializing a set of views based on the original query or delta queries (taking changes to the view as a new query) recursively. This approach suffers from the high space complexity of materialized views, and unbounded time of maintenance after each update. While materialization could be avoided by maintaining a data structure for $\D$ such that $Q(\D)$ could be enumerated efficiently as if it were materialized. This also enable users to 
    can start exploiting the first answers while waiting for the remaining ones.  To offer guarantees on the regularity during the enumeration process, maximum {\em delay} between the enumeration of two consecutive objects has emerged as an important complexity measure of a data structure~\cite{bagan2007acyclic}. Formally speaking, {\em $\delta$-delay enumeration} requires that the time between the start of the enumeration process to the first result, the time between any consecutive pair of results, and the time between the last result and the termination of the enumeration process should be at most $\delta$.

	In this paper, we are interested in dynamic data structures for {\em (binary) similarity join} queries, which have numerous applications in data cleaning, data integration, collaborative filtering, etc. Given two sets of points $A$ and $B$ in $\Re^d$, a metric $\dist(\cdot)$, and a distance threshold $r > 0$, the similarity join asks to report  %$\res_{\dist}(A,B,r)=\{(a,b)\in A\times B\mid \dist(a,b)\leq r\}$, 
	%the set of 
	all pairs of $(a,b) \in A \times B$ with $\dist(a,b) \le r$.
	%If $\dist$ is obvious from the context, we write $\res(A,B,r)=\res_{\dist}(A,B,r)$.
	Similarity joins have been extensively studied in the database and data mining literature~\cite{chaudhuri2006primitive,jacox2008metric,paredes2009solving,silva2010similarity, wang2012can}, but it is still unclear how to enumerate similarity join results efficiently when the underlying data is updated. 
	%
	%In this work, we formally study the problem of dynamic enumeration of similarity joins. The 
	Our goal is to design a dynamic data structure that can be efficiently updated when an input point is inserted or deleted; and whenever an enumeration query is issued,  all join results can be enumerated from it with {\em worst-case delay} guarantee. %Towards the similarity join, we are also interested in the case where the distance threshold $r$ is a soft constraint. The \emph{$\eps$-approximate similarity join} relaxes the distance threshold for some parameter $\eps > 0$: 1) all pairs of $(a,b) \in A \times B$ with $\dist(a,b) \le r$ should be returned; 2) some pairs of $(a,b) \in A \times B$ with $r < \dist(a,b) \le (1+\eps)r$ may be returned; 3) no pair of $(a,b)\in A \times B$ with $\dist(a,b)>(1+\eps)r$ is returned. Towards the dynamic enumeration, we are also interested in the case where the distance threshold $r$ could be updated in different invocations of the enumeration procedure. 

\subsection{Previous results}
    
    We briefly review the previous work on similarity join and related problems. See surveys~\cite{al2016survey, augsten2013similarity, silva2016experimental} for more results.
    
    \paragraph{Enumeration of Conjunctive Query.} Conjunctive queries are built upon natural join ($\Join$), which is a special case of similarity join with $r = 0$, i.e., two tuples can be joined if and only if they have the same value on the join attributes.
    Enumeration of conjunctive queries has been extensively studied in the static settings~\cite{bagan2007acyclic,segoufin2013enumerating, carmeli2019enumeration} for a long time. 
    %The {\em dynamic descriptive complexity} framework was first introduced in~\cite{patnaik1997dyn}, for the expressive power of first-order logic on dynamic database (see~\cite{ schwentick2016dynamic} for a survey). However, this is different from our goal of designing an efficient data structure under update. 
    In 2017, two simultaneous papers~\cite{berkholz2017answering, idris2017dynamic} started to study dynamic enumeration of conjunctive query. Both obtained a dichotomy that a linear-size data structure that can be updated in $O(1)$ time while supporting $O(1)$-delay enumeration, exists for a conjunctive query if and only if it is {\em q-hierarchical} (e.g., the degenerated natural join over two tables is q-hierarchical). 
    However, for non-q-hierarchical queries with input size $n$, they showed a lower bound  $\Omega(n^{\frac{1}{2} - \eps})$ on the update time for any small constant $\eps > 0$, if aiming at $O(1)$ delay. This result is very negative since q-hierarchical queries are a very restricted class; for example, the matrix multiplication query $\pi_{X,Z} R_1(X,Y) \Join R_2(Y,Z)$, where $\pi_{X,Y}$ denotes the projection on attributes $X, Y$, and the triangle join $R_1(X,Y) \Join R_2(Y,Z) \Join R_3(Z,X)$ are already non-q-hierarchical.
    Later, Kara et al.~\cite{kara2019counting} designed optimal data structures supporting $O(\sqrt{n})$-time maintenance for some selected non-q-hierarchical queries like the triangle query etc.
    However, it is still unclear if a data structure of $O(\sqrt{n})$-time maintenance can be obtained for a large class of queries. Some additional trade-off results have been obtained in~\cite{kara2020trade, wang2020maintaining}.

	\paragraph{Range search.}  A widely studied problem related to similarity join is {\em range searching}~\cite{agarwal2017simplex, agarwal1999geometric, bentley1979data, willard1996applications}:
	Preprocess a set $A$ of points in $\Re^d$ with a data structure so that for a query range $\gamma$ (e.g., rectangle, ball, simplex), all points of $A \cap \gamma$ can be reported quickly. %; alternatively, the points in $A$ may have weights and we may wish to perform a simple aggregation operation (e.g., SUM, MIN, MAX) on the weight of $A \cap \gamma$. 
	A particular instance of range searching, the so-called {\em fixed-radius-neighbor} searching, in which the range is a ball of fixed radius centered at query point is particularly relevant for similarity joins. For a given metric $\phi$, let $\mathcal{B}_\phi(x,r)$ be the ball of radius $r$ centered at $x$.
	%The shape of ball depends on the metric $\phi$.
	A similarity join between two sets $A, B$ can be answered by querying $A$ with ranges $\mathcal{B}_\phi(b,r)$ for all $b \in B$.
	%
	%There is extensive literature on range searching both in databases and computational geometry. It is beyond the scope of this paper to give an overview of these results; we refer the readers to various surveys~\cite{agarwal2017simplex, agarwal1999geometric, bentley1979data, willard1996applications}. We note that if the query ranges are axis-parallel rectangles, then a data structure of $\O(n)$ size can be constructed in $\O(n)$ time so that a query can be answered in $\O(1)$ time. Here $\O(f(n))$ means $O(f(n)\polylog(n))$; the exponent of the hidden poly-log factor in $\O(\cdot)$ notation is a constant but may depend on $d$. If the query ranges are simplices, balls, or more generally semi-algebraic sets, then the best-known data structure of $\O(n)$ size answers a query in $\O(n^{1-1/d})$~\cite{agarwal2017simplex, chan2012optimal}. Techniques exist to improve the query time by increasing the size of the  data structure~\cite{matouvsek1993range}. Many of the range searching data structures can be modified to handle dynamic updates in the input sets, using the standard dynamization techniques~\cite{bentley1980decomposable}.
	%
	Notwithstanding this close relationship between range searching and similarity join, the data structures for the former cannot be used for the latter: It is too expensive to query $A$ with $\mathcal{B}_\phi(b,r)$ for every $b \in B$ whenever an enumeration query is issued, especially since many such range queries may return empty set, and it is not clear how to maintain the query results as the input set $A$ changes dynamically. %Second, range-searching data structures do not guarantee bounds on the delay between reporting successive points; they only focus on minimizing the overall query time.
	
%    In order to support dynamic insertions/deletions in range searching problems, the set of the input items is usually partitioned into non-overlapping groups (for example $O(\log n)$ groups) \cite{bentley1980decomposable}, and for each such group, an independent data structure is built. There are a lot of known methods on how to efficiently update the indices if specific conditions are met (see \cite{erickson2011static} for more details). A query is executed by all different indices and the results are merged to return the final answer. In our similarity join query, it is not possible to construct such a scheme over all input points $A\cup B$ and use the known techniques. The reason is that if we have have two sets $P_1=A_1\cup B_1$, and $P_2=A_2\cup B_2$ with $A_1\cup A_2=A$ and $B_1\cup B_2=B$ and get the solution to the similarity join query in $P_1$ and $P_2$ independently, we cannot efficiently merge the solutions to get an overall solution over $P_1\cup P_2$. The main problem is that a point in $A_1$ might be within distance $r$ from a point in $B_2$; there can be many dependencies between the two groups. Hence, new ideas are needed to overcome this challenge.

\paragraph{Reporting neighbors.}
    The problem of reporting neighbors is identical to our problem in the offline setting. In particular, given a set  $P$ of $n$ points in $\Re^d$ and a parameter $r$, the goal is to report all pairs of $P$ within distance $r$. The algorithm proposed in~\cite{lenhof1995sequential} can be modified to solve the problem of reporting neighbors under the $\ell_\infty$ metric in $O(n+k)$ time, where $k$ is the output size. Aiger et al.~\cite{aiger2014reporting} proposed randomized algorithms for reporting neighbors using the $\ell_2$ metric in $O((n+k)\log n)$ time, for constant $d$.
    
\paragraph{Scalable continuous query processing.}
    %The range-searching problem asks to answer a query on the current snapshot of the data.
    There has been some work on scalable {\em continuous query} processing, especially in the context of data streams~\cite{chen2000niagaracq, chandrasekaran2002streaming, wu2004interval} and publish/subscribe~\cite{fabret2001filtering}, where the queries are standing queries and whenever a new data item arrives, the goal is to report all queries that are affected by the new item~\cite{agarwal2006scalable, agarwal2005monitoring}. 
    %(e.g., in the context of range queries, reporting all query ranges that contain the new item)
    In the context of similarity join, one can view $A$ as the data stream and $\mathcal{B}_\phi(b,r)$ as standing queries, and we update the results of queries as new points in $A$ arrive. There are, however, significant differences with similarity joins---arbitrary deletions
	are not handled; continuous queries do not need to return previously produced results; basing enumeration queries on a solution for continuous queries would require accessing previous results, which can be prohibitive if stored explicitly.

\subsection{Our results}

\begin{table}[t]
    \centering
    \begin{tabular}{|c|c|c|c|c|c|}
    \hline
       \multirow{2}{*}{Enumeration} &  \multirow{2}{*}{Metric} &  \multirow{2}{*}{Properties} & \multicolumn{3}{c|}{Data Structures} \\
        \cline{4-6}
        & & & Space & Update & Delay \\ 
       \hline
    \multirow{2}{*}{Exact} & $\ell_1/\ell_\infty$ & $r$ is fixed  & $\O(n)$ & $\O(1)$& $\O(1)$ \\
    \cline{2-6}
     & $\ell_2$
    & $r$ is fixed & $\O(n)$ & $\O(n^{1-\frac{1}{d+1}})$ & $\O(n^{1-\frac{1}{d+1}})$ \\
    \hline
    \multirow{3}{*}{$\epsilon$-} & \multirow{3}{*}{$\ell_p$}
    & $r$ is fixed  & $O(n)$ & $\O(\epsilon^{-d})$ & $\O(\epsilon^{-d})$ \\
    \cline{3-6}
    \multirow{3}{*}{Approximate} &  & $r$ is variable & \multirow{2}{*}{$O(\eps^{-d}n)$} & \multirow{2}{*}{$\O(\eps^{-d})$} & \multirow{2}{*}{$O(1)$} \\ 
     &  & 
    spread is $\poly(n)$ & & & \\
    \cline{2-6}
    & $\ell_1, \ell_2,$ & $r$ is fixed & \multirow{2}{*}{$\O(dn+n^{1+\rho})$} & \multirow{2}{*}{$\O(dn^{2\rho})$} & \multirow{2}{*}{$\O(dn^{2\rho})$} \\
    & hamming & high dimension &  & & \\
    \hline
    \end{tabular}
    
    \caption{Summary of Results: $n$ is the input size; $r$ is the distance threshold; $d$ is the dimension of input points; $\smash{\rho\leq \frac{1}{(1+\eps)^2}+o(1)}$ is the quality of LSH family for the $\ell_2$ metric. For $\ell_1$, Hamming $\rho\leq \frac{1}{1+\eps}$. $\O$ notation hides a $\log^{O(1)} n$-factor; for the results where $d$ is constant the $O(1)$ exponent is at most linear on $d$, while for the high dimensional case the exponent is at most $3$.}
    \label{tab:summary}
\end{table}
We present several dynamic data structures for enumerating similarity joins under different metrics. Table~\ref{tab:summary} summarizes our main results. It turns out that dynamic similarity join is hard for some metrics, e.g., $\ell_2$ metric. Therefore we also consider {\em approximate similarity join} where the distance threshold $r$ is a soft constraint. Formally, given parameter $r,\eps >0$, the \emph{$\eps$-approximate similarity join} relaxes the distance threshold for some parameter $\eps > 0$: (1) all pairs of $(a,b) \in A \times B$ with $\dist(a,b) \le r$ should be returned; (2) no pair of $(a,b)\in A \times B$ with $\dist(a,b)>(1+\eps)r$ is returned; (3) some pairs of $(a,b) \in A \times B$ with $r < \dist(a,b) \le (1+\eps)r$ may be returned.
%We begin with results assuming that $r$ is known in advance and lift this assumption later. %
We classify our results in four broad categories:
%Our results for similarity join queries are summarized in Table~\ref{tab:summary}.

\paragraph{Exact similarity join.} %data structure for $\ell_1$, $\ell_\infty$, and $\ell_2$; fixed $r$.}
%We classify our results in three categorise. 
Here we assume that $d$ is constant and the distance threshold is fixed. %We exploit the geometry of the shape of the ball, defined by input points and the distance threshold $r$. 
Our first result (Section~\ref{sec:linfty}) is an $\O(1)$-size data structure for similarity join under the $\ell_1/\ell_\infty$ metrics that %, based on % %, i.e., it has $\O(n)$ size, $\O(1)$ amortized update time, and $\O(1)$ delay bound.
%It uses a 
%a range tree~\cite{bentley1978decomposable, de1997computational}. %a data structure for range searching, but several new ideas are needed to overcome the challenges mentioned above. %First, we propose (in Section~\ref{sec:framework}) a general framework for building a data structure for similarity join and then describe how to implement it efficiently. There are two main ingredients. 
%We 
can be updated in $\O(1)$ time whenever $A$ or $B$ is updated, and ensures $\O(1)$ time delay during enumeration. %, based on Range (\romannumeral 2) the representation can be updated quickly whenever $A$ or $B$ is updated, and (\romannumeral 3) we ensure $\O(1)$ delay during enumeration.
Based on range trees~\cite{bentley1978decomposable, de1997computational}, the data structure stores the similarity join pairs \emph{implicitly} so that %(\romannumeral 1) 
they can be enumerated without probing every input point. 
We extend these ideas to construct a data structure for similarity join under the $\ell_2$ metric (in Section~\ref{sec:l2}) %using a data structure for ball range searching, 
with $\O(n^{1-1/d})$ amortized update time while supporting $\O(n^{1-1/d})$-delay enumeration. Lower bounds on ball range searching~\cite{afshani2012improved, chazelle1996simplex} rule out the possibility of a linear-size data structure with $\O(1)$ delay. %minimal data structure. % that there is no data structure for this case with $\O(1)$ update time and delay bound.

\paragraph{Approximate similarity join in low dimensions.}  
Due to the negative result for $\ell_2$ metric, we shift our attention to $\eps$-approximate similarity join. We now allow the distance threshold to be part of the query but the value of $\eps$, the error parameter, is fixed. We present a simple linear-size data structure based on quad trees and the notion of well-separated pair decomposition, with $O(\epsilon^{-d})$ update time and $O(1)$ delay. %Its size is $O(\eps^{-d}n)$, update time is $O()$, and the delay is $O()$. 
If we fix the distance threshold, then the data structure can be further simplified and somewhat improved by replacing the quad tree with a simple uniform grid. 

\paragraph{Approximate similarity join in high dimensions.} So far we assumed $d$ to be constant and the big $O$ notation in some of the previous bounds hides a constant that is exponential in $d$.  
	Our final result is an  LSH-based~\cite{gionis1999similarity} data structure for similarity joins in high dimensions. %, are data structures for higher values of $d$ using {\em locality sensitive hashing} (LSH)~\cite{gionis1999similarity}.
	Two technical issues arise when enumerating join results from LSH: one is to ensure bounded delay because we do not want to enumerate false positive results identified by the hash functions,
	%(by the definition of $\epsilon$-approximate similarity join)
	and the other is to remove duplicated results as one join result could be identified by multiple hash functions.
    %We start with the {\em uniform assumption} that points are chosen from the universe space randomly, and points to be deleted are chosen from existing ones randomly. Under this assumption, we propose a data structure of size $\O(nd)$ that can be constructed in $\O(nd)$ time and updated in $\O(d)$ time, while supporting $\O(d)$-delay enumeration for similarity join under any $\ell_p$ norm, with probability at least $1 - 1/n$.
    For the $\ell_2$ metric (the results can also be extended to $\ell_1$, Hamming metrics) we propose a data structure of $\O(nd +n^{1+\rho})$ size and $\O(dn^{2\rho})$ amortized update time that supports $(1+2\eps)$-approximate enumeration with $\O(dn^{2\rho})$ delay with high probability, where $\rho \le \frac{1}{(1+\eps)^2}+o(1)$ is the quality of the LSH family.
   % 
   %We achieve the following data structure under Hamming metric, which can be easily extended to $\ell_1, \ell_2$ metrics with the same complexity. Let $\eps>0$ be the approximation ratio and $\rho \le \frac{1}{1+\eps}$ be the quality of the LSH family. A data structure of $\O(nd +n^{1+\rho})$ size can be built in $\O(d n^{1+\rho})$ time and updated in $\O(dn^{\rho})$ amortized time, while supporting $(3+2\eps)$-approximate enumeration with $\O(dn^{\rho})$ delay with high probability. %Alternatively, we can guarantee $(1+2\eps)$-approximate enumeration with a data structure of $\O(nd +n^{1+\rho})$ size, $\O(dn^{3\rho})$ amortized update time, and $\O(dn^{3\rho})$ delay.
   Alternatively, we present a data structure with $\O(dn^{\rho})$ amortized update time and $\O(dn^{3\rho})$ delay.
    Our data structure can be extended to the case when the distance threshold $r$ is variable. If we allow worse approximation error we can improve the results for the Hamming distance. Finally, we show a lower bound by relating similarity join to the {\em approximate nearest neighbor} query.%(ANN) %\xiao{to be done: mention $\ell_1, \ell_2$ first and then hamming distance.} %Due to lack of space, wall the details of similarity join in higher dimensions are presented in Appendix~\ref{sec:lsh}.
	%
	%Recall that the update time for the exact data structures for the $\ell_1, \ell_\infty, \ell_2$ metrics is amortized. We note that using known techniques~\cite{bentley1980decomposable, erickson2011static} we can have worst case update time. Since it follows from well known techniques we skip this discussion from this version of the paper.

    We also consider similarity join beyond binary joins. %Our results for similarity join queries are summarized in Table~\ref{tab:summary}.
	
	\paragraph{Triangle similarity join in low dimensions.}  %In particular given three sets $A, B, S$ and a threshold $r$ construct dynamic data structures that report all $(a,b,s)\in A\times B\times S$ with $\dist(a,b), \dist(a,s), \dist(b,s) \leq r$ (approximately or exactly) with bounded delay. 
	Given three sets of points $A, B, S$ in $\Re^d$, a metric $\dist(\cdot)$, and a distance threshold $r > 0$, the {\em triangle similarity join} asks to report %$\res_{\dist}(A,B,r)=\{(a,b)\in A\times B\mid \dist(a,b)\leq r\}$, 
	the set of all triples of $(a,b,s) \in A \times B \times S$ with $\dist(a,b) \le r, \dist(a,s) \le r, \dist(b,s) \le r$. The {\em $\eps$-approximate triangle similarity join} can be defined similarly by taking the distance threshold $r$ as a soft constraint. We extend our data structures %on approximate similarity join for constant $d$ 
	to approximate {\em triangle similarity join} by paying a factor of $\log^{O(1)} n$ in the performance. 
	%Due to the space limit, we describe this extension in Appendix~\ref{sec:trianglejoin}.

    \paragraph{High-level framework.}
    All our data structures rely on the following common framework. We model the (binary) similarity join as a bipartite graph $\graph = (A\cup B, E)$, where an edge $(a,b) \in E$ if and only if $\dist(a,b) \le r$. A naive solution by maintaining all edges of $\graph$ explicitly leads to a data structure of $\Theta(n^2)$ size that can be updated in $\Theta(n)$ time while supporting $O(1)$-delay enumeration. To obtain a data structure with poly-logarithmic update time and delay enumeration, we find a compact representation of $\graph$ with a set $\setCliques=\{(A_1, B_1), (A_2, B_2),\ldots, (A_u,B_u)\}$ of edge-disjoint bi-cliques such that (i) $A_i\subseteq A$, $B_i\subseteq B$ for any $i$, (ii) $E = \bigcup_{i=1}^u A_i \times B_i$, and (iii) $(A_i \times B_i) \cap (A_j \times B_j) = \emptyset$ for any $i \neq j$. We represent $\setCliques$ using a tripartite graph $\mathcal{G}=(A\cup B\cup C, E_1\cup E_2)$ where $C=\{c_1, \ldots, c_u\}$ has a node for each bi-clique in $\setCliques$ and for every $i\leq u$, we have the edges $(a_j,c_i)\in E_1$ for all $a_j\in A_i$ and $(b_k,c_i)\in E_1$ for all $b_k\in B_i$. We \emph{cannot} afford to maintain $E_1$ and $E_2$ explicitly. Instead, we store some auxiliary information for each $c_i$ and use geometric data structures to recover the edges incident to a vertex $c_i\in C$. We also use data structures to maintain the set $C$ and the auxiliary information dynamically as $A$ and $B$ are being updated. We will not refer to this framework explicitly but it provides the intuition behind all our data structures. Section~\ref{sec:exact} describes the data structures to support this framework for exact similarity join, and Section~\ref{sec:approximate} presents simpler, faster data structures for approximate similarity join. Both Sections~\ref{sec:exact} and~\ref{sec:approximate} assume $d$ to be constant. Section~\ref{sec:highd} describes the data structure for approximate similarity join when $d$ is not constant.

\section{Exact Similarity Join}
\label{sec:exact}
In this section, we describe the data structure for exact similarity joins under the $\ell_\infty, \ell_1, \ell_2$ metrics, assuming $d$ is constant. We first describe the data structure for the $\ell_{\infty}$ metric. We show that similarity join under the $\ell_1$ metric in $\Re^d$ can be reduced to that under the $\ell_{\infty}$ metric in $\Re^{d+1}$. Finally, we describe the data structure for the $\ell_2$ metric. Throughout this section, the threshold $r$ is fixed, which is assumed to be $1$ without loss of generality.

\subsection{Similarity join under $\ell_\infty$ metric}
\label{sec:linfty}

Let $A$ and $B$ be two point sets in $\Re^d$ with $|A|+|B|=n$.  %Set $\res=\res(A,B,1)$. 
For a point $p\in\Re^d$, let $\mathcal{B}(p)=\{x\in \Re^d\mid \norm{p-x}_{\infty}\leq 1\}$ be the hypercube of side length $2$. We wish to enumerate pairs $(a,b)\in A\times B$ such that $a\in \mathcal{B}(b)$.

%We start by considering the similarity join under the $\ell_1/\ell_\infty$ metric, which is captured by a more general problem, namely the {\em rectangle-containment} problem.  Given a set of points as $A$ and a set of orthogonal rectangles as $B$, with $|A| + |B| =n$, it asks to find all pairs $(a, b) \in A \times B$ such that $a \in b$. Note that a similarity join with $\ell_\infty$ metric is a special case of a rectangle-containment problem where each side of the rectangles has length $2r$.
%We first present our results for the $\ell_\infty$ metric in $\Re^d$ for constant $d$. Then we show that similarity join under $\ell_1$ metric in $\Re^d$ can be reduced to similarity join under the $\ell_\infty$ metric in $\Re^{d+1}$.
%For the ease of understanding we illustrate the main idea by a one dimensional example, which reduces to the {\em interval-containment} problem.

\paragraph{Data structure.}
We build a $d$-dimensional dynamic range tree $\T$ on the points in $A$.
For $d=1$, the range tree on $A$ is a balanced binary search tree $\T$ of $O(\log n)$ height. The points of $A$ are stored at the leaves of $\T$ in increasing order, while each internal node $v$ stores the smallest and the largest values, $\alpha_v^-$ and $\alpha_v^+$, respectively, contained in its subtree.
The node $v$ is associated with an interval $I_v=[\alpha_v^-, \alpha_v^+]$ and the subset $A_v=I_v\cap A$.
For $d>1$, $\T$ is constructed recursively: 
We build a $1$D range tree $\T_d$ on the $x_d$-coordinates of points in $A$. Next, for each node $v\in \T_d$, we recursively construct a $(d-1)$-dimensional range tree $\T_v$ on $A_v^*$, which is defined as the projection of $A_v$ onto the hyperplane $x_d=0$, and attach $\T_v$ to $v$ as its secondary tree. The size of $\T$ in $\Re^d$ is $O(n\log^{d-1} n)$ and it can be constructed in $O(n\log^d n)$ time.
See~\cite{de1997computational} for details.

For a node $v$ at a level-$i$ tree, let $p(v)$ denote its parents in that tree. If $v$ is the root of that tree, $p(v)$ is undefined.
For each node $u$ of the $d$-th level of $\T$, we associate a $d$-tuple $\pi(u)=\langle u_1, u_2, \ldots, u_d=u\rangle$, where $u_i$ is the node at the $i$-th level tree of $\T$ to which the level-$(i+1)$ tree containing $u_{i+1}$ is connected.
We associate the rectangle $\square_u=\prod_{j=1}^d I_{u_j}$ with the node $u$.
For a rectangle $\rho=\prod_{i=1}^d \delta_i$
, a $d$-level node is called a \emph{canonical node} if for every $i\in [1,d]$, $I_{u_i}\subseteq \delta_i$ and $I_{p(u_i)}\not\subseteq \delta_i$.
For any rectangle $\rho$, there are $O(\log^d n)$ canonical nodes in $\T$, denoted by $\mathcal{N}(\rho)$, and they can be computed in $O(\log^d n)$ time~\cite{de1997computational}.
$\T$ can be maintained dynamically, as points are inserted into $A$ or deleted from $A$ using the standard partial-reconstruction method, which periodically reconstructs various bottom subtrees. The amortized time is $O(\log^d n)$; see~\cite{overmars1987design} for details.

We query $\T$ with $\mathcal{B}(b)$ for all $b\in B$ and compute $\mathcal{N}(b):=\mathcal{N}(\mathcal{B}(b))$ the sets of its canonical nodes. For each level-$d$ tree node $u$ of $\T$, let $B_u=\{b\in B\mid u\in \mathcal{N}(b)\}$. We have $\sum_{u}|B_u|=O(n\log^d n)$. By construction, for all pairs $(a,b)\in A_u\times B_u$, $\norm{a-b}_{\infty}\leq 1$, so $(A_u, B_u)$ is a bi-clique of %$\res$ 
join results. We call $u$ \emph{active} if both $A_u, B_u\neq \emptyset$.
A naive approach for reporting join results %$\res$ 
is to maintain $A_u, B_u$ for every $d$-level node $u$ of $\T$ as well as the set $\C$ of all active nodes. Whenever an enumerate query is issued, we traverse $\C$ and return $A_u\times B_u$ for all $u\in \C$ (referring to the tripartite-graph framework mentioned in Introduction, $C$ is the set of all level-$d$ nodes of $\T$).
The difficulty with this approach is that when $A$ changes and $\T$ is updated, some $d$-level nodes change and we have to construct $B_u$ for each new level-$d$ node $u\in \T$. It is too expensive to scan the entire $B$ at each update. Furthermore, although the average size of $B_u$ is small, it can be very large for a particular $u$ and this node may appear and disappear several times. So we need  a different approach. The following lemma is the key observation.

\begin{lemma}
\label{lem:A}
Let $u$ be a level-$d$ node, and let $\pi(u)=\langle u_1, \ldots, u_d=u\rangle$. Then there is a $d$-dimensional rectangle $\mathcal{R}(u)=\prod_{i=1}^d\delta_i$, where the endpoints of $\delta_i$, for $i\in [1,d]$, are defined by the endpoints of $I_{u_i}$ and $I_{p(u_i)}$, such that for any $x\in \Re^d$, $u\in \mathcal{N}(x)$ if and only if $x\in \mathcal{R}(u)$. Given $u_i$'s and $p(u_i)$'s, $\mathcal{R}(u)$ can be constructed in $O(1)$ time.
\end{lemma}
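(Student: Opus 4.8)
The plan is to reduce the $d$-dimensional statement to a one-dimensional computation, using two structural facts: the canonical-node condition factors coordinate by coordinate, and in a binary search tree the interval of a node shares exactly one of its two endpoints with the interval of its parent.

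First I would unfold the definitions. Recall that $\mathcal{N}(x)$ abbreviates $\mathcal{N}(\mathcal{B}(x))$, where $\mathcal{B}(x)=\prod_{i=1}^d[x_i-1,x_i+1]$. By the definition of a canonical node applied to the box $\mathcal{B}(x)$, we have $u\in\mathcal{N}(x)$ if and only if for every coordinate $i\in[1,d]$ both $I_{u_i}\subseteq[x_i-1,x_i+1]$ and $I_{p(u_i)}\not\subseteq[x_i-1,x_i+1]$ hold. These $d$ pairs of conditions involve pairwise-disjoint coordinates of $x$, so it suffices to show that, for each $i$, the set $\delta_i\subseteq\Re$ of values $x_i$ satisfying the $i$-th pair is a single interval whose endpoints lie in $\{\alpha^+_{u_i}-1,\ \alpha^-_{u_i}+1,\ \alpha^+_{p(u_i)}-1,\ \alpha^-_{p(u_i)}+1\}$. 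Granting this, $\mathcal{R}(u):=\prod_{i=1}^d\delta_i$ is an axis-parallel rectangle satisfying $x\in\mathcal{R}(u)\iff u\in\mathcal{N}(x)$, and it is assembled in $O(d)=O(1)$ time from the endpoints of $I_{u_i}$ and $I_{p(u_i)}$ stored at the nodes.

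It remains to analyze one coordinate $i$; write $I_{u_i}=[\alpha^-,\alpha^+]$. The containment $I_{u_i}\subseteq[x_i-1,x_i+1]$ is equivalent to $\alpha^+-1\le x_i\le\alpha^-+1$, a closed interval (empty when $\alpha^+-\alpha^->2$). For the non-containment condition I would split on which child $u_i$ is. If $u_i$ is the root of its tree, then $p(u_i)$ is undefined and the condition is vacuous, so $\delta_i=[\alpha^+-1,\alpha^-+1]$. If $u_i$ is a left child, then by the search-tree property $I_{p(u_i)}=[\alpha^-,\alpha^+_{p(u_i)}]$ with $\alpha^+_{p(u_i)}>\alpha^+$ (the two intervals share the left endpoint); since the containment condition already forces $x_i-1\le\alpha^-$, the interval $I_{p(u_i)}$ can fail to lie inside $[x_i-1,x_i+1]$ only on the right, that is, exactly when $x_i<\alpha^+_{p(u_i)}-1$, so $\delta_i=[\alpha^+-1,\alpha^-+1]\cap(-\infty,\alpha^+_{p(u_i)}-1)$ is again an interval with the claimed endpoints. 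The right-child case is the mirror image. This would complete the proof, modulo routine bookkeeping.

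The one subtle point is precisely this non-containment condition: had one forgotten the tree structure and allowed $I_{p(u_i)}$ to be an arbitrary interval strictly larger than $I_{u_i}$ on both sides, then ``$I_{p(u_i)}\not\subseteq[x_i-1,x_i+1]$'' would excise a sub-interval from the \emph{middle} of $[\alpha^+-1,\alpha^-+1]$, possibly leaving a union of two intervals rather than one, so that $\mathcal{R}(u)$ fails to be a box. The fact that a node shares one endpoint with its parent---turning the non-containment into a one-sided constraint---is exactly what keeps $\delta_i$, hence $\mathcal{R}(u)$, a genuine rectangle; everything else is a constant-time case analysis.
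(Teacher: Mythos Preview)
Your proposal is correct and follows essentially the same approach as the paper's proof: factor the canonical-node condition coordinate by coordinate, use the fact that in the range tree a node's interval shares one endpoint with its parent's to reduce the non-containment constraint to a one-sided inequality, and conclude that each $\delta_i$ is a single interval computable in $O(1)$ time. Your explicit handling of the root case and your closing remark on why the shared-endpoint property is essential are welcome clarifications, but the argument is the same as the paper's.
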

\begin{figure}[t]
    \centering
    \includegraphics[scale=0.4]{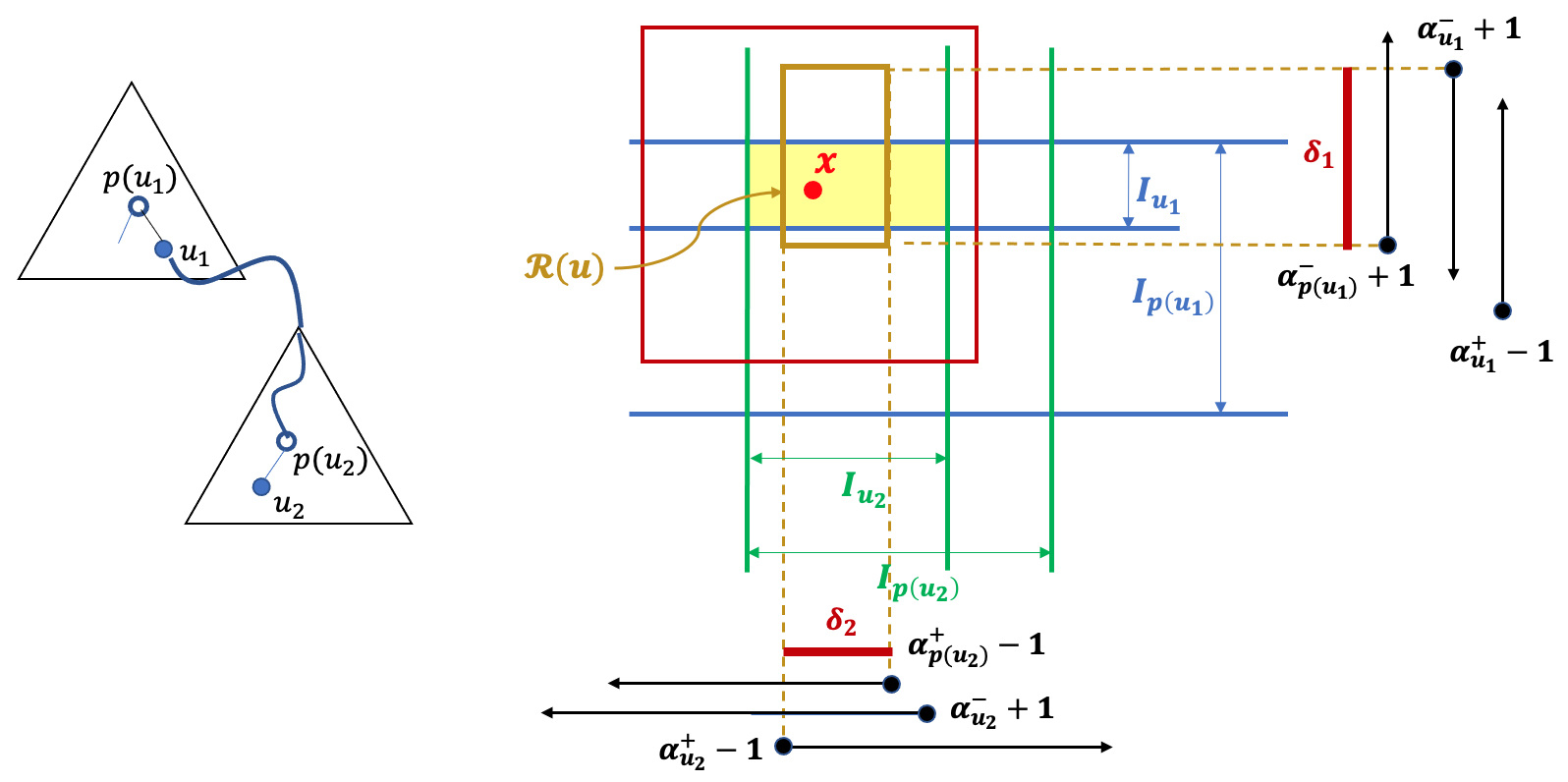}
    \caption{Left: Two levels of the range tree. Right: Definition of $\mathcal{R}(u)$.}
    \label{fig:example}
\end{figure}
\begin{proof}
Notice that $\mathcal{B}(x)$ is the hypercube of side length $2$ and center $x$.
Let $I_{u_i}=[\alpha_{u_i}^-, \alpha_{u_i}^+]$ for any $u_i$ and $i\in[1,d]$.
Recall that $u\in \mathcal{N}(x)$ if and only if for each $i\in [1,d]$,  \[I_{u_i}\subseteq [x_i-1, x_i+1] \textrm{ and }  I_{p(u_i)}\not\subseteq [x_i-1, x_i+1], \ \ \ (*)\] 
Fix a value of $i$.
From the construction of a range tree either $\alpha_{u_i}^-=\alpha_{p(u_i)}^-$ or $\alpha_{u_i}^+=\alpha_{p(u_i)}^+$.
Without loss of generality, assume $\alpha_{u_i}^-=\alpha_{p(u_i)}^-$; the other case is symmetric.
Then ($*$) can be written as: 
$x_i\leq \alpha_{u_i}^-+1$ and $\alpha_{u_i}^+-1\leq x_i<\alpha_{p(u_i)}^+-1$.
Therefore $x_i$ has to satisfy three $1$D linear constraints. The feasible region of these constraints is an interval $\delta_i$ and $x_i\in \delta_i$ (see also Figure~\ref{fig:example}).
Hence, $u$ is a canonical node of $\mathcal{B}(x)$ if and only if for all $i\in [1,d]$, $x_i\in \delta_i$. In other words, $x=(x_1,\ldots, x_d)\in \prod_{i=1}^d\delta_i := \mathcal{R}(u)$. The endpoints of $\delta_i$ are the endpoints of $I_{u_i}$ or $I_{p(u_i)}$.
%
%
%i) If $y_{u_i}^-=y_{p(u_i)}^-$ then $x_i-1\leq y_{u_i}^-$ and $y_{u_i}^+\leq x_i+1<y_{p(u_i)}^+$. Hence, if $y_{u_i}^- +1<y_{p(u_i)}^+-1$ then $\delta_i=[y_{u_i}^+-1, y_{u_i}^- +1]$. Otherwise, if $y_{u_i}^- +1\geq y_{p(u_i)}^+-1$ then  $\delta_i=[y_{u_i}^+-1, y_{p(u_i)}^- -1)$.
%ii) Symmetrically, if $y_{u_i}^+=y_{p(u_i)}^+$ then $y_{u_i}^+\leq x_i +1$ and $y_{p(u_i)}^-<x_i-1\leq y_{u_i}^-$. Hence, if $y_{u_i}^+ -1<y_{(p(u_i)}^- + 1$ then $\delta_i=[y_{u_i}^+ -1,y_{u_i}^- + 1]$. Otherwise, if $y_{u_i}^+ -1\geq y_{(p(u_i)}^- + 1$
%then $\delta_i=(y_{p(u_i)}^- +1, y_{u_i}^- + 1]$.
%In any case, $u$ is a canonical node of $x$ if and only if $x\in \mathcal{R}(u)$.
In order to construct $\mathcal{R}(u)$, we only need the intervals $I_{u_i}$ and $I_{p(u_i)}$ for each $i\in [1,d]$, so it can be constructed in $O(d)=O(1)$ time.
%
%
%The condition in Section~\ref{sec:prelim} of finding if node $u$ is a canonical node of a rectangle $b$ can be rewritten as a set of inequalities: for each $i \in [d]$, if $x_{u_i}^- = x_{p(u_i)}^-$, $b_{i}^- \le x_{u_{i}}^-$ and $x_{u_{i}}^+ \le b_{i}^+ < x_{p(u_i)}^+$; and if $x_{u_{i}}^+ = x_{p(u_i)}^+$, $x_{p(u_i)}^- < b_{i}^- \le x_{u_{i}}^-$ and $x_{u_{i}}^+ \le b_{1}^+$, which together define an open rectangle $\square_u'$ in $\Re^{2d}$. For example assume that $x_{u_{i}}^- = x_{p(u_i)}^-$ for any $i \in [d]$. Observe that $u$ is a canonical node for $b$ if and only if point $\bar{b}=(b_{1}^-, b_{1}^+, \cdots, b_{d}^-, b_{d}^+)$ lies in the rectangle $\square_u'=(-\infty, x_{u_{1}}^-] \times [x_{u_{1}}^+, x_{p(u_1)}^+) \times \ldots \times (-\infty, x_{u_{d}}^-] \times [x_{u_{d}}^+, x_{p(u_d)}^+)$.
\end{proof}

In view of Lemma~\ref{lem:A}, we proceed as follows. We build a dynamic range tree $\mathcal{Z}$ on $B$. Furthermore, we augment the range tree $\T$ on $A$ as follows. For each level-$d$ node $u\in \T$, we compute and store $\mathcal{R}(u)$ and $\beta_u=|B_u|$. By construction, $|A_u|\geq 1$ for all $u$. We also store  a pointer at $u$ to the leftmost leaf of the subtree of $\T$ rooted at $u$, and we thread all the leaves of a $d$-level tree so that for a node $u$, $A_u$ can be reported in $O(|A_u|)$ time. Updating these pointers as $\T$ is updated is straightforward. Whenever a new node $u$ of $\T$ is constructed, we query $\mathcal{Z}$ with $\mathcal{R}(u)$ to compute $\beta_u$. Finally, we store $\C$, the set of all active nodes of $\T$, in a red-black tree so that a node can     be inserted or deleted in $O(\log n)$ time. The total size of the data structure is  $O(n\log^{d-1} n)$, and it can be constructed in $O(n\log^d n)$ time.

\paragraph{Update and Enumerate.}
Updating $A$ is straightforward. We update $\T$, query $\mathcal{Z}$ with $\mathcal{R}(u)$, for all newly created $d$-level nodes $u$ in $\T$ to compute $\beta_u$, and update $\C$ to delete all active nodes that are no longer in $\T$ and to insert new active nodes. Since the amortized time to update $\T$ as a point is inserted or deleted is $O(\log^d n)$, the amortized update time of a point in $A$ is $O(\log^{2d} n)$ --- we spend $O(\log^d n)$ time to compute $\beta_u$ for each of $O(\log^d n)$ newly created nodes. If a point $b$ is inserted  (resp. deleted) in $B$, we update $\mathcal{Z}$ and query $\T$ with $\mathcal{B}(b)$. For all canonical nodes $u$ in $\mathcal{N}(b)$, we increment (resp. decrement) $b_u$. If $u$ becomes active (resp. inactive), we insert (resp. delete) $u$ in $\C$ in $O(\log n)$ time. The amortized update time for $b$ is $O(\log^{d+1} n)$.

Finally, to enumerate the pairs in join results, 
%$\res$, 
we traverse the active nodes $\C$ and for each $u\in \C$, we first query $\mathcal{Z}$ with $\mathcal{R}(u)$ to recover $B_u$.
Recall that $B_u$ is reported as a set of $O(\log^d n)$ canonical nodes of $\mathcal{Z}$ whose leaves contain the points of $B_u$.
We simultaneously traverse the leaves of the subtree of $\T$ rooted at $u$ to compute $A_u$ and report $A_u\times B_u$. The traversals can be performed in $O(\log^{d} n)$ maximum delay. Putting everything together, we obtain:

\begin{theorem}
\label{the:rectangle-point}
Let $A, B$ be two sets of points in $\Re^d$, where $d\geq 1$ is a constant, with $|A|+|B|=n$. A data structure of $\O(n)$ size can be built in $\O(n)$ time and updated in $\O(1)$ amortized time, while supporting $\O(1)$-delay enumeration of similarity join under $\ell_{\infty}$ metric.
\end{theorem}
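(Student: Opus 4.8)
The plan is to assemble the components constructed above and to verify together the claimed size, construction-time, update-time, and delay bounds, together with correctness of the enumeration. Concretely, the data structure consists of the augmented dynamic $d$-dimensional range tree $\T$ on $A$ (storing, for each level-$d$ node $u$, the rectangle $\mathcal{R}(u)$ of Lemma~\ref{lem:A}, the counter $\beta_u=|B_u|$, a pointer to the leftmost leaf of the subtree rooted at $u$, and the threading of the level-$d$ leaves), the dynamic range tree $\mathcal{Z}$ on $B$, and the red-black tree storing the set $\C$ of active level-$d$ nodes of $\T$. Since a $d$-dimensional range tree has size $O(n\log^{d-1}n)$ and each per-node augmentation is $O(1)$ words, the total size is $O(n\log^{d-1}n)=\O(n)$; building $\T$ and $\mathcal{Z}$ costs $O(n\log^d n)$, and initializing each $\beta_u$ by one query of $\mathcal{Z}$ with $\mathcal{R}(u)$ plus inserting the active nodes into $\C$ adds $O(n\log^d n)$ more, so the construction time is $\O(n)$.

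For correctness I would invoke Lemma~\ref{lem:A}: $u\in\mathcal{N}(b)$ iff $b\in\mathcal{R}(u)$, so $B_u=B\cap\mathcal{R}(u)$ and $\beta_u$ is exactly the count returned by querying $\mathcal{Z}$ with $\mathcal{R}(u)$ — this is what lets us (re)compute $\beta_u$ and later recover $B_u$ without ever scanning $B$. By construction every pair in $A_u\times B_u$ satisfies $\norm{a-b}_\infty\le 1$, and conversely every pair $(a,b)$ with $\norm{a-b}_\infty\le 1$ lies in $A_u\times B_u$ for exactly one level-$d$ node $u$, namely the unique canonical node of $\mathcal{B}(b)$ whose subtree contains $a$. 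Hence the pairs $(A_u,B_u)$ over active $u$ form an edge-disjoint bi-clique cover of the join graph, so enumerating $A_u\times B_u$ for all $u\in\C$ (note $|A_u|\ge 1$ always, so active is equivalent to $\beta_u\ge1$) is complete and duplicate-free.

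For updates: an insertion/deletion in $A$ updates $\T$ in $O(\log^d n)$ amortized time by partial reconstruction; for each of the $O(\log^d n)$ amortized newly created level-$d$ nodes $u$ we build $\mathcal{R}(u)$ in $O(1)$ time and query $\mathcal{Z}$ with it in $O(\log^d n)$ time to get $\beta_u$, and we insert/delete the affected nodes in $\C$ — $O(\log^{2d} n)$ amortized total. An update in $B$ updates $\mathcal{Z}$ in $O(\log^d n)$ amortized time, queries $\T$ with $\mathcal{B}(b)$ to obtain the $O(\log^d n)$ nodes of $\mathcal{N}(b)$, increments or decrements each $\beta_u$, and moves $u$ into or out of $\C$ in $O(\log n)$ time when its activity status changes — $O(\log^{d+1}n)$ amortized total; both bounds are $\O(1)$. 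For enumeration I would walk $\C$, and for each active $u$ query $\mathcal{Z}$ with $\mathcal{R}(u)$ to obtain $B_u$ as $O(\log^d n)$ canonical nodes of $\mathcal{Z}$, then run a nested traversal over the threaded level-$d$ leaves of $u$'s subtree (for $A_u$) and over the leaves under those canonical nodes (for $B_u$), emitting each pair of $A_u\times B_u$. Every active $u$ contributes at least one pair, the setup per $u$ and the jump between consecutive canonical-node leaf lists cost $O(\log^d n)$, and consecutive pairs inside a leaf list are produced in $O(1)$ time, so the maximum delay — also from the start to the first pair and from the last pair to termination — is $O(\log^d n)=\O(1)$.

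The step I expect to be the main obstacle, and the one worth stating explicitly, is the reason Lemma~\ref{lem:A} is needed at all: we must never materialize the sets $B_u$, because a single $B_u$ can be large while a level-$d$ node $u$ may be repeatedly created and destroyed as $\T$ is rebalanced; keeping only the counters $\beta_u$ (cheaply recomputable by a range query on $\mathcal{Z}$) and recovering $B_u$ lazily at enumeration time is exactly what keeps the update cost polylogarithmic. A secondary point to check carefully is the amortized bookkeeping of partial reconstruction — in particular that the $O(\log^d n)$-per-update bound on the number of freshly created level-$d$ nodes, and hence the $O(\log^{2d}n)$ update bound, survives the charging argument for rebuilt bottom subtrees.
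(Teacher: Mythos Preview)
Your proposal is correct and follows essentially the same approach as the paper: the augmented range tree $\T$ on $A$, the range tree $\mathcal{Z}$ on $B$, the counters $\beta_u$ computed via Lemma~\ref{lem:A}, the red-black tree on $\C$, and the update/enumeration procedures and their bounds all match the paper's construction and analysis. One tiny arithmetic slip: initializing every $\beta_u$ by a separate query of $\mathcal{Z}$ with $\mathcal{R}(u)$ costs $O(n\log^{2d-1}n)$ (there are $O(n\log^{d-1}n)$ level-$d$ nodes), not $O(n\log^d n)$, though this is still $\O(n)$ and the tighter bound is recoverable by instead querying $\T$ with $\mathcal{B}(b)$ for each $b\in B$.
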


\subsection{Similarity join under $\ell_1$ metric}
For $d\leq 2$ it is straightforward to reduce similarity join under $\ell_1$ metric to $\ell_\infty$ metric. For $d=1$, $\ell_1$ metric is obviously equivalent to the $\ell_\infty$ metric. For $d=2$, notice that the $\ell_1$ ball is a diamond, while the $\ell_\infty$ ball is a square. Hence, given an instance of the similarity join under the $\ell_1$ metric we can rotate $A\cup B$ by $45$ degrees to create an equivalent instance of the similarity join problem under the $\ell_\infty$ metric.

Next, we focus on $d\geq 3$. The data structure we proposed in Section~\ref{sec:linfty} for the $\ell_\infty$ norm can be straightforwardly extended to the \emph{rectangle-containment} problem in which for each $b\in B$, $\mathcal{B}(b)$ is an arbitrary axis-aligned hyper-rectangle with center $b$, and the goal is to report all $(a,b)\in A\times B$ such that $a\in \mathcal{B}(b)$. Lemma~\ref{lem:A} can be extended so that $\mathcal{R}(u)$ is a $2d$-dimensional rectangle. Overall, Theorem~\ref{the:rectangle-point} remains the same assuming $\mathcal{B}(b)$ are hyper-rectangles (and not hypercubes).

Given an instance of similarity join under $\ell_1$ metric in $\Re^d$, we next show how to reduce it to $2^d$ $(d+1)$-dimensional rectangle-containment problems.
As above, assume $r= 1$, so our goal is to report all pairs $a=(a_1,\ldots, a_d)\in A$, $b=(b_1,\ldots, b_d)\in B$ such that $\sum_{i=1}^d|a_i-b_i|\leq 1$.

%In the $\ell_1$ metric, $b$ can be mapped to an $\ell_1$ ball with radius $r$, i.e., a ball that contains all points $x\in \Re^d$ such that $\sum_{j=1}^d|b_j-x_j|\leq r$ (please see Figure~\ref{fig:l1ball}), where $b_j, x_j$ are the $j$-th coordinates of points $b, x$, respectively.
%Hence, we need to report a pair $(a,b)\in A\times B$ if and only if $\sum_{j=1}^d|b_j-a_j|\leq r$.

Let $E=\{-1,+1\}^d$ be the set of all $2^d$ vectors in $\Re^d$ with coordinates either $1$ or $-1$.
For each vector $e\in E$, we construct an instance of the rectangle-containment problem.
For each $e=(e_1,\ldots, e_d)\in E$, we map each point $a=(a_1, \ldots, a_d)\in A$ to a point $\bar{a}_e=(a_1, \ldots, a_d, \sum_{i=1}^d e_{i}a_i) \in \Re^{d+1}$.
Let $\bar{A}_e=\{\bar{a}_e\mid a\in A\}$.
For each point $b=(b_1, \ldots, b_d)\in B$, we construct the axis-align rectangle $\bar{b}_e=\prod_{i=1}^{d+1} b_e^{(i)}$
in $\Re^{d+1}$, where $b_e^{(i)}$ is the interval $[b_i, \infty)$ if $e_{i}=1$
and $(-\infty, b_i]$ if $e_{i}=-1$
for each $i=1,\ldots, d$, and $b_e^{(d+1)}=(-\infty, 1+\sum_{i=1}^d e_{i}b_{i}]$. Let $\bar{B}_e=\{\bar{b}_e\mid b\in B\}$. See Figure~\ref{fig:l1transformation}.

\begin{figure*}
    \centering
    \includegraphics[scale=0.7]{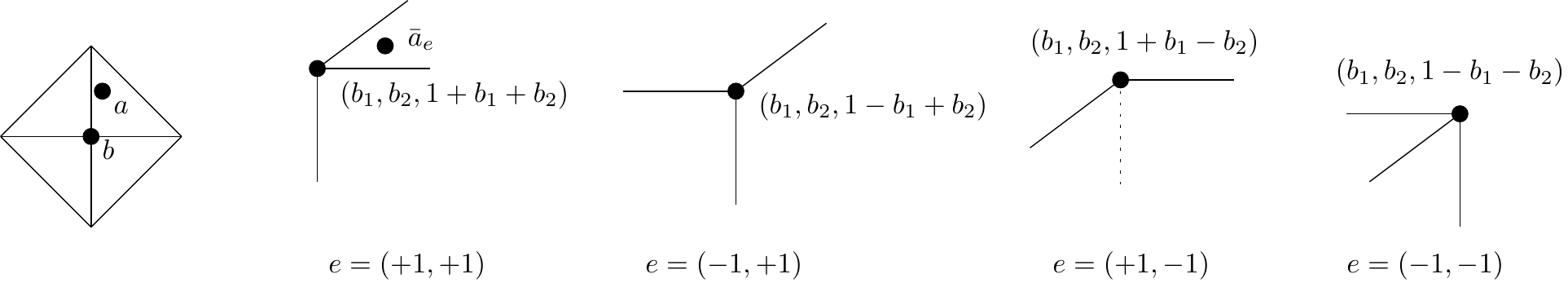}
    \caption{An illustration of mapping each $b$ to rectangles.}
    \label{fig:l1transformation}
\end{figure*}

For each $e \in E$, we construct the dynamic data structure for $\bar{A}_e, \bar{B}_e$.
Whenever $A$ or $B$ is updated, we update all $2^d$ rectangle-containment data structures.
A similarity join enumeration query on $A, B$ is answered by enumerating containment pairs $(\bar{A}_e, \bar{B}_e)$ for $e\in E$. If a pair $(\bar{a}_e, \bar{b}_e)$ is reported, we report $(a,b)$. The update time and delay are $\O(1)$. The correctness of the algorithm follows from the following lemma. Let $\sign(x)=+1$ if $x\geq 0$ and $-1$ otherwise.

\begin{lemma}
\label{lem:ell1}
Let $a=(a_1,\ldots, a_d)\in A$, $b=(b_1,\ldots, b_d)\in B$ be an arbitrary pair of points. Let $e^*=(e^*_1,\ldots, e^*_d)$ where $e^*_i=\sign(a_i-b_i)$ for $1\leq i\leq d$. Then $\bar{a}_e\notin \bar{b}_e$ for all $e\notin E\setminus\{e^*\}$. Furthermore, $\bar{a}_{e^*}\in \bar{b}_{e^*}$ if and only if $\lVert a-b \rVert_1\leq 1$.
\end{lemma}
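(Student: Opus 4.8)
The plan is to peel the $(d{+}1)$-dimensional axis-parallel containment $\bar a_e\in\bar b_e$ into a conjunction of one-dimensional inequalities, rewrite each of them in terms of the signed quantities $e_i(a_i-b_i)$, and then finish with a one-line case analysis at a coordinate where $e$ disagrees with $e^*$. Concretely, for any $e=(e_1,\dots,e_d)\in E$ the membership $\bar a_e\in\bar b_e$ is equivalent to the $d+1$ scalar conditions: (1) $e_i(a_i-b_i)\ge 0$ for every $i\in[1,d]$ --- this is exactly ``$a_i\ge b_i$'' when $e_i=+1$ and ``$a_i\le b_i$'' when $e_i=-1$, i.e.\ $a_i\in b_e^{(i)}$ --- together with (2) $\sum_{i=1}^d e_i(a_i-b_i)\le 1$, which says $\sum_i e_ia_i\in b_e^{(d+1)}=(-\infty,\,1+\sum_i e_ib_i]$. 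So the lemma is entirely about these $d+1$ scalar inequalities.

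The key algebraic fact is $\sign(x)\cdot x=|x|$, valid with the convention $\sign(0)=+1$ used in the statement; hence $e_i^*(a_i-b_i)=|a_i-b_i|\ge 0$ for every $i$. This shows that $e^*$ satisfies every sign constraint in (1) automatically and that $\sum_i e_i^*(a_i-b_i)=\sum_i|a_i-b_i|=\norm{a-b}_1$; plugging this into (2) gives the ``furthermore'' part immediately: $\bar a_{e^*}\in\bar b_{e^*}$ iff $\norm{a-b}_1\le 1$. For the first part, take $e\neq e^*$ and pick a coordinate $i$ with $e_i\ne e_i^*$. If $a_i-b_i>0$ then $e_i^*=+1$, so $e_i=-1$ and $e_i(a_i-b_i)=-(a_i-b_i)<0$; if $a_i-b_i<0$ then $e_i^*=-1$, so $e_i=+1$ and $e_i(a_i-b_i)=a_i-b_i<0$. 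Either way condition (1) fails at coordinate $i$, so $\bar a_e\notin\bar b_e$.

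The single case this argument does not settle --- and what I expect to be the only real obstacle --- is $a_i=b_i$ at the disagreeing coordinate: then the $i$-th constraint in (1) is vacuous, so nothing forces $e_i=e_i^*$ there. I would dispose of it in one of two ways. First, it is essentially harmless: a coordinate with $a_i=b_i$ contributes $0$ to $\norm{a-b}_1$ and to $\sum_i e_i(a_i-b_i)$ for every choice of $e_i$, so any $e$ placing $\bar a_e$ inside $\bar b_e$ merely re-reports the pair already produced via $e^*$, costing at most $2^d$ duplicates per pair (a constant, since $d$ is constant). Second, to make the statement literally tight one breaks ties in the construction --- e.g.\ using the half-open interval $b_e^{(i)}=(-\infty,b_i)$ when $e_i=-1$ while keeping $[b_i,\infty)$ when $e_i=+1$ --- so that $a_i=b_i$ forces $e_i=+1=e_i^*$; the rectangle-containment data structure behind Theorem~\ref{the:rectangle-point} handles half-open rectangles verbatim. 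Everything else is bookkeeping across the $d$ coordinates.
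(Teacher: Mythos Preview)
Your proof is correct and follows essentially the same coordinate-wise decomposition as the paper's own argument: both reduce $\bar a_e\in\bar b_e$ to the $d$ sign constraints $e_i(a_i-b_i)\ge 0$ plus the single sum constraint, then use $e^*_i(a_i-b_i)=|a_i-b_i|$. You are in fact more careful than the paper about the degenerate case $a_i=b_i$ at a coordinate where $e_i\ne e^*_i$; the paper's proof silently skips this case, and your two proposed remedies (bounded duplication by $2^d$, or half-open intervals to break ties) are both sound.
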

\begin{proof}
First, we note that for any $e \in E \setminus \{e^*\}$, there must exist some $i$ such that $e_i \neq e^*_i$. Without loss of generality, assume $e_j = 1$ when $a_j < b_j$. By the definition of $\bar{a}_e, \bar{b}_e$, $a_j\notin [b_j, \infty)$, thus $\bar{a}_e \notin \bar{b}_e$. Next, we show that $\bar{a}_{e^*}\in \bar{b}_{e^*}$ if and only if $\lVert a-b \rVert_1\leq 1$.
On one hand, we assume $\bar{a}_{e^*}\in \bar{b}_{e^*}$. By definition, $\sum_{i=1}^d e^*_i a_i$ lies in the interval associated with $b^{d+1}_{e^*}$, i.e., $\sum_{i=1}^d e^*_i a_i \le 1 + \sum_{i=1}^d e^*_i b_i$, or $\sum_{i=1}^d e^*_i (a_i -b_i) \le 1$. Implied by the fact that $\lVert a-b\rVert_1 = \sum_{i=1}^d e^*_{i}(a_i-b_i)$, we have $\lVert a-b\rVert_1 \le 1$. 
On the other hand, assume $\lVert a-b\rVert_1 \le 1$. Similarly, we have $ \lVert a-b\rVert_1 = \sum_{i=1}^d e^*_{i}(a_i-b_i) \le 1 \Leftrightarrow \sum_{i=1}^d e^*_{i}a_i \le 1 + \sum_{i=1}^d e^*_{i} b_i$, or $\sum_{i=1}^d e^*_{i}a_i \in (-\infty, 1 + \sum_{i=1}^d e^*_{i} b_i]$. Moreover, for any $i \in \{1,\ldots, d\}$, we have: (1) if $e^*_i = 1$, $a_i \ge b_i$, i.e., $a_i \in [b_i, \infty)$; (2) if $e^*_i = -1$, $a_i \le b_i$, i.e., $a_i \in (-\infty, b_i]$. Hence, $\bar{a}_{e^*} \in \bar{b}_{e^*}$. 
%
%It is easy to see that for any other vector $e_{i'}\neq e_{i}$, at least one of the constraints above does not hold. With the same argument, we can show that if $\lVert a-b\rVert_1>r$, then there is no vector $e_i\in E$ such that $\bar{a}_i\in \bar{b}_i$.
\end{proof}

\begin{figure}[h]
    \centering
    \includegraphics[scale=0.4]{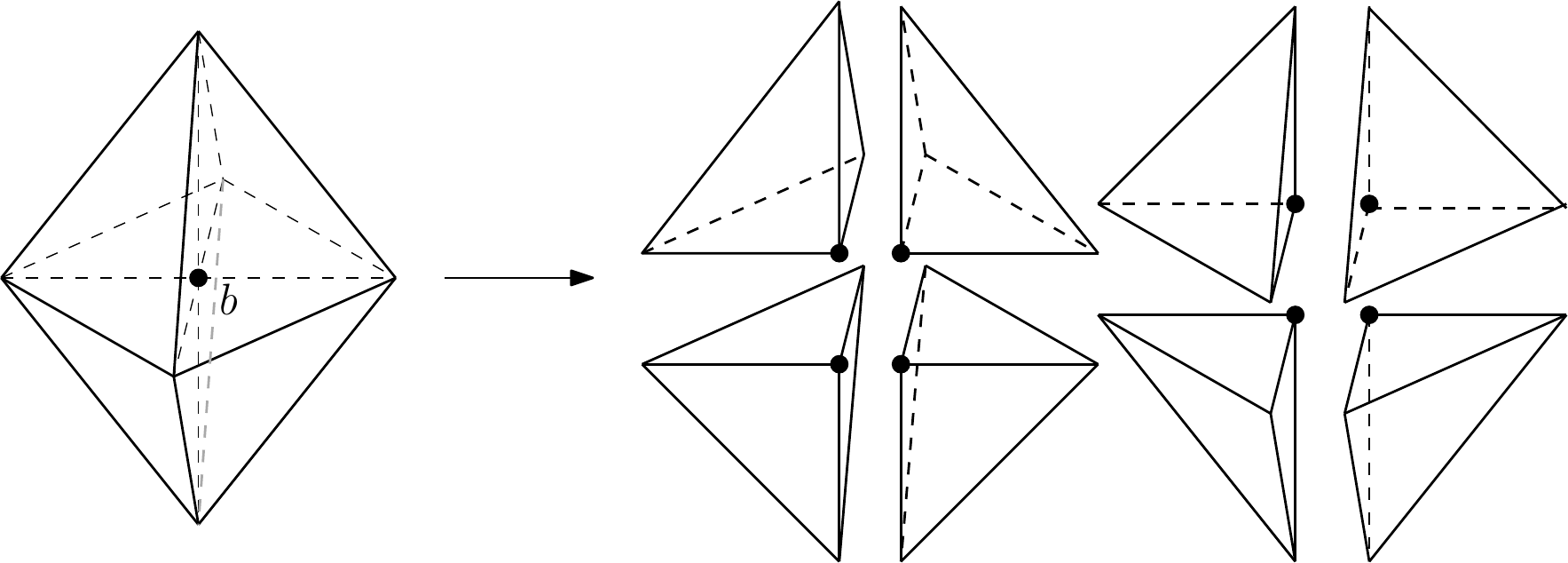}
    \caption{An illustration of $\ell_1$ ball in $\Re^3$. It is decomposed to $2^d=8$ types of simplices.}
    \label{fig:l1ball}
\end{figure}

\paragraph{Remark.}
Roughly speaking, we partition the $\ell_1$-ball centered at $0$ into $2^d$ simplices $\Delta_1,\ldots, \Delta_{2^d}$ (see Figure~\ref{fig:l1ball}) and build a separate data structure for each simplex $\Delta_i$. Namely, let $\mathcal{B}_i=\{b+\Delta_i\mid b\in B\}$ and we report all pairs $(a,b)\in A\times B$ such that $a\in b+\Delta_i$. If $\lVert a-b \rVert_1\leq 1$ then $a$ lies in exactly one simplex $b\in \Delta_i$. We map each simplex to a rectangle in $\Re^{d+1}$ and use the previous data structure.

\medskip Using Theorem~\ref{the:rectangle-point}, we obtain:

%\paragraph{Similarity join under $\ell_1$ metric.}
%Given an arbitrary instance of similarity join under $\ell_1$ metric in $\Re^d$, we show how to reduce it to an instance of similarity join under $\ell_\infty$ metric in $\Re^d$ for $d\leq 2$ and $\Re^{d+1}$ for $d\geq 3$ (Appendix~ \ref{appendix:l1-linfty}).
%Plugging this into Theorem~\ref{the:rectangle-point}, we obtain:

\begin{theorem}
\label{the:pyramid-point}
Let $A, B$ be two sets of points in $\Re^d$, where $d\geq 1$ is a constant, with $|A|+|B|=n$. A data structure of $\O(n)$ size can be built in $\O(n)$ time and updated in $\O(1)$ amortized time, while supporting $\O(1)$-delay enumeration of similarity join under $\ell_{1}$ metric.
\end{theorem}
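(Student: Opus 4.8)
The plan is to combine the reduction described in the remark with Theorem~\ref{the:rectangle-point}, now applied to its rectangle-containment variant rather than just to hypercubes. First I would observe that the statement for $d\le 2$ is already handled in the text: for $d=1$ the $\ell_1$ and $\ell_\infty$ metrics coincide, and for $d=2$ a $45^\circ$ rotation turns $\ell_1$-balls (diamonds) into $\ell_\infty$-balls (axis-parallel squares), so Theorem~\ref{the:rectangle-point} applies verbatim after the rotation. Thus it suffices to treat $d\ge 3$, and there the target is exactly the reduction to $2^d$ instances of the rectangle-containment problem in $\Re^{d+1}$.

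The core argument would proceed as follows. I would first note that the $\ell_\infty$ data structure of Section~\ref{sec:linfty} generalizes to the rectangle-containment problem, where each $\mathcal{B}(b)$ is an arbitrary axis-aligned box centered at $b$: the only change is in Lemma~\ref{lem:A}, whose feasible region $\mathcal{R}(u)$ now becomes a $2d$-dimensional rectangle (one pair of linear constraints per coordinate, on the two ``free'' endpoints of each side of $\mathcal{B}(b)$), still constructible in $O(1)$ time, so the $\O(n)$ space, $\O(1)$ update, $\O(1)$ delay bounds of Theorem~\ref{the:rectangle-point} carry over. Next I would invoke the mapping $e\mapsto(\bar A_e,\bar B_e)$ for $e\in E=\{-1,+1\}^d$ defined before Lemma~\ref{lem:ell1}, lift each $a\in A$ to $\bar a_e=(a_1,\dots,a_d,\sum_i e_ia_i)\in\Re^{d+1}$ and each $b\in B$ to the box $\bar b_e=\prod_{i=1}^d b_e^{(i)}\times(-\infty,1+\sum_i e_ib_i]$, and build one rectangle-containment structure per $e$. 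Correctness of enumerating over all $2^d$ structures and reporting $(a,b)$ whenever $(\bar a_e,\bar b_e)$ is a containment pair is exactly Lemma~\ref{lem:ell1}: for a pair $(a,b)$ the only $e$ that can witness it is $e^*=(\sign(a_i-b_i))_i$, which rules out duplicates, and $\bar a_{e^*}\in\bar b_{e^*}$ iff $\sum_i e^*_i(a_i-b_i)=\|a-b\|_1\le 1$, so each valid pair is reported exactly once and no invalid pair is reported.

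Finally I would account for the complexity. There are $2^d=O(1)$ structures (since $d$ is constant), each of size $\O(n)$ built in $\O(n)$ time; an update to $A$ or $B$ triggers one update per structure, so the amortized update time is $2^d\cdot\O(1)=\O(1)$; and an enumeration query runs the enumeration of each of the $2^d$ structures in sequence, each with $\O(1)$ delay, translating each reported lifted pair $(\bar a_e,\bar b_e)$ back to $(a,b)$ in $O(1)$ time, for an overall $\O(1)$ delay. This yields the claimed data structure for the $\ell_1$ metric and proves Theorem~\ref{the:pyramid-point}.

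I do not expect a serious obstacle here: the main content is bookkeeping — verifying that Lemma~\ref{lem:A} survives the passage from hypercubes to general boxes (the feasible region simply doubles in dimension), and that the case split in Lemma~\ref{lem:ell1} correctly assigns each join pair to a unique sign vector $e^*$. The only mildly delicate point is making sure that points with $a_i=b_i$ for some $i$ are handled consistently by the $\sign$ convention ($\sign(0)=+1$) and by the half-open intervals $[b_i,\infty)$ versus $(-\infty,b_i]$, so that such a pair still lands in exactly one $\bar b_{e^*}$; the definitions chosen in the excerpt already enforce this.
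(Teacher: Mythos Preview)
Your proposal is correct and follows essentially the same route as the paper: handle $d\le 2$ directly, extend the $\ell_\infty$ structure of Section~\ref{sec:linfty} to the rectangle-containment problem (with $\mathcal{R}(u)$ becoming a $2d$-dimensional box), reduce $\ell_1$ similarity join to $2^d$ rectangle-containment instances in $\Re^{d+1}$ via the sign-vector lifting, and invoke Lemma~\ref{lem:ell1} for correctness and uniqueness. The complexity bookkeeping and the care about the $\sign(0)$ convention are exactly what the paper does (or implicitly relies on).
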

\subsection{Similarity join under $\ell_2$ metric}
\label{sec:l2}
%\label{sec:partition}

In this section, we consider the similarity join between two point sets $A$ and $B$ in $\Re^d$ under the $\ell_2$ metric.
%Using the standard lifting transformation technique, we can reduce it to the {\em halfspace-containment} problem and use partition trees for halfspace range searching~\cite{chan2012optimal, matouvsek1992efficient} instead of range trees.

%Given a set of points $A$ and a set of halfspaces $B$, the halfspace-containment problem 
%The overall framework remains the same as under the $\ell_{\infty}$ metric (see Appendix~\ref{Appndx:ell2}). Omitting the details, we conclude the following:

\paragraph{Reduction to halfspace containment.}
We use the {\em lifting transformation}~\cite{de1997computational} to convert 
an instance of the similarity join problem under $\ell_2$ metric to the halfspace-containment problem in $\Re^{d+1}$.
For any two points $a=(a_1,\ldots, a_d)\in A$ and $b=(b_1, \ldots, b_d)\in B$, $\lVert a-b \rVert_2\leq 1$ if and only if 
$(a_1-b_1)^2+\ldots +(a_d-b_d)^2\leq 1$, or $a$ lies in the unit sphere centered at $b$.
The above condition can be rewritten as
\[ a_1^2+b_1^2 +\cdots + a_d^2 + b_d^2 - 2a_1 b_1 - \cdots - 2a_d b_d - 1 \ge 0.\]
We map the point $a$ to a point $a'=(a_1, \dots, a_d, a_1^2 + \cdots + a_d^2)$ in $\Re^{d+1}$ and the point $b$ to a halfspace $b'$ in $\Re^{d+1}$ defined as 
\[b': -2 b_1 z_1 - \cdots - 2b_d z_d +z_{d+1}+b_1^2 +\cdots + b_d^2 - 1\ge 0.\]
Note that $\lVert a-b \rVert_2\leq 1$ if and only if $a'\in b'$.
Set
$A'=\{a'\mid a\in A\}$ and $B'=\{b'\mid b\in B\}$.
Thus, in the following, we study the halfspace-containment problem, where given a set of points $A'$ and a set of halfspaces $B'$ we construct a dynamic data structure that reports all pairs $(a\in A', b\in B')$, such that $a$ belongs in the halfspace $b$, with delay guarantee.

%Set
%$A'=\{a'\mid a\in A\}$ and $B'=\{b'\mid b\in B\}$.

\paragraph{Partition tree.}
A partition tree on a set $P$ of points in $\Re^d$~\cite{chan2012optimal,matouvsek1992efficient, willard1982polygon} is a tree data structure formed by recursively partitioning a set into subsets.
Each point is stored in exactly one leaf and each leaf usually contains a constant number of points.
Each node $u$ of the tree is associated with
a simplex $\Delta_u$ and the subset $P_u=P\cap \Delta_u$; the subtree rooted at $u$ is a partition tree of $P_u$. We assume that the simplices associated with the children of a node $u$ are pairwise disjoint and lie inside $\Delta_u$, as in~\cite{chan2012optimal}.
In general, the degree of a node is allowed to be non-constant.
Given a query simplex $\Delta$, a partition tree finds a set of $O(n^{1-1/d})$ \emph{canonical} nodes whose cells contain the points of $P\cap \Delta$.
Roughly speaking, a node $u$ is a canonical node for $\Delta$ if $\Delta_u\subset \Delta$ and $\Delta_{p(u)}\not\subseteq \Delta$.
A simplex counting (resp. reporting) query can be answered in $O(n^{1-1/d})$ (resp. $O(n^{1-1/d}+k)$) time using a partition tree.
Chan~\cite{chan2012optimal} proposed a randomized algorithm for constructing a linear size partition tree with constant degree, 
that runs in $O(n\log n)$ time and it has $O(n^{1-1/d})$ query time with high probability.

\paragraph{Data structure.}
For simplicity, with slight abuse of notation, let $A$ be a set of points in $\Re^d$ and $B$ a set of halfspaces in $\Re^d$ each lying below the hyperplane bounding it, and our goal is to build a dynamic data structure for halfspace-containment join on $A,B$. The overall structure of the data structure is the same as for rectangle containment described in Section~\ref{sec:linfty}, so we simply highlight the difference.

Instead of constructing a range tree, we construct a dynamic partition tree $\T_A$ for $A$ so that the points of $A$ lying in a halfspace can be represented as the union of $O(n^{1-1/d})$ canonical subsets.
For a halfplane bounding a halfspace $b\in B$, let $\bar{b}$ denote its dual point in $\Re^d$ (see \cite{de1997computational} for the definition of duality transform). Note that a point $a$ lies in $b$ if and only if the dual point $\bar{b}$ lies in the halfspace lying below the hyperplane dual to $a$. Set $\bar{B}=\{\bar{b}\mid b\in B\}$. We construct a multi-level dynamic partition tree on $\bar{B}$, so that for a pair of simplices $\Delta_1$ and $\Delta_2$, it returns the number of halfspaces of $B$ that satisfy the following two conditions: (i) $\Delta_1\subseteq b$ and (ii) $\Delta_2\cap \partial b\neq \emptyset$, where $\partial b$ is the hyperplane boundary defined by the halfspace $b$. This data structure uses $O(n)$ space, can be constructed in $\O(n)$ time, and answers a query in $\O(n^{1-1/d})$ time.

For each node $u \in \T_A$, we issue a counting query to $\T_B$ and get the number of halfspaces in $B$ that have $u$ as a canonical node. Hence, $\T_A$ can be built in $\O(n^{2-1/d})$ time. For a node $u$, $\mu_A(u)$ can be computed in $O(1)$ time by storing $A_u$ at each node $u\in \T_A$. Recall that $\mu_B(u)$ is the number of halfspaces $b$ of $B$ for which $u$ is a canonical node, i.e., $\Delta_u \subseteq b$ and $\Delta_{p(u)}\cap \partial b\neq \emptyset$, where $p(u)$ is the parent of $u$. Using $\T_B$, $\mu_B(u)$ can be computed in $\O(n^{1-1/d})$ time.

\paragraph{Update and enumeration.} The update procedure is the same that in Section~\ref{sec:linfty}, however the query time now on $\T_A$ or $\T_B$ is $\O(n^{1-\frac{1}{d}})$ so the amortized update time is $\O(n^{1-\frac{1}{d}})$.
The enumeration query is also the same as in Section~\ref{sec:linfty} but a reporting query in $\T_B$ takes $\O(n^{1-\frac{1}{d}}+k)$ time (and it has delay at most $\O(n^{1-\frac{1}{d}})$), so the overall delay is $\O(n^{1-\frac{1}{d}})$.

\begin{theorem}
\label{thm:points-halfspaces}
Let $A$ be a set of points and $B$ be a set of half-spaces in $\mathbb{R}^d$ with $|A| + |B| = n$. A data structure of $\O(n)$ size can be built in $
\O(n^{2-\frac{1}{d}})$ time and updated in $\O(n^{1-\frac{1}{d}})$ amortized time while supporting $\O(n^{1-\frac{1}{d}})$-delay enumeration of halfspace-containment query.
\end{theorem}

Using Theorem~\ref{thm:points-halfspaces} and the lifting transformation described at the beginning of this section we conclude with Corollary~\ref{cor:ell2}.

\begin{corollary}
\label{cor:ell2}
Let $A, B$ be two sets of points in $\Re^d$, where $d\geq 1$ is a constant, with $|A|+|B|=n$. A data structure of $\O(n)$ size can be constructed in $
\O(n^{2-\frac{1}{d+1}})$ time and updated in $\O(n^{1-\frac{1}{d+1}})$ amortized time, while supporting $\O(n^{1-\frac{1}{d+1}})$-delay enumeration of similarity join under the $\ell_{2}$ metric.
\end{corollary}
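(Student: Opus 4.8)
The plan is to reduce similarity join under the $\ell_2$ metric in $\Re^d$ to the halfspace-containment join in $\Re^{d+1}$ via the lifting transformation recalled above, and then invoke Theorem~\ref{thm:points-halfspaces} with the dimension parameter set to the constant $d+1$.

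First I would fix the correspondence explicitly. For each $a=(a_1,\ldots,a_d)\in A$ let $a'=(a_1,\ldots,a_d,\,a_1^2+\cdots+a_d^2)\in\Re^{d+1}$, and for each $b=(b_1,\ldots,b_d)\in B$ let $b'$ be the halfspace $-2b_1z_1-\cdots-2b_dz_d+z_{d+1}+b_1^2+\cdots+b_d^2-1\ge 0$. The algebraic identity displayed above gives $\norm{a-b}_2\le 1$ iff $a'\in b'$, so the map $(a,b)\mapsto(a',b')$ is a bijection between the $\ell_2$ join results on $(A,B)$ and the containment pairs on $(A'=\{a'\mid a\in A\},\,B'=\{b'\mid b\in B\})$. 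Two small points need checking: (i) the coefficient of $z_{d+1}$ in every $b'$ is $+1>0$, so each $b'$ lies below its bounding hyperplane, which is exactly the normalization assumed in Theorem~\ref{thm:points-halfspaces}; and (ii) both $a'$ and $b'$ are computable from $a$, resp.\ $b$, in $O(d)=O(1)$ time, so maintaining $A'$ and $B'$ under updates to $A$ and $B$ adds only $O(1)$ per update.

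Next I would plug this into Theorem~\ref{thm:points-halfspaces}. Since $|A'|+|B'|=|A|+|B|=n$ and the ambient dimension $d+1$ is constant, the theorem yields a data structure of $\O(n)$ size, built in $\O(n^{2-\frac{1}{d+1}})$ time, with $\O(n^{1-\frac{1}{d+1}})$ amortized update time and $\O(n^{1-\frac{1}{d+1}})$-delay enumeration of containment pairs. An insertion or deletion of a point in $A$ becomes an insertion or deletion of the corresponding point in $A'$; an insertion or deletion of a point in $B$ becomes an insertion or deletion of the corresponding halfspace $b'$ in $B'$; in either case we compute the lifted object in $O(1)$ time and feed it to the halfspace-containment structure. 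To answer an enumeration query we run the enumeration of Theorem~\ref{thm:points-halfspaces} and output $(a,b)$ for every reported pair $(a',b')$; by the bijection this reports exactly the $\ell_2$ similarity join, with no duplicates, and the delay bound is preserved up to the $O(1)$ cost of inverting the lifting.

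Since every step is a direct translation, I do not expect a genuine obstacle; the only things requiring care are verifying the sign/normalization of the lifted halfspaces so that Theorem~\ref{thm:points-halfspaces} applies verbatim, and observing that the correspondence is a bijection so that correctness and the absence of duplicates transfer. The resulting bounds are precisely those claimed, which establishes Corollary~\ref{cor:ell2}.
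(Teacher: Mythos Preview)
Your proposal is correct and follows exactly the paper's route: apply the lifting transformation to pass from $\ell_2$ similarity in $\Re^d$ to halfspace containment in $\Re^{d+1}$, then invoke Theorem~\ref{thm:points-halfspaces} with the dimension parameter $d+1$. One small slip in your check (i): with the inequality written as $\ge 0$ and coefficient $+1$ on $z_{d+1}$, each $b'$ is actually an \emph{upper} halfspace, not a lower one---but this is a harmless normalization (flip the sign of $z_{d+1}$ or of the inequality), and the paper's own displayed inequality has the same sign typo.
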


\paragraph{Lower bound.} %Given a threshold $r$, the similarity join under $\ell_2$ metric is exactly the {\em sphere-containment} problem. 
We show a lower bound for the similarity join in the pointer-machine model under the $\ell_2$ metric based on the hardness of unit sphere reporting problem.
Let $P$ be a set of $n$ points in $\Re^d$ for $d>3$. The unit-sphere reporting problem asks for a data structure on the points in $P$, such that given any unit-sphere $b$ report all points of $P\cap b$. If the space is $\O(n)$, 
it is not possible to get a data structure for answering unit-sphere reporting queries in $\O(k+1)$ time in the pointer-machine model, where $k$ is the output size for $d\geq 4$
\cite{afshani2012improved}. %While \cite{afshani2012improved} provides a lower bound for halfspace range reporting queries, the same bound holds for sphere reporting since sphere reporting is harder than halfspace reporting.

For any instance of sphere reporting problem, we construct an instance of similarity join over two sets, with $A = \emptyset$, $B = P$, and $r=1$. Given a query unit-sphere of center $q$, we insert point $q$ in $A$, issue an enumeration query, and then remove $q$ from $A$. All results enumerated (if any) are the results of the sphere reporting problem. If there exists a data structure for enumerating similarity join under $\ell_2$ metric using $\O(n)$ space, with $\O(1)$ update time and $\O(1)$ delay, we would break the barrier. 

\begin{theorem}
\label{theorem:lowerboundSphere}
Let $A, B$ be two sets of points in $\mathbb{R}^d$ for $d > 3$, with $|A| + |B| = n$.
%Assuming that there is no $\O(n)$ size data structure with $\O(k+1)$ query time for sphere reporting queries, where $k$ is the output size,
If using $\O(n)$ space, there is no data structure under the pointer-machine model that can be updated in $\O(1)$ time, while supporting $\O(1)$-delay enumeration of similarity join under the $\ell_{2}$ metric.
\end{theorem}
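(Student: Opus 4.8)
The plan is to derive the impossibility from a reduction to the \emph{unit-sphere reporting} problem, whose pointer-machine hardness is established by Afshani et al.~\cite{afshani2012improved}: for $d>3$, no pointer-machine structure that stores $n$ points of $\Re^d$ in $\O(n)$ space can, given a query unit sphere $b$, report $P\cap b$ in $\O(k+1)$ time, where $k=|P\cap b|$. Suppose, for contradiction, that a dynamic pointer-machine data structure for $\ell_2$ similarity join exists that uses $\O(n)$ space, supports updates in $\O(1)$ time, and enumerates the join with $\O(1)$ delay.

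First I would set up the reduction already sketched above, spelling out the cell-probe accounting. Given an instance $P$ of unit-sphere reporting with $|P|=n$, build the hypothetical structure on the similarity-join instance $A=\emptyset$, $B=P$, $r=1$; this occupies $\O(n)$ space. To answer a query $b$ with center $q$, insert $q$ into $A$, run the enumeration procedure, and delete $q$ from $A$. Since $A=\{q\}$, the join equals $\{q\}\times(P\cap b)$, so reading the second component of each enumerated pair returns exactly $P\cap b$; correctness is immediate. The $\O(1)$ delay guarantees that the enumeration visits only $\O(k+1)$ cells, and --- this is where the pointer-machine hypothesis of the theorem is used --- among those cells lies a pointer to each of the $k$ reported points. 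If in addition the insertion and deletion of $q$ each touch only $\O(1)$ cells, then after inserting $q$ the structure is a static pointer-machine structure of size $\O(n)$ answering the unit-sphere query $b$ within $\O(k+1)$ cell probes, contradicting~\cite{afshani2012improved} for $d>3$ and finishing the proof.

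The main obstacle is that the $\O(1)$ update bound is only amortized, so a single insertion of $q$ may touch $\omega(1)$ cells (for instance when it triggers a partial reconstruction of the underlying range/partition tree), breaking the per-query cell-probe count above. I would handle this with a standard batched argument: the hard instance of~\cite{afshani2012improved} may be taken to be a family of $m$ query spheres $b_1,\dots,b_m$ for which every size-$\O(n)$ pointer-machine structure must spend $\omega\!\bigl(\sum_i k_i+m\cdot\polylog n\bigr)$ total time, $k_i=|P\cap b_i|$. Running the $m$ queries in sequence on the hypothetical structure costs $\O(m)$ total update work over the $2m$ insert/delete operations (each amortized $\O(1)$) plus $\sum_i\O(k_i+1)$ work across the enumeration steps, i.e.\ $\O\!\bigl(\sum_i k_i+m\bigr)$ overall --- a contradiction. (Alternatively, one can de-amortize the partial-reconstruction scheme to obtain worst-case $\O(1)$ updates, after which the single-query argument applies verbatim.) The only routine remaining points are to check that $d>3$ is exactly the range in which the Afshani et al.\ bound rules out linear space with output-sensitive query time, and that the $\O(n)=n\polylog n$ slack in the space bound is tolerated by that bound.
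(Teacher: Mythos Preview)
Your proposal is correct and follows essentially the same reduction as the paper: set $A=\emptyset$, $B=P$, $r=1$, and answer a unit-sphere query centered at $q$ by inserting $q$ into $A$, enumerating, and deleting $q$, thereby contradicting the Afshani et al.\ lower bound for $d>3$. Your treatment is in fact more careful than the paper's brief sketch: the paper does not address the amortized-versus-worst-case update issue at all, whereas you explicitly handle it via the batched argument (or de-amortization), which is the right way to make the reduction airtight.
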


%Let assume there exists a data structure for our problem with space $O(n\polylog n)$ that supports $O(\polylog n)$ update and delay and reports all pairs within $l_2$ distance $r$.
%Then, we could use this data structure to solve the sphere reporting problem in $O(k\polylog n)$ time. Indeed, construct our data structure with $B=P$ and $A=\emptyset$. Given a query sphere $\mathcal{B}(p, r)$ we insert point $p$ in $A$ in $O(\polylog n)$ time. Then we run our enumeration query. In the end we remove $p$ from $A$. Our enumeration query will return every new pair $(p,q\in P=B)$ with delay $O(\polylog n)$ so the total query time would be $O(\polylog n + k\polylog n)$.

%The lower bound on range reporting query rules out the possibility of achieving a minimal data structure. However, if the accuracy of results could be sacrificed slightly, we can break the current barrier and obtain a minimal data structure, as shown in the next section.

\section{Approximate Enumeration}
\label{sec:approximate}

In this section we propose a dynamic data structure for answering approximate similarity-join queries under any $\ell_p$ metric. For simplicity, we use the $\ell_2$ norm to illustrate the main idea and assume $\dist(a,b)=||a-b||_2$. Recall that all pairs of $(a,b)\in A\times B$ with $\dist(a,b)\leq r$ must be reported, along with (potentially) some pairs of $(a', b')$ with $\dist(a',b')\leq (1+\eps)r$, but no pair $(a,b)$ with $\dist(a,b)>(1+\eps)r$ is reported.

We will start with the setting where the distance threshold $r$ is not fixed and specified as part of a query, and then move to a simpler scenario where $r$ is fixed.

%We will start with the setting where the distance threshold $r$ is fixed, and then move to the more general scenario where $r$ is not fixed and specified as part of a query.

%\begin{figure}[h]
%    \centering
%    \includegraphics[scale=0.6]{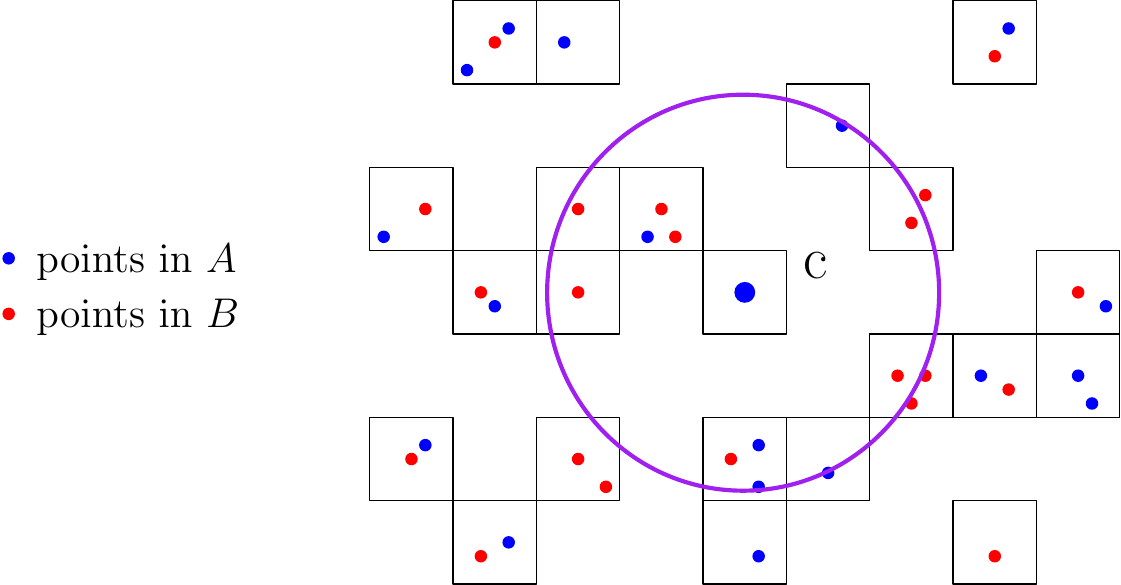}
%    \caption{Point set $A$ is represented by blue points and point set $B$ by red points. The purple circle has radius $1$ and intersects the grid cells within distance at most $1$ from the central cell $c$. We have $A_c\neq\emptyset$ and $m_c=12$ so $c\in \mathcal{C}$.}
%    \label{fig:Grid}
%\end{figure}

\begin{figure}
\centering
  \minipage{0.45\textwidth}
  \centering
  \vspace{3.2em}
  \includegraphics[scale=0.6]{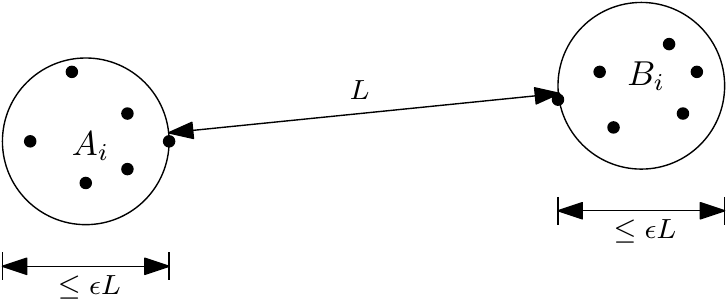}
  \vspace{0em}
    \caption{An example pair of $\eps$-WSPD.}
    \label{fig:wspd}
  \endminipage\hfill 
  \minipage{0.55\textwidth}
  \centering
  \includegraphics[scale=0.5]{Grid.pdf}
  \caption{%Point set $A$ is represented by blue points and point set $B$ by red points. The purple circle has radius $1$ and intersects the grid cells within distance at most $1$ from the central cell $c$. 
  %We have $A_c\neq\emptyset$ and $m_c=12$ so $c\in \mathcal{C}$.
  An example of active cell $c$ in the grid.
  }
  \label{fig:Grid}
\endminipage
\end{figure}

\subsection{Variable Similarity Threshold}
\label{sec:wspd}

We describe the data structure when $r$ is part of the query. In this subsection we assume that the spread of $A\cup B$ is polynomially bounded, i.e., $sp(A\cup B)=\frac{\max_{p,q \in A\cup B}\dist(p,q)}{\min_{p\neq q\in A\cup B}\dist(p,q)}=n^{O(1)}$. We use a quad tree and well-separated pair decomposition (WSPD) for our data structure. We describe them briefly here and refer the reader to~\cite{har2011geometric, samet1989spatial} for details.

\paragraph{Quad tree and WSPD.}
A $d$-dimensional quad tree over a point set $P$ is a tree data structure $\T$ in which each node $u$ is associated with a hypercube $\square_u$ in $\Re^d$, called a \emph{cell}, and each internal node has $2^d$ children.
The root is associated with a hypercube containing $P$. For a node $u$, let $P_u=P\cap \square_u$. A node $u$ is a leaf if $|P_u|\leq 1$.
The tree recursively subdivides the space into $2^d$ congruent hypercubes until a box contains at most one point from $P$.
If $sp(P)=n^{O(1)}$, the height of $\T$ is $O(\log n)$.

Given two point sets $A, B\subset \Re^d$, with $|A|+|B|=n$,
and a parameter $0< \eps < \frac{1}{2}$, a family of pairs $\W=\{(A_1, B_1), (A_2, B_2), \cdots, (A_s, B_s)\}$ is an $\eps$-WSPD if the following conditions hold: (1) for any $i\leq s$, $A_i\subseteq A$, $B_i\subseteq B$ (2) for each pair of points $(a,b) \in A\times B$, there exists a unique pair $(A_j, B_j)\in \W$ such that $a \in A_j$ and $b\in B_j$ (3) for any $i\leq s$, $\max\{\diam(A_i), \diam(B_i)\} \leq \eps \cdot \dist(A_i, B_i)$, where $\diam(X) = \max_{x,y\in X} \dist(x,y)$ and $\dist(X, Y) = \min_{x \in X, y\in Y} \dist(x,y)$ (see Figure~\ref{fig:wspd}). 
As shown in~\cite{har2011geometric, har2006fast} if $sp(A\cup B)=n^{O(1)}$, a quad tree $T$ on $A\cup B$ can be used to construct, in time $O(n\log n + \eps^{-d}n)$, a WSPD $\W$ of size $O(\eps^{-d}n)$ such that
each pair $(A_i, B_i)\in \W$ is associated with pair of cells $(\square_i, \boxplus_i)$ in $\T$ where $A_i=A\cap \square_i$ and $B_i=B\cap \boxplus_i$.
It is also known that for each pair $(A_i, B_i)\in \W$ (i) $\square_i\cap \boxplus_i=\emptyset$, (ii)
$\max\{\diam(\square_i), \diam(\boxplus_i)\}\leq \eps\dist(\square_i, \boxplus_i)$, and each cell appears in $O(\eps^{-d}\log n)$ cells (see Figure~\ref{fig:wspd}).
We will use $\W=\{(\square_1,\boxplus_i), \ldots, (\square_s,\boxplus_s)\}$ to denote the WSPD, with $A_i, B_i$ being implicitly defined from the cells.
%\footnote{In fact, the algorithm described in~\cite{har2011geometric} works even if $sp(A\cup B)$ is unbounded but one has to use a compressed quad tree instead of a regular quad tree.}
Using the techniques in~\cite{callahan1995dealing, fischer2005dynamic}, the quad tree $\T$ and the WSPD $\W$ can be maintained under insertions and deletions of points in $\O(\eps^{-d})$ time.

\paragraph{Data structure.} 
We construct a quad tree $\T$ on $A\cup B$.
For each node $u \in \T$, we store a pointer $A_u$ (and $B_u$) to the leftmost leaf of subtree $\T_u$ that contains a point from $A$ (and $B$). Furthermore, we store sorted lists $L_A$ and $L_B$ of the leaves that contain points from $A$ and $B$, respectively. We use these pointers and lists to report points in $\square_u$ with $O(1)$ delay.
Using $\T$, we can construct a WSPD $\W=\{(\square_1, \boxplus_1),\ldots, (\square_s, \boxplus_s)\}$, $s=O(\eps^{-d})$.
%The tripartite graph $G = (A \cup C \cup B, E_1 \cup E_2)$ is defined as follows. Set $C = \{C_1, \ldots, C_s\}$.
%For each $i\leq s$, $(a,c)\in E_1$ for all $a\in A_i$ and $(c,b)\in E_2$ for all $b\in B$. Again, we do not maintain $E_1$ and $E_2$ explicitly. They are defined implicitly by $\square_i, \boxplus_i$.
For each $i$, let $\Delta_i=\min_{p\in \square_i, q\in\boxplus_i}\dist(p,q)$. We store all pairs $(\square_i, \boxplus_i)$ in a red-black tree $\mathcal{Z}$ using $\Delta_i$ as the key. 
The data structure has $O(\eps^{-d}n)$ size and $O(\eps^{-d}n\log n)$ construction time.
%The size of the data structure is $O(\eps^{-d}n)$ and it can be constructed in $O(\eps^{-d}n\log n)$ time.

\paragraph{Update.} After inserting or deleting an input point, the quad tree $\T$ and $W$ can be updated in $\O(\eps^{-d})$ time, following the standard techniques in ~\cite{callahan1995dealing, fischer2005dynamic}. As there are at most $\O(\eps^{-d})$ pairs changed, we can update the tree $\mathcal{Z}$ in $\O(\eps^{-d})$ time. Furthermore, we note that there are only $O(1)$ changes in the structure of quad tree $\T$ and the height of $\T$ is $O(\log n)$, so we can update all necessary pointers $A_u, B_u$ and sorted lists $L_A, L_B$ in $O(\log n)$ time.

%Note that the $\eps$-WSPD can be maintained under updates in $\O(1)$ time \cite{callahan1995dealing, fischer2005dynamic}.

%Next, we will explain how to update $\C$. Without loss of generality, assume the update is from $A$, say $a$. 

%Let $W_1, W_2$ be the set of pairs deleted from and inserted to $W$ respectively. For each pair $(X_i,Y_i)\in W_1$, if $(X_i, Y_i) \in \C$, we remove it from $\C$. For each pair $(X_i, Y_i)\in W_2$, we first check whether it is active. If yes, we compute $\dist(\square_{X_i}, \square_{Y_i})$ and insert this pair to $\C$.

%It remains to check whether an old pair of $W$ should be updated from $\C$, after the update of $a$. Without loss of generality, assume $a$ is inserted (the deletion can be handled symmetrically). Let $u \in T$ be the new leaf node containing $a$. Consider each node $v$ lying on the path from the root to $u$. For each pair including $v$ in $W$, say $(v,t)$, if $(v,t) \notin \C$, we check whether $(v,t)$ turns active after inserting $a$. If yes, we compute the value of $\dist(\square_v, \square_t)$ and insert it in $\C$. The update time is $\O(\eps^{-d})$; the detailed proof is given in Appendix~\ref{appendix:wspd}.

\paragraph{Enumeration.}
Let $r$ be the threshold parameter specified as part of a query.
We traverse the tree $\mathcal{Z}$ in order and report pairs of cells until we reach a pair $(\square_j, \boxplus_j)$ with $\Delta_j>r$.
For each pair $(\square_i, \boxplus_i)$ reported, we traverse we enumerate $(a,b)\in (A\cap \square_i) \times (B\cap \boxplus_i)$ using the stored pointers and the sorted lists $L_A, L_B$.
The delay guarantee is $O(1)$.
%All enumerated pairs form an $(1+\eps)$-approximation of the join result.
%The proof of correctness is provided in the Appendix~\ref{appendix:wspd}.

%It suffices to show that every pair of points within distance $r$ must be reported, while any pair of points with distance larger than $(1+\eps)r$ will not be reported.
Let $(a, b) \in A \times B$ be a pair with $\dist(a,b)\leq r$. Implied by the definition, there exists a unique pair $(A_i, B_i)\in \W$ such that $a\in A_i$ and $b\in B_i$. Notice that  $\dist(\square_i, \boxplus_i) \leq\dist(a,b) \leq r$. Thus, all results of $A_i \times B_i$ will be reported, including $(a,b)$.  Next, let $(\square_i, \boxplus_i)$ be a pair that is reported by the enumeration procedure in $\mathcal{Z}$, with $\dist(\square_i, \boxplus_i)\leq r$.
For any pair of points $x \in \square_i, y \in \boxplus_i$, we have
%their distance can be bounded as
$\dist(x,y) \le \dist(\square_i,\boxplus_i)+\diam(\square_i)+\diam(\boxplus_i) 
\le  (1+2\cdot \frac{\eps}{2}) \cdot \dist(\square_i,\boxplus_i) \le (1+\eps)r$,
thus $\dist(a,b)\leq (1+\eps)r$ for any pair $(a,b)\in A_i\times B_i$.

\begin{theorem}
Let $A, B$ be two sets of points in $\mathbb{R}^d$ for constant $d$, with $O(n^{O(1)})$ spread and $|A| + |B| = n$.
A data structure of $O(\eps^{-d}n)$ space can be built in $\O(\eps^{-d}n)$ time and updated in $\O(\eps^{-d})$ time, while supporting $\eps$-approximate enumeration for similarity join under any $\ell_p$ metric with $O(1)$ delay, for any query similarity threshold $r$.
\end{theorem}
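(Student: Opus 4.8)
The plan is to instantiate the quad-tree / WSPD / red-black-tree construction described above and then verify the four claimed bounds --- space, build time, update time, delay --- together with the $\eps$-approximation guarantee.

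First I would account for the \emph{space and build time}. Since the spread is polynomially bounded, the quad tree $\T$ on $A\cup B$ has $O(n)$ nodes; the pointers $A_u,B_u$ and the sorted leaf lists $L_A,L_B$ add only $O(n)$ more. The WSPD $\W$ consists of $O(\eps^{-d}n)$ cell pairs $(\square_i,\boxplus_i)$, and $\mathcal{Z}$ stores one record per pair, so the total is $O(\eps^{-d}n)$. For the build time, $\T$ and $\W$ are constructed in $O(n\log n+\eps^{-d}n)$ time by the cited WSPD construction; each key $\Delta_i=\dist(\square_i,\boxplus_i)$ is computed in $O(1)$ time, so all keys cost $O(\eps^{-d}n)$; inserting the $O(\eps^{-d}n)$ pairs into $\mathcal{Z}$ costs $O(\eps^{-d}n\log n)=\O(\eps^{-d}n)$, and threading the leaf lists is $O(n)$. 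Next, for \emph{updates}: an insertion or deletion changes $O(1)$ nodes of $\T$ and only $O(\eps^{-d})$ WSPD pairs, and both $\T$ and $\W$ are maintained in $\O(\eps^{-d})$ amortized time by the cited dynamic-WSPD machinery; each affected WSPD pair triggers one insertion/deletion in $\mathcal{Z}$ at cost $O(\log n)$, for $\O(\eps^{-d})$ total; since $\T$ has height $O(\log n)$ and only $O(1)$ of its nodes change, the pointers $A_u,B_u$ and the lists $L_A,L_B$ are repaired in $O(\log n)$ time. Hence the amortized update time is $\O(\eps^{-d})$.

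For the \emph{enumeration} with query threshold $r$, I would walk the records of $\mathcal{Z}$ in increasing key order, stopping at the first pair with $\Delta_j>r$; for each visited pair $(\square_i,\boxplus_i)$, which satisfies $\dist(\square_i,\boxplus_i)\le r$, report $(A\cap\square_i)\times(B\cap\boxplus_i)$ by walking $L_A$ from $A_u$ and $L_B$ from $B_u$ and stopping when a leaf leaves the cell. Completeness: if $\dist(a,b)\le r$, then $(a,b)$ lies in a unique WSPD pair $(\square_i,\boxplus_i)$ with $\dist(\square_i,\boxplus_i)\le\dist(a,b)\le r$, so this pair is visited and $(a,b)$ reported; uniqueness of the containing WSPD pair also rules out duplicates. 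Soundness: for any visited pair, every $x\in\square_i$, $y\in\boxplus_i$ satisfies $\dist(x,y)\le\dist(\square_i,\boxplus_i)+\diam(\square_i)+\diam(\boxplus_i)\le(1+\eps)\dist(\square_i,\boxplus_i)\le(1+\eps)r$ by the well-separatedness property, so nothing at distance $>(1+\eps)r$ is reported. Since every WSPD pair is nonempty on both sides, each elementary step of the walk --- advancing inside a cell or moving to the next record --- produces a new result, yielding $O(1)$ delay.

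The one point I expect to need care is the $O(1)$ \emph{worst-case} delay: a single in-order successor query in a plain balanced BST costs $\Theta(\log n)$ in the worst case, so I would thread the records of $\mathcal{Z}$ in a doubly linked list in key order (maintained at no extra asymptotic cost under the $\Theta(\log n)$ insertions/deletions), making each "move to the next pair" step $O(1)$; similarly I would keep $L_A,L_B$ in the quad tree's DFS leaf order so that the leaves of any subtree form a contiguous block, which is what makes the per-point work $O(1)$. A secondary remark is needed to cover all $\ell_p$ metrics at once: quad-tree cells are hypercubes, hence have aspect ratio at most $d^{1/p}\le d$ in every $\ell_p$ norm, so a WSPD built with separation parameter $\eps/\Theta(d)$ is an $\eps$-WSPD in $\ell_p$, which for constant $d$ only inflates the hidden constant in $O(\eps^{-d})$; everything else above is metric-agnostic. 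Assembling these observations gives the stated data structure.
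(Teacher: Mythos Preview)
Your proposal is correct and follows essentially the same approach as the paper's own argument: build the quad tree, the bichromatic WSPD, and the key-sorted structure $\mathcal{Z}$, then enumerate by scanning $\mathcal{Z}$ up to threshold $r$ and streaming the product $(A\cap\square_i)\times(B\cap\boxplus_i)$ via the leaf pointers. The space, build, update, and correctness analyses match the paper's almost line for line. Your added remarks---threading $\mathcal{Z}$ into a doubly linked list so the successor step is $O(1)$ worst case, and keeping $L_A,L_B$ in DFS leaf order so each cell's leaves are contiguous---are exactly the implementation details the paper's proof leaves implicit when it asserts $O(1)$ delay, and your observation about rescaling the separation parameter by a constant depending on $d$ to cover all $\ell_p$ norms makes explicit what the paper only states.
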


\subsection{Fixed distance threshold}
\label{sec:grid}

Without loss of generality we assume that $r=1$. We use a grid-based data structure for enumerating similarity join with fixed distance threshold $r$.%We do not make any assumption on the spread of points.
%
%We present a data structure with linear space $O(n)$, fast update time and small enumeration delay. Our data structure does not follow the same techniques as in the exact case. Instead, we construct and maintain efficiently a grid over the points.

\newcommand{\nonEmptyCells}{\mathcal{C}_{NE}}

\paragraph{Data structure.} Let $\mathcal{G}$ be an infinite uniform grid\footnote{When extending it to any $\ell_p$ norm, the size of each grid cell is $\eps/(2d^{1/p})$ and the diameter is $\frac{\epsilon}{2}$.} in $\Re^d$, where the size of each grid cell is $\frac{\eps}{2\sqrt{d}}$ and the diameter is $\frac{\eps}{2}$. For a pair of cells $c,c' \in \G$, define $\dist(c,c') = \min_{p \in c, q \in c'} \dist(p,q)$. 
Each grid cell $c\in \mathcal{G}$ is associated with (1) $A_c=A \cap c$; %as the set of points from $A$ lying in $c$;
(2) $B_c=B \cap c$; %as the set of points from $B$ lying in $c$; 
(3) $m_c=\sum_{c': \dist(c,c')\leq 1} |B_{c'}|$ as the 
number of points in $B$ that lie in a cell $c'$ within distance $1$ from cell $c$.
Let $\nonEmptyCells\subseteq \mathcal{G}$ be the set of all non-empty cells, $\nonEmptyCells=\{c\in \mathcal{G}\mid A_c\cup B_c\neq \emptyset\}$.
A grid cell $c\in \nonEmptyCells$ is {\em active} if and only if $A_c\neq \emptyset$ and $m_c > 0$ (see Figure~\ref{fig:Grid} for an example). Let $\C \subseteq \nonEmptyCells$ be the set of active grid cells (Figure~\ref{fig:Grid}).
%Intuitively, a cell is active if it contains at least one point from $A$ and it is ``close" to some points in $B$.
Notice that a grid cell is stored when there is at least one point from $A$ or $B$ lying inside it, so $|\nonEmptyCells|\leq n$. Finally, we build a balanced search tree on $\C$ so that whether a cell $c$ is stored in $\C$ can be answered in $O(\log n)$ time. Similarly, we build another balanced search tree to store the set of non-empty cells $\nonEmptyCells$.

\paragraph{Update.}
Assume point $a \in A$ is inserted into cell $c\in \mathcal{G}$. If $c$ is already in $\nonEmptyCells$, simply add $a$ to $A_c$. Otherwise, we add $c$ to $\nonEmptyCells$ with $A_c=\{a\}$ and update $m_c$ as follows. We visit each cell $c' \in \nonEmptyCells$ with $\dist(c,c') \le 1$, and add $|B_{c'}|$ to $m_c$. A point of $A$ is deleted in a similar manner.
Assume point $b \in B$ is inserted into cell $c\in \mathcal{G}$. If $c \notin \nonEmptyCells$, we add it to $\nonEmptyCells$. In any case, we first insert $b$ into $B_c$ and for every cell $c'\in \nonEmptyCells$ with $\dist(c,c') \le 1$, we increase $m_{c'}$ by $1$ and add $c'$ to $\C$ if $c'$ turns from inactive to active.
A point from $B$ is deleted in a similar manner. %Symmetrically, assume point $b \in B$ is deleted from cell $c$. We first delete $b$ from $B_c$ and for every cell $c'\in C$ within distance $1$ from $c$, we decrease $m_{c'}$ by $1$ and delete $c'$ to ${\mathcal{C}}$ if $c'$ turns from active to inactive. Furthermore, if both $A_c$ and $B_c$ become empty, we remove $c$ from $C$.
As there are $O(\eps^{-d})$ cells within distance $1$ from $c$, this procedure takes $\O(\eps^{-d})$ time.

%We continue by updating a point $b\in B$. Assume that we insert $b\in B$ in a cell $c$ (if $c\notin C$, we create $c$ and add it in $C$). We insert $b$ in $B_c$. Then we visit all cells $c'\in C$ within distance $1$ and we update $m_{c'}\leftarrow m_{c'}+1$, i.e., we increase their counter by one. If $c'\notin {\mathcal{C}}$ and $A_c\neq \emptyset$ then we insert $c'$ in ${\mathcal{C}}$. Finally, assume that we remove a point $b\in B$ from a cell $c$. We remove $b$ from $B_c$ and again, we visit all cells within distance $1$. For each such cell $c'$ we update $m_{c'}\leftarrow m_{c'}-1$. If $c'\in {\mathcal{C}}$ and $m_c=0$ then we remove $c'$ from ${\mathcal{C}}$. In the end, if $A_c=B_c=\emptyset$ we remove $c$ from $C$. We need $O(\eps^{-d}\log n)$ to insert or remove a point $b$.

\paragraph{Enumeration.} For each active cell $c\in \C$, we visit each cell $c'\in \nonEmptyCells$ within distance $1$. If $B_{c'}\neq \emptyset$, we report all pairs of points in $A_c \times B_{c'}$.  It is obvious that each pair of points is enumerated at most once. For an active cell $c$, there must exists a pair $(a\in A_c, b\in B_{c'})$ for some cell $c' \in \nonEmptyCells$ such that $\dist(a,b) \leq \dist(c,c')+\diam(c)+\diam(c')\leq 1+\eps$. So it takes at most $O(\eps^{-d}\log n)$ time before finding at least one result for $c$, thus the delay is $O(\eps^{-d}\log n)$.  Furthermore, consider every pair of points $a,b$ with $\dist(a,b)\leq 1$. Assume $a\in c$ and $b\in c'$. By definition, $c$ must be an active grid cell. Thus, $(a,b)$ will definitely be enumerated in this procedure, thus guaranteeing the correctness of $\eps$-enumeration.

\begin{theorem}
Let $A, B$ be two sets of points in $\mathbb{R}^d$ for some constant $d$, with $|A| + |B| = n$.
A data structure of $O(n)$ size can be constructed in $O(n\eps^{-d}\log n)$ time and updated in $O(\eps^{-d}\log n)$ time, while supporting $\eps$-approximate enumeration of similarity join under any $\ell_p$ metric with $O(\eps^{-d}\log n)$ delay.
%.such that all pairs $a\in A$, $b\in B$ where $\distEucl{a}{b}\leq r$ along with some pairs $p\in A$, $q\in B$ where $\distEucl{p}{q}\leq (1+\eps)r$ will be reported with $O(\log n +\eps^{1-d}\log m)$ delay.
%The data structure supports point insertion/deletion in $A$ or in $B$ with $O( \eps^{-2d} \log n (\log n+\eps^{1-d})\eps^{1-d}\log n)$ update time.
\end{theorem}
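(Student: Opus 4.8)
The plan is to verify the three stated complexity bounds — $O(n)$ space, $O(n\eps^{-d}\log n)$ construction time, and $O(\eps^{-d}\log n)$ update time and delay — by auditing the data structure from the preceding paragraphs against each claim. First I would establish \textbf{space}: the grid $\mathcal{G}$ is infinite but only non-empty cells are materialized, and since every cell in $\nonEmptyCells$ contains at least one of the $n$ input points, $|\nonEmptyCells| \le n$. Each such cell stores $A_c$, $B_c$, and the scalar $m_c$; summing $|A_c|+|B_c|$ over all cells is $O(n)$, and the two balanced search trees over $\C$ and $\nonEmptyCells$ add $O(n)$ more. Hence $O(n)$ total. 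For \textbf{construction time}, inserting the $n$ points one at a time invokes the update procedure $n$ times, and since each update costs $O(\eps^{-d}\log n)$ (established next), the bound $O(n\eps^{-d}\log n)$ follows immediately; I would just note this rather than building from scratch.

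Next I would handle \textbf{update time}, which is the technical heart. When a point $a\in A$ (or $b\in B$) is inserted into cell $c$, the procedure touches only cells $c'$ with $\dist(c,c')\le 1$. The key geometric fact is that the number of such cells is $O(\eps^{-d})$: each cell is a hypercube of side $\tfrac{\eps}{2\sqrt d}$, so cells within distance $1$ of $c$ all lie inside a hypercube of side $O(1)$ centered at $c$, which contains $O\bigl((1/(\eps/\sqrt d))^d\bigr) = O((\sqrt d/\eps)^d) = O(\eps^{-d})$ cells for constant $d$. For each of these $O(\eps^{-d})$ cells we perform $O(\log n)$ work: a lookup in the balanced search tree on $\nonEmptyCells$ to test membership and retrieve $|B_{c'}|$, plus possibly an insertion/deletion in the tree on $\C$ when a cell flips between active and inactive. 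Adding $a$ to $A_c$ (or $b$ to $B_c$), creating a new cell entry, and recomputing $m_c$ from scratch when $c$ is newly non-empty (one pass over the $O(\eps^{-d})$ neighbors) are all absorbed in the same $O(\eps^{-d}\log n)$ budget. Deletions are symmetric: we decrement the relevant $m_{c'}$ and remove $c'$ from $\C$ if it becomes inactive, and remove $c$ from $\nonEmptyCells$ if it becomes empty.

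For \textbf{delay during enumeration}, I would reuse the argument already sketched in the Enumeration paragraph: iterating over $\C$, for each active $c$ we scan its $O(\eps^{-d})$ neighbors $c'$ in $\nonEmptyCells$; because $c$ is active there is some neighbor $c'$ with $B_{c'}\neq\emptyset$ and, since $\diam(c),\diam(c')\le\eps/2$ and $\dist(c,c')\le 1$, the triangle inequality gives a valid pair $(a,b)\in A_c\times B_{c'}$ with $\dist(a,b)\le 1+\eps$ to report. Thus at most $O(\eps^{-d}\log n)$ time elapses — $O(\eps^{-d})$ neighbor scans, each with an $O(\log n)$ tree lookup — before the first output for $c$ (and the same between the last output of one active cell and moving on to the next active cell, found via the sorted tree on $\C$). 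Correctness of $\eps$-approximation is the remaining piece: every pair with $\dist(a,b)\le 1$ has $a,b$ in cells $c,c'$ with $\dist(c,c')\le 1$, so $c$ is active and $(a,b)$ is enumerated; and nothing farther than $1+\eps$ is ever reported by the triangle-inequality bound above. The pointers $A_c,B_c$ to leftmost occupied leaves and the sorted lists are not strictly needed here since $A_c,B_c$ are stored directly per cell, so reporting $A_c\times B_{c'}$ has $O(1)$ delay once the pair is found.

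The main obstacle is marshalling the bookkeeping for $m_c$ correctly under updates so that every affected cell's counter and active-status are maintained without a full rescan: the subtlety is that when a \emph{new} cell $c$ appears (say via an insertion into $A$), $m_c$ must be initialized by one pass over existing non-empty neighbors, whereas when $B$ changes we instead update the $m_{c'}$ of all non-empty neighbors $c'$ — getting these two cases' directions right, and ensuring a cell is inserted into or removed from $\C$ exactly when its active predicate $(A_c\neq\emptyset \wedge m_c>0)$ toggles, is where care is required. But none of this exceeds the $O(\eps^{-d}\log n)$ budget, so the theorem follows.
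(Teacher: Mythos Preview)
Your proposal is correct and follows essentially the same approach as the paper: you audit the grid-based data structure of Section~\ref{sec:grid}, use the $O(\eps^{-d})$ neighbor-cell count plus $O(\log n)$ tree lookups to derive the update and delay bounds, and invoke the triangle inequality for correctness, exactly as the paper does. The only stray remark is the aside about ``pointers to leftmost occupied leaves and sorted lists,'' which belongs to the WSPD data structure of Section~\ref{sec:wspd} rather than the grid construction here, but you correctly note it is irrelevant and this does not affect the argument.
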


Note that if for each active cell $c\in \C$, we store the cells within distance $1$ that contain at least a point from $B$, i.e., $\{c'\in C\mid \dist(c,c')\leq 1, B_c\neq \emptyset\}$, then the delay can be further reduced to $O(1)$ but the space becomes $O(\eps^{-d}n)$.

%We can extend (see Appendix~\ref{appndx:approxtriangle}) the grid-based data structure above to construct a dynamic data structure to enumerate all triplets $a\in A, b\in B, s \in S$ for $A, B, S\subset \Re^d$ within distance $r$. 
%The main difference is that for each grid $c$ we store and maintain the counter
%$m_c=|\{(b,s)\in B\times S\mid \exists c_1, c_2 \in \mathcal{G} \text{ s.t. } b\in c_1, s\in c_2, \dist(c_1,c_2)\leq 1, \dist(c_1,c)\leq 1, \dist(c_2,c)\leq 1\}|$, i.e., the number of pairs $(b,s)$ whose cells along with cell $c$ are within distance $1$.

%\begin{theorem}
%Let $A, B, S$ be three sets of points in $\mathbb{R}^d$, with $|A| + |B| +|S| = n$.
%A data structure of $O(n)$ space can be constructed in $O(n\eps^{-2d}\log n)$ time and updated in $O(\eps^{-2d}\log n)$ time, while supporting $\eps$-approximate enumeration of similarity triangle join queries under any $\ell_p$ metric with $O(\eps^{-2d}\log n)$ delay.
%\end{theorem}
%\input{wspd}
\section{Similarity Join in High Dimensions}
\label{sec:highd}

So far, we have treated the dimension $d$ as a constant.
In this section we describe a data structure for approximate similarity join using the \emph{locality sensitive hashing} (LSH) technique, so that the dependency on $d$ is a small polynomial. For simplicity, we assume that $r$ is fixed, however our results can be extended to the case in which $r$ is part of the enumeration query.
%Due to lack of space, we only describe the high level ideas of our data structure. The full description can be found in Appendix~\ref{sec:lsh}.

For $\eps > 0$, $0<p_2<p_1\leq 1$, a family $\H$ of hash functions is $(r, (1+\eps)r, p_1, p_2)$-sensitive, if for any uniformly chosen hash function $h \in H$ and any two points $x,y$: (\romannumeral 1) $\Pr[h(x) = h(y)] \ge p_1$ if
$\dist(x, y) \le r$; and (\romannumeral 2) $\Pr[h(x) = h(y)] \le p_2$ if $\dist(x, y) \ge
(1+\eps)r$. The quality of $\H$ is measured by $\rho= \frac{\ln p_1}{\ln p_2} <1$,
which is upper bounded by a number that depends only on $\eps$; and $\rho = \frac{1}{1+\eps}$ for many common distance functions~\cite{gionis1999similarity,datar2004locality,har2012approximate}. For $\ell_2$ the best result is $\rho\leq \frac{1}{(1+\eps)^2}+o(1)$%, by Andoni and Indyk
~\cite{andoni2006near}. %In a standard hash family $H$, %$p_1$ and $p_2$ are both constants, but 
%By concatenating multiple hash functions independently chosen from $H$, $p_1$ and $p_2$ can be made arbitrarily small, while keeping $\rho$ fixed; see~\cite{}.

The essence of LSH is to hash ``similar'' points into the same buckets with high probability.
A simple approach based on LSH is to (\romannumeral 1) hash points into buckets; (\romannumeral 2) probe each bucket and check for each pair of points $(a,b) \in A \times B$ inside the same bucket whether $\dist(a,b) \le r$; and (\romannumeral 3) report $(a,b)$ if the inequalities holds. 
However, two challenges arise for enumeration. First, without any knowledge of false positive results inside each bucket, checking every pair of points could lead to a huge delay. Our key insight is that after checking a specific number of pairs of points in one bucket (this number will be determined later), we can safely skip the bucket, since any pair of result missed in this bucket will be found in another one with high probability.
Second, one pair of points may collide under multiple hash functions, so an additional step is necessary in the enumeration to remove duplicates. If we wish to keep the size of data structure to be near-linear and if we are not allowed to store the reported pairs (so that the size remains near linear), detecting duplicates requires some care.

%As a warm-up exercise to gain intuition, in Appendix~\ref{appendix:uniform} we present a relatively easy special case in which input points as well as points inserted are chosen from the universal domain uniformly. In the following, we focus on the general case without any assumption on the input distribution.
Our data structure and algorithm use a parameter $M$, whose value will be determined later. Since we do not define new hash functions, our results hold for any metric for which LSH works, in particular for Hamming, $\ell_2, \ell_1$ metrics.

\newcommand{\buckets}{\Xi}

\paragraph{Data Structure.} We fix an LSH family $\H$. Let $\rho$ be its quality parameter. We randomly choose $\tau = O(n^{\rho})$ hash functions. Let $\buckets$ be the set of buckets over all hash functions, each corresponding to one possible value in the range of hash functions. We maintain some extra statistics for buckets in $\buckets$. For a bucket $\square$, let $A_\square = A \cap \square$ and $B_\square = B \cap \square$.
We choose two arbitrary subsets $\AM, \BM$
of $A_\square, B_\square$, respectively, of $M$ points each. We choose $M=O(n^{\rho})$.
%(the value will be fixed later).
For each point $a\in \AM$, we maintain a counter $\beta_a=|\{b\in \BM\mid \dist(a,b)\leq 2(1+\eps)r\}|$, i.e., the number of points in $\BM$ with distance at most $2(1+\eps)r$ from $a$.
We store $\AM$ in an increasing order of their $\beta$ values. If there exists $a\in \AM$ with $\beta_a>0$, we call $\square$ {\em active} and store an arbitrary pair $(a,b)\in \AM \times \BM$ with $\dist(a,b)\leq 2(1+\eps)r$ as its \emph{representative} pair, denoted as $(a_\square, b_\square)$. 
Let $\C$ denote the set of active buckets.
%of bucket $\square$, $(a_\square, b_\square)=(a,b)$.
%We define procedure {\sc Activate} as follows: (\romannnumeral 1) pick arbitrary $M$ points  from $A_\square, B_\square$ respectively, as $\AM, \BM$; (\romannumeral 2) check whether $\dist(a,b) \le 2(1+\eps)r$ for each pair $(a,b) \in \AM \times \BM$; (3) If such a pair $(a_\square, b_\square)$ exists, $\square$ is denoted as {\em active} and $(a_\square, b_\square)$ is stored with $\square$ as its {\em representative}.
%We invoke {\em Activate} for all non-empty buckets of $C$ to initialize the data structure and maintain {\em active} buckets in $\C \subseteq C$.
To ensure high-probability guarantee, %(as shown later), 
we maintain $O(\log n)$ copies of this data structure.
%, each independently choosing $\tau=O(n^{\rho}\log n)$ hash functions from $\H$.

\medskip
Before diving into the details of update and enumeration, we give some intuition about active buckets. 
Given a set $P$ of points and a distance threshold $r$, let $\overline{\mathcal{B}}(q, P, r) = \{p \in P\mid \dist(p,q) >r\}$. 
For any pair of points $(a,b) \in A \times B$ and a hashing bucket $\square$, we refer $\square$ as the {\em proxy bucket} for $(a,b)$ if (\romannumeral 1) $a \in A_\square, b \in B_\square$; (\romannumeral 2) $|\bar{\mathcal{B}}(a, A_\square \cup B_\square, (1+\eps)r)| \le M$. A crucial property of proxy bucket is captured by Lemma~\ref{lem:proxy-2:Main} (the proof will be given later),  which implies that it is safe (with high probability) to skip a bucket after we have seen up to $M^2$ far away pairs of points inside, since the truly similar pairs of points inside it will be reported from other buckets. In this way, we only need to enumerate join results from active buckets. 

\begin{lemma}
\label{lem:proxy-2:Main}
   For any bucket $\square$, if there exist $M$ points from $A_\square$ and $B_\square$ each, such that none of the $M^2$ pairs has its distance within $2(1+\eps)r$, $\square$ is not a proxy bucket for any pair $(a, b) \in A_\square \times B_\square$ with $\dist(a,b) \le r$. 
\end{lemma}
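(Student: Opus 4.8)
The plan is to argue by contradiction, with the triangle inequality doing the real work. Suppose $\square$ satisfies the hypothesis — there are $a_1,\ldots,a_M\in A_\square$ and $b_1,\ldots,b_M\in B_\square$ with $\dist(a_i,b_j)>2(1+\eps)r$ for all $i,j$ — yet, toward a contradiction, $\square$ is a proxy bucket for some $(a,b)\in A_\square\times B_\square$ with $\dist(a,b)\le r$. Condition (i) of the proxy-bucket definition holds by hypothesis, so condition (ii) holds as well: $|\bar{\mathcal{B}}(a,A_\square\cup B_\square,(1+\eps)r)|\le M$, i.e.\ at most $M$ points of the bucket lie at distance more than $(1+\eps)r$ from $a$. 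The goal is to show that in fact more than $M$ of them do.

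The key step, anchored at $a$, is that for every pair $(a_i,b_j)$ we cannot have both $\dist(a,a_i)\le(1+\eps)r$ and $\dist(a,b_j)\le(1+\eps)r$ simultaneously, since that would give $\dist(a_i,b_j)\le\dist(a,a_i)+\dist(a,b_j)\le 2(1+\eps)r$, contradicting the separation of the witnesses. Writing $F_A=\{i:\dist(a,a_i)>(1+\eps)r\}$ and $F_B=\{j:\dist(a,b_j)>(1+\eps)r\}$, this says every pair $(i,j)$ has $i\in F_A$ or $j\in F_B$; a one-line dichotomy — if some $i_0\notin F_A$ then $F_B=\{1,\ldots,M\}$, and otherwise $F_A=\{1,\ldots,M\}$ — then forces all $M$ witnesses on one of the two sides into $\bar{\mathcal{B}}(a,A_\square\cup B_\square,(1+\eps)r)$, so this set already has size at least $M$.

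It remains to upgrade this to the strict bound $>M$ that contradicts condition (ii), and this is the one point that needs care. Here I would use $\dist(a,b)\le r$: the point $a$ itself (at distance $0$) and the partner $b$ (at distance $\le r\le(1+\eps)r$) both belong to $A_\square\cup B_\square$, both lie outside $\bar{\mathcal{B}}(a,\cdot,(1+\eps)r)$, and both are distinct from the $M$ far witnesses exhibited above; a short case analysis — anchoring the triangle inequality also at $b$ and combining the two dichotomies — should then exhibit an $(M+1)$-st point of $A_\square\cup B_\square$ far from $a$, yielding the contradiction. I expect everything except this final accounting to be routine: the triangle inequality and the all-or-nothing combinatorics are immediate, and this lemma is exactly what lets the enumeration algorithm safely discard a bucket once it has examined $M^2$ mutually far pairs, since such a bucket is provably not the proxy bucket through which the companion statements recover each similar pair with high probability.
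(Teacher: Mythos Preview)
Your core argument --- for every $(i,j)$ at least one of $a_i,b_j$ lies in $\bar{\mathcal{B}}(a,\cdot,(1+\eps)r)$, hence one entire witness set does --- is exactly the paper's proof, just organized more uniformly (the paper splits on whether $a\in A'$ or $a\notin A'$, but the triangle-inequality step is identical). So the main line is correct and matches the paper.

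You are also right to flag the boundary: both your argument and the paper's deliver only $|\bar{\mathcal{B}}(a,\cdot)|\ge M$, whereas negating condition~(ii) as stated requires $>M$. The paper simply asserts ``$\square$ is not a proxy bucket'' at that point and moves on. Your proposed fix, however, does not close the gap. Anchoring the triangle inequality at $b$ gives a second dichotomy about which witnesses are far from $b$, but that does not manufacture a new point far from $a$: if, say, all $a_i$ are far from $b$, the bound $\dist(a,x)\ge\dist(b,x)-\dist(a,b)>(1+\eps)r-r=\eps r$ is too weak. In fact the strict inequality is unobtainable: take $M=1$, $A_\square=\{a_1\}$, $B_\square=\{b_1,b\}$ with $\dist(a_1,b_1)>2(1+\eps)r$ and $\dist(a_1,b)\le r$; then $\bar{\mathcal{B}}(a_1,A_\square\cup B_\square,(1+\eps)r)=\{b_1\}$ has size exactly $M$, so $\square$ \emph{is} a proxy bucket for $(a_1,b)$ under the literal definition. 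The lemma is off by one. Since $M=O(n^{\rho})$ is only fixed up to constants this is immaterial for the algorithm, and the paper treats it accordingly; but your ``short case analysis'' cannot succeed as written.
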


\paragraph{Update.} When a point is inserted, say $a\in A$, we visit every bucket $\square$ into which $a$ is hashed and insert $a$ to $A_\square$. If $|\AM|\geq M$, we do nothing. If $|\AM|<M$, we insert $a$ in $\AM$ and compute its counter $\beta_a$ by visiting all points in $\BM$. If $\beta_a>0$ and $\square\notin \C$, we add $\square$ to $\C$ and store the representative pair of $\square$ defined as $(a,b)$, where $b\in \BM$ is an arbitrary point of $\BM$ with $\dist(a,b)\leq 2(1+\eps)r$. Notice that there always exists such a point $b$ because $\beta_a>0$.
When a point is deleted, say $a\in A$, we visit every bucket $\square$ into which $a$ is hashed and delete $a$ from $A_\square$.
If $a\in \AM$, we delete it from $\AM$ and insert an arbitrary point (if any) from  $A_\square\setminus \AM$ into $\AM$.
If $a = a_\square$, i.e., $a$ participates in the representative pair of $\square$, we find a new representative pair by considering an arbitrary point $a'\in \AM$ with $\beta_{a'}>0$. If no such point exists, we remove $\square$ from $\C$.
%If $\square \in \C$ and $a$ participates in its representative pair, we invoke {\sc Activate} on $\square$.
The insertion or deletion of a point $b \in B$ is similar. 
After performing $n/2$ updates, we reconstruct the entire data structure from scratch.

\paragraph{Enumeration.} The high-level idea is to enumerate the representative pair from every active bucket and recompute new representative pairs for it. Assume a representative pair $(a,b)$ is found in a bucket $\square \in \C$. Next, we enumerate all pairs that involve the point $a$.

For any bucket $\square'$ such that $a
\in A_{\square'}$, let $X(\square', a)\subseteq B_\square$ be the set of \emph{marked} points of $B_\square$ that the enumeration procedure has already reported their pairs with $a$. For example, if $(a,b)$ is reported and both $a,b$ lie in a set of buckets $C$, then $b\in X(\square', a)$ for each $\square'\in C$.

Let $\C(a) \subseteq \C$ be the set of active buckets containing $a$.
We visit every bucket $\square \in \C(a)$, and check the distances between $a$ and points in $B_\square\setminus X(\square, a)$. Each time a pair $(a,b)$ with $\dist(a,b) \le 2(1+\eps)r$ is found, we report it and invoke a de-duplication step on $(a,b)$ to make sure that we will not report $(a,b)$ again. Details of the de-duplication procedure will be given later. When more than $M$ points from $B_\square$ have been checked without finding a pair with distance less than $2(1+\eps)r$ (or if all all points in $B_\square$ have been considered), %$\bar{\mathcal{B}}(a, B_\square, 2(1+\eps)r)$ have been checked,
we remove\footnote{In the enumeration, ``remove'' means ``conceptually mark'' instead of changing the data structure itself.} $a$ from $A_\square$, remove $\square$ from $\C(a)$, and skip this bucket.
If $a\in \AM$ \footnote{For simplicity, at the beginning of the enumeration procedure we can construct a copy of $\AM$ of each bucket $\square$ so that the original points in $\AM$ remain the same when the next enumeration query is executed. This does not affect the asymptotic complexity of the delay guarantee.} we remove $a$ from $\AM$ and we insert another point from $A_\square\setminus \AM$ that we have not visited before so that the next representative pair of $\square$ (if any) does not contain any point from $A$ that we have already visited in the current numeration phase.
%We also update $\AM$ accordingly so that the next representative pair of $\square$ (if any) is not set to $a$ in the current numeration phase.
Once all buckets in $\C(a)$ have been visited, we can just pick an arbitrary active bucket in $\C$ with its representative pair $(a',b')$ (it will always be the case that $a' \neq a$), and start the enumeration for $a'$.  

Finally, we avoid reporting a pair more than once, as follows. Once a pair $(a,b)$ is enumerated, we go over each bucket $\square$ into which both $a,b$ are hashed, and mark $b$ with $X(\square, a)$ to avoid further repeated enumeration. Moreover, if $(a,b)$ is the representative pair of $\square$, we check at most $M$ points from $B_\square$, whether there exists $b'\in B_\square$ such that $\dist(a,b')\leq 2(1+\eps)r$.
%we pick arbitrary $M$ points from $B_\square - \{b\}$ as $\BM$ and check whether $\dist(a,b') \le 2(1+\eps)r$ for each $b' \in \BM$.
If such a pair exists, we store it as the new representative pair (with respect to $a$) for $\square$. Otherwise, we remove $a$ from $A_\square$, remove $\square$ from $\C(a)$, and if $a\in \AM$ update $\AM$ accordingly.
%and invoke {\sc Activate} for $\square$. 

\paragraph{Correctness analysis.} The de-duplication procedure guarantees that each pair of points is enumerated at most once. It remains to show that $(1+2\eps)$-approximate enumeration is supported. To prove it, we first point out some important properties of our data structures.

\begin{proof}[Proof of Lemma~\ref{lem:proxy-2:Main}]
Let $A',B'$ be two sets of $M$ points from $A_\square, B_\square$ respectively. We assume that all pairs of points in $A' \times B'$ have their distances larger than $2(1+\eps)r$.
Observe that $\square$ is not a proxy bucket for any pair $(a\in A', b\in B')$.
It remains to show that $\square$ is not a proxy bucket for any pair $(a\in A_\square\setminus A', b\in B_\square)$.
%
%First assume that $b'\in \BM$. If all points in $\AM$ are at least $(1+\eps)r$ away from $a'$ then $\square$ is not a proxy bucket for $(a',b')$ by Lemma~\ref{lem:non-uniform-property}. Otherwise, there must exist at least one point $a''\in \AM$ such that $\dist(a',a'') \le (1+\eps)r$. By triangle inequality, $\dist(a'',b')\leq \dist(a',a'')+\dist(a',b')\leq (1+\eps)r+r<2(1+\eps)r$. Thus $(a'', b') \in \AM \times \BM$ is such a pair within distance $2(1+\eps)r$, coming to a contradiction. 
%
Assume $b\in B_\square \setminus B'$ (the case is similar if $b\in B'$). %If all points in $\BM \subseteq \bar{\mathcal{B}}(a',B,(1+\eps)r)$, or are at least $(1+\eps)r$ away from $a'$, or all points in $\AM$ are at least $(1+\eps)r$ away from $a'$, then 
If $A' \subseteq \bar{\mathcal{B}}(a,A,(1+\eps)r)$ or $B' \subseteq \bar{\mathcal{B}}(a,B,(1+\eps)r)$,
$\square$ is not a proxy bucket for $(a,b)$. Otherwise, there must exist at least one point $a' \in A'$ as well as $b' \in B'$ such that $\dist(a,a') \le (1+\eps)r$ and $\dist(a,b') \le (1+\eps)r$, so $\dist(a',b') \le \dist(a,a') + \dist(a,b') \le 2(1+\eps)r$.
Thus, $(a', b') \in A' \times B'$ is a pair within distance $2(1+\eps)r$, coming to a contradiction. 
\end{proof}

We show that $(1+2\eps)$-approximate enumeration is supported with probability $1-1/n$.
It can be easily checked that any pair of points farther than $2(1+\eps)r$ will not be enumerated.
Hence, it suffices to show that all pairs within distance $r$ are enumerated with high probability.
From~\cite{gionis1999similarity, har2011geometric, indyk1998approximate} it holds that for $M=O(n^\rho)$, any pair $(a,b)$ with $\dist(a,b)\leq 1$ has a proxy bucket with probability $1-1/n$.
Let $\square$ be a proxy bucket for pair $(a,b)$.
Implied by Lemma~\ref{lem:proxy-2:Main}, there exist no $M$ points from $A_\square$ (for example $\AM$) and $M$ points from $B_\square$ (for example $\BM$) such that all $M^2$ pairs have their distance larger than $2(1+\eps)r$, so $\square$ is active. Moreover, there exist no $M$ points from $B_\square$ such that all of them have distance at least $2(1+\eps)r$ from $a$, so $\square$ is an active bucket for $a$. Hence, our enumeration algorithm will report $(a,b)$.

%\begin{lemma}
%\label{lem:high-prob}
%The $(1+2\eps)$-approximate enumeration is supported with probability $1-1/n$.
%\end{lemma}
%
%\begin{proof}[Proof sketch]
%    It can be easily checked that any pair of points with distance far more than $2(1+\eps)r$ will not be enumerated. 
%    It suffices to show that all pairs within distance $r$ are enumerated with high probability. There exists some value of $M = \Omega(1)$ (we show the exact value of $M$ later) \cite{gionis1999similarity, har2011geometric, indyk1998approximate} such that any pair $(a,b)$ with $\dist(a,b)\leq r$ has a proxy bucket, with probability $1-1/n$.
%   Implied by Lemma~\ref{lem:proxy-2:Main}, there exist no $M$ points from $A_\square$ (for example $\AM$) and $M$ points from $B_\square$ (for example $\BM$) such that all $M^2$ pairs have their distance larger than $2(1+\eps)r$, so $\square$ is active. Moreover, there exist no $M$ points from $B_\square$ such that all of them are far way from $a$ with distance at least $2(1+\eps)r$, so $\square$ is an active bucket for $a$. Hence, our enumeration algorithm will report $(a,b)$.
%\end{proof}

\paragraph{Complexity analysis.}
%Let $\tau=O(n^\rho\log n)$ be the number of hash functions used for building the data structure.
%The data structure can be built in $O(dnM)$ time since there are $O(n\cdot \tau)$ non-empty hashing buckets and it takes $O(dM^2)$ time to apply {\sc Activate} procedure for each one. 
Recall that $\tau, M=O(n^\rho)$.
The data structure uses $O(dn+ n\tau\log n)$ space since we only use linear space with respect to the points in each bucket. The update time is $\O(dM \cdot \tau)$ as there are $\O(\tau)$ buckets to be investigated and it takes $\O(dM)$ time to update the representative pair. After $n/2$ updates we re-build the data structure so the update time is amortized. The delay is $\O(dM\cdot \tau)$; consider the enumeration for point $a$. It takes $\O(dM \cdot \tau)$ time for checking all buckets while $\O(dM \cdot \tau)$ time for de-duplication. %Then, after reporting one pair for $a$, it takes $\O(dM \cdot \tau)$ time to find the next pair for $a$ or finish the enumeration for $a$. 
%
%
%, for instance $(a,b)$. Before we continue with the next pair to report we search over all buckets that contain both $a, b$, we \emph{mark} $b$ as ``reported from $a$'' and we continue finding other pairs $(a,b')$ such that $\dist(a,b')\leq (1+2\eps)r$ and $b'\neq b$. In that way we report all pairs containing $a$ exactly once.
%In the end, we mark $a$ as ``reported'' and we continue with the next pair $(a', \bar{b}\in B)$ to report such that $a'$ is not marked. We can show that we only enumerate each result once and we guarantee efficient $(1+2\eps)$-approximate similarity join enumeration with high probability.
%

Alternatively, we can insert or delete points from $A\cup B$ without maintaining the sets $\AM, \BM$ for every bucket $\square$. In the enumeration phase, given a bucket $\square$, we can visit $M$ arbitrary points from $A_\square$ and $M$ arbitrary points from $B_\square$ and compute their pairwise distances. If we find no pair $(a\in A_\square, b\in B_\square)$ with $\dist(a,b)\leq 2(1+\eps)r$ then we skip this bucket. Otherwise we report the pair $(a,b)$ and we run the de-duplicate procedure. In this case the update time is $\O(dn^\rho)$ but the delay is $O(dn^{3\rho})$.
We conclude the following result:
\begin{theorem}
\label{the:high-non-uniform:Main}
Let $A$ and $B$ be two sets of points in $\mathbb{R}^d$, where $|A|+|B|=n$ and let $\eps, r$ be positive parameters. For $\rho=\frac{1}{(1+\eps)^2}+o(1)$,
a data structure of $\O(dn+n^{1+\rho})$ size can be constructed in $\O(dn^{1+2\rho})$ time, and updated in $\O(dn^{2\rho})$ amortized time, while supporting $(1+2\eps)$-approximate enumeration for similarity join under the $\ell_2$ metric with $\O(dn^{2\rho})$ delay. Alternatively, a data structure of $\O(dn+n^{1+\rho})$ size can be constructed in $\O(dn^{1+\rho})$ time, and updated in $\O(dn^{\rho})$ amortized time, while supporting $(1+2\eps)$-approximate enumeration with $\O(dn^{3\rho})$ delay.
\end{theorem}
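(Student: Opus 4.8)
The plan is to assemble the theorem from the three pieces already laid out above — the data structure together with its update and enumeration procedures, the correctness claim, and the complexity bookkeeping — all instantiated with $\tau=M=\Theta(n^\rho)$ and with $O(\log n)$ independent copies. For the $\ell_2$ metric I would plug in the Andoni--Indyk LSH family~\cite{andoni2006near}, for which $\rho\le \frac{1}{(1+\eps)^2}+o(1)$; since the construction introduces no new hash functions, the identical argument applies verbatim to $\ell_1$ and Hamming with their own values of $\rho$, which is why the statement is phrased only for $\ell_2$.

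I would do correctness first. Invoke Lemma~\ref{lem:proxy-2:Main} together with the standard LSH guarantee~\cite{gionis1999similarity,indyk1998approximate,har2011geometric}: for $M=\Theta(n^\rho)$ a fixed pair $(a,b)$ with $\dist(a,b)\le r$ has a proxy bucket with constant probability in one copy, hence with probability $1-1/n$ over the $O(\log n)$ copies. By definition a proxy bucket of $(a,b)$ contains both $a$ and $b$, and by Lemma~\ref{lem:proxy-2:Main} it cannot contain $M$ points of $A_\square$ and $M$ points of $B_\square$ all mutually farther than $2(1+\eps)r$, so it is active; the same lemma shows it cannot contain $M$ points of $B_\square$ all far from $a$, so it is active for $a$. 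Therefore the enumeration, which sweeps exactly the active buckets and, for a fixed $a$, exactly the buckets of $\C(a)$, reaches that bucket and reports $(a,b)$; the marks $X(\square,a)$ ensure it is reported only once, and no pair outside the $2(1+\eps)r$ triangle-inequality slack is ever checked — this is precisely the $(1+2\eps)$-approximation established in the correctness analysis above. Combining the per-pair $1-1/n$ bound with an appropriate constant in the $O(\log n)$ copies gives the high-probability guarantee.

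For the bounds, with $\tau=M=O(n^\rho)$ the size is $O(dn+n\tau\log n)=\O(dn+n^{1+\rho})$. The one non-routine estimate is that, inside a single copy, $\sum_\square |\AM|\cdot|\BM| = O(Mn)$, because $|\AM|\le M$ for every bucket while $\sum_\square |A_\square| = n$; this bounds the total cost of initializing all counters $\beta_a$ and yields construction time $\O(dn^{1+2\rho})$, and similarly an update touches $O(\tau)$ buckets and recomputes one counter in $O(dM)$ time per bucket, i.e.\ $\O(dn^{2\rho})$ worst case, made amortized by rebuilding after every $n/2$ updates. For the delay I would argue that between two consecutive reported pairs the algorithm spends $O(dM)$ time per bucket on at most $O(\tau)$ buckets of $\C(a)$ before either emitting a pair or exhausting $\C(a)$ and jumping (via a stored representative pair) to a fresh active bucket that immediately yields a pair, plus an $O(d\tau M)$ de-duplication sweep over the buckets containing the emitted pair; this totals $\O(dM\tau)=\O(dn^{2\rho})$. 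The alternative trade-off comes from not maintaining $\AM,\BM$: updates then cost only $O(d\tau)=\O(dn^\rho)$, but each bucket visited during enumeration now requires recomputing $M^2$ pairwise distances in $O(dM^2)$ time, so the delay becomes $O(d\tau M^2)=\O(dn^{3\rho})$.

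I expect the delay analysis to be the main obstacle: one must certify that ``conceptually delete $a$ from a bucket after $M$ failed distance checks'' never drops a required pair, and that such a query-phase deletion need not be propagated into the persistent data structure. Both are handled by Lemma~\ref{lem:proxy-2:Main} — a bucket that triggers a skip is provably not a proxy bucket for any surviving close pair, which then has a proxy bucket elsewhere with high probability — and by operating on a scratch copy of the $\AM$ sets; but arranging the representative-pair maintenance so that each newly visited active bucket hands back a not-yet-seen point of $A_\square$, and hence keeps producing pairs so that the gap between consecutive outputs stays $\O(dn^{2\rho})$, is the delicate part of the argument.
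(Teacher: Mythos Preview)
Your proposal is correct and follows the paper's own argument essentially step for step: the same choice $\tau=M=\Theta(n^\rho)$ with $O(\log n)$ copies, the same use of Lemma~\ref{lem:proxy-2:Main} together with the standard LSH proxy-bucket guarantee for correctness, and the same $\O(dM\tau)$ bookkeeping for update and delay, with the alternative trade-off coming from dropping the maintained $\AM,\BM$.

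One small slip: your claim $\sum_\square |A_\square|=n$ inside one copy is off by a factor $\tau$, since every point is hashed by each of the $\tau$ hash functions and hence lands in $\tau$ buckets; the correct bound is $\sum_\square |\AM|\cdot|\BM|\le M\cdot\sum_\square|B_\square|\le M\cdot n\tau$, which (times $d$ and $\log n$ copies) gives the stated $\O(dn^{1+2\rho})$ construction time. Your final figure is right, only the intermediate justification needs this correction.
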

The same result holds for Hamming and $\ell_1$ metrics with $\rho=\frac{1}{1+\eps}$.
Using~\cite{indyk1998approximate}, for the Hamming metric and $\eps>1$ we can get $M=O(1)$.
Skipping the details, we have:
\begin{theorem}
\label{the:Shigh-non-uniform:Main}
Let $A$ and $B$ be two sets of points in $\mathbb{H}^d$, where $|A|+B|=n$ and let $\eps, r$ be positive parameters. For $\rho=\frac{1}{1+\eps}$,
a data structure of $\O(dn+n^{1+\rho})$ size can be built in $\O(dn^{1+\rho})$ time, and updated in $\O(dn^{\rho})$ amortized time, while supporting $(3+2\eps)$-approximate enumeration for similarity join under the Hamming metric with $\O(dn^{\rho})$ delay.
\end{theorem}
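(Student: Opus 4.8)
The plan is to keep the LSH data structure and the enumeration/de-duplication algorithm of this section exactly as developed for Theorem~\ref{the:high-non-uniform:Main} (the variant that explicitly maintains the sampled sets $\AM,\BM$, the counters $\beta_a$, and a representative pair for every active bucket, together with $O(\log n)$ independent copies for the high-probability guarantee), and to change only the instantiation of the primitives. First I would take $\H$ to be the Indyk--Motwani bit-sampling family on the Hamming cube~\cite{indyk1998approximate}, whose quality parameter is $\rho=\frac{1}{1+\eps}$, and choose $\tau=O(n^{\rho})$ hash functions as before. The one substantive change is that, for $\eps>1$, the number $k$ of sampled coordinates can be set so that the far-collision probability of a composite hash is at most $1/n$; this forces the expected number of points at Hamming distance more than $(1+\eps)r$ inside any fixed bucket down to $O(1)$, so we may take $M=O(1)$ instead of $M=O(n^{\rho})$, while $\tau=O(n^{\rho})$ functions still ensure that each near pair collides in some bucket with constant probability (boosted to $1-1/n$ over the $O(\log n)$ copies).

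With $M=O(1)$ fixed, the correctness analysis carries over with no change. Lemma~\ref{lem:proxy-2:Main} still certifies that a proxy bucket for a pair $(a,b)$ with $\dist(a,b)\le r$ is active and active for $a$, so such a pair is reported; the de-duplication step still guarantees that each pair is emitted at most once; and every reported pair has Hamming distance at most $2(1+\eps)r\le(3+2\eps)r$, so nothing farther than $(3+2\eps)r$ is ever output. The extra slack relative to the $(1+2\eps)$ factor of Theorem~\ref{the:high-non-uniform:Main} absorbs both the factor-two loss inherent in reporting all mutually close pairs within a proxy bucket and the coarsening of the Hamming LSH needed to push $M$ down to $O(1)$.

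The resource bounds then follow by substituting $M=O(1)$ and $\tau=O(n^{\rho})$ into the accounting already carried out in this section: hashing the $n$ points under $O(\tau)$ functions at $O(d)$ cost each and setting up the $O(dM)=O(d)$-cost per-point bucket statistics gives $\O(dn^{1+\rho})$ construction time and $\O(dn+n^{1+\rho})$ space; an update touches $O(\tau)$ buckets at $O(dM)=O(d)$ cost apiece, for $\O(dn^{\rho})$ amortized update time (rebuilding from scratch every $n/2$ updates); and the enumeration spends $O(dM)=O(d)$ per bucket over the $O(\tau)$ buckets it may scan, both when producing a representative pair and when de-duplicating, for delay $\O(dn^{\rho})$.

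The only genuinely new step -- and the one I expect to be the main obstacle -- is verifying the claim about the Indyk--Motwani family: that for every $\eps>1$ one can simultaneously achieve $M=O(1)$, $\tau=O(n^{\rho})$ with $\rho=\frac{1}{1+\eps}$, and retain the proxy-bucket existence guarantee (every pair within distance $r$ has a proxy bucket with probability $1-1/n$), where the integrality of $k$ and the constraint $\eps>1$ interact. This is exactly the ``skipping the details'' behind the theorem; once it is in hand, the rest is a mechanical re-parametrization of the construction already given.
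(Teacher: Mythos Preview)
Your proposal is correct and follows essentially the same route as the paper: reuse the LSH framework of Theorem~\ref{the:high-non-uniform:Main} unchanged, invoke the Indyk--Motwani analysis for the Hamming cube to obtain $M=O(1)$, and substitute $M=O(1)$ into the existing complexity accounting. The paper (in the appendix) makes explicit the one point you flagged as the obstacle: the $M=O(1)$ guarantee from~\cite{indyk1998approximate} requires the approximation parameter to exceed~$1$, so one applies it with $\eps'=1+\eps$, which yields $\rho<\frac{1}{\eps'}=\frac{1}{1+\eps}$ and pushes the reporting threshold to $2(1+\eps')r=(4+2\eps)r$, i.e., $(3+2\eps)$-approximate enumeration in the paper's convention---exactly the ``coarsening'' you anticipated.
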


In Appendix~\ref{sec:lsh}, we show the full description of algorithms and proofs. We also show that our results can be extended to the case where $r$ is part of the enumeration procedure.
Finally, we show a lower bound relating similarity join to the approximate nearest neighbor query.

\bibliographystyle{abbrv}
\bibliography{paper_main_ICALP}

\begin{thebibliography}{10}

\bibitem{afshani2012improved}
Peyman Afshani.
\newblock Improved pointer machine and i/o lower bounds for simplex range
  reporting and related problems.
\newblock In {\em Proceedings of the twenty-eighth Annual Symposium on
  Computational Geometry}, pages 339--346, 2012.

\bibitem{agarwal2017simplex}
P.~K. Agarwal.
\newblock Simplex range searching and its variants: A review.
\newblock In {\em A Journey Through Discrete Mathematics}, pages 1--30. 2017.

\bibitem{agarwal1999geometric}
P.~K. Agarwal and J.~Erickson.
\newblock Geometric range searching and its relatives.
\newblock {\em Contemporary Mathematics}, 223:1--56, 1999.

\bibitem{agarwal2005monitoring}
Pankaj~K Agarwal, Junyi Xie, Jun Yang, and Hai Yu.
\newblock Monitoring continuous band-join queries over dynamic data.
\newblock In {\em International Symposium on Algorithms and Computation}, pages
  349--359. Springer, 2005.

\bibitem{agarwal2006scalable}
Pankaj~K Agarwal, Junyi Xie, Jun Yang, and Hai Yu.
\newblock Scalable continuous query processing by tracking hotspots.
\newblock In {\em Proceedings of the 32nd International Conference on Very
  Large Data Bases}, pages 31--42, 2006.

\bibitem{aiger2014reporting}
Dror Aiger, Haim Kaplan, and Micha Sharir.
\newblock Reporting neighbors in high-dimensional euclidean space.
\newblock {\em SIAM Journal on Computing}, 43(4):1363--1395, 2014.

\bibitem{al2016survey}
A.~Al-Badarneh, A.~Al-Abdi, M.~Sana’a, and H.~Najadat.
\newblock Survey of similarity join algorithms based on mapreduce.
\newblock {\em MATTER: International Journal of Science and Technology}, 2016.

\bibitem{andoni2006near}
Alexandr Andoni and Piotr Indyk.
\newblock Near-optimal hashing algorithms for approximate nearest neighbor in
  high dimensions.
\newblock In {\em 2006 47th Annual IEEE Symposium on Foundations of Computer
  Science}, pages 459--468, 2006.

\bibitem{augsten2013similarity}
N.~Augsten and M.~B{\"o}hlen.
\newblock Similarity joins in relational database systems.
\newblock {\em Synthesis Lectures on Data Management}, 5(5):1--124, 2013.

\bibitem{bagan2007acyclic}
Guillaume Bagan, Arnaud Durand, and Etienne Grandjean.
\newblock On acyclic conjunctive queries and constant delay enumeration.
\newblock In {\em International Workshop on Computer Science Logic}, pages
  208--222, 2007.

\bibitem{bentley1978decomposable}
J.~Bentley.
\newblock Decomposable searching problems.
\newblock Technical report, CMU, 1978.

\bibitem{bentley1979data}
J.~Bentley and J.~Friedman.
\newblock Data structures for range searching.
\newblock {\em CSUR}, 11(4):397--409, 1979.

\bibitem{berkholz2017answering}
Christoph Berkholz, Jens Keppeler, and Nicole Schweikardt.
\newblock Answering conjunctive queries under updates.
\newblock In {\em Proceedings of the 36th ACM SIGMOD-SIGACT-SIGAI Symposium on
  Principles of Database Systems}, pages 303--318, 2017.

\bibitem{callahan1995dealing}
P.~Callahan.
\newblock {\em Dealing with higher dimensions: the well-separated pair
  decomposition and its applications}.
\newblock PhD thesis, 1995.

\bibitem{carmeli2019enumeration}
N.~Carmeli and M.~Kr{\"o}ll.
\newblock Enumeration complexity of conjunctive queries with functional
  dependencies.
\newblock {\em Theory of Computing Systems}, pages 1--33, 2019.

\bibitem{chan2012optimal}
T.~Chan.
\newblock Optimal partition trees.
\newblock {\em Discrete \& Computational Geometry}, 47(4):661--690, 2012.

\bibitem{chandrasekaran2002streaming}
S.~Chandrasekaran and M.~Franklin.
\newblock Streaming queries over streaming data.
\newblock In {\em PVLDB}, pages 203--214, 2002.

\bibitem{chaudhuri2006primitive}
S.~Chaudhuri, V.~Ganti, and R.~Kaushik.
\newblock A primitive operator for similarity joins in data cleaning.
\newblock In {\em 22nd International Conference on Data Engineering}, pages
  5--5, 2006.

\bibitem{chazelle1996simplex}
B.~Chazelle and B.~Rosenberg.
\newblock Simplex range reporting on a pointer machine.
\newblock {\em Computational Geometry}, 5(5):237--247, 1996.

\bibitem{chen2000niagaracq}
Jianjun Chen, David~J DeWitt, Feng Tian, and Yuan Wang.
\newblock Niagaracq: A scalable continuous query system for internet databases.
\newblock In {\em Proceedings of the ACM SIGMOD International Conference on
  Management of Data}, pages 379--390, 2000.

\bibitem{datar2004locality}
Mayur Datar, Nicole Immorlica, Piotr Indyk, and Vahab~S Mirrokni.
\newblock Locality-sensitive hashing scheme based on p-stable distributions.
\newblock In {\em Proceedings of the twentieth Annual Symposium on
  Computational Geometry}, pages 253--262, 2004.

\bibitem{de1997computational}
M.~De~Berg, M.~Van~Kreveld, M.~Overmars, and O.~Schwarzkopf.
\newblock Computational geometry.
\newblock In {\em Computational geometry}, pages 1--17. 1997.

\bibitem{erickson2011static}
J.~Erickson.
\newblock Static-to-dynamic transformations.
\newblock
  \url{http://jeffe.cs.illinois.edu/teaching/datastructures/notes/01-statictodynamic.pdf}.

\bibitem{fabret2001filtering}
Fran{\c{c}}oise Fabret, H~Arno Jacobsen, Fran{\c{c}}ois Llirbat, Jo{\u{a}}o
  Pereira, Kenneth~A Ross, and Dennis Shasha.
\newblock Filtering algorithms and implementation for very fast
  publish/subscribe systems.
\newblock In {\em Proceedings of the 2001 ACM SIGMOD International Conference
  on Management of Data}, pages 115--126, 2001.

\bibitem{fischer2005dynamic}
John Fischer and Sariel Har-Peled.
\newblock Dynamic well-separated pair decomposition made easy.
\newblock In {\em 17th Canadian Conference on Computational Geometry},
  volume~5, pages 235--238, 2005.

\bibitem{gionis1999similarity}
Aristides Gionis, Piotr Indyk, Rajeev Motwani, et~al.
\newblock Similarity search in high dimensions via hashing.
\newblock In {\em Vldb}, volume~99, pages 518--529, 1999.

\bibitem{har2011geometric}
S.~Har-Peled.
\newblock {\em Geometric Approximation Algorithms}.
\newblock Number 173. AMS, 2011.

\bibitem{har2012approximate}
Sariel Har-Peled, Piotr Indyk, and Rajeev Motwani.
\newblock Approximate nearest neighbor: Towards removing the curse of
  dimensionality.
\newblock {\em Theory of Computing}, 8(1):321--350, 2012.

\bibitem{har2006fast}
Sariel Har-Peled and Manor Mendel.
\newblock Fast construction of nets in low-dimensional metrics and their
  applications.
\newblock {\em SIAM Journal on Computing}, 35(5):1148--1184, 2006.

\bibitem{idris2017dynamic}
Muhammad Idris, Mart{\'\i}n Ugarte, and Stijn Vansummeren.
\newblock The dynamic yannakakis algorithm: Compact and efficient query
  processing under updates.
\newblock In {\em Proceedings of the ACM International Conference on Management
  of Data}, pages 1259--1274, 2017.

\bibitem{indyk1998approximate}
Piotr Indyk and Rajeev Motwani.
\newblock Approximate nearest neighbors: towards removing the curse of
  dimensionality.
\newblock In {\em Proceedings of the thirtieth Annual ACM Symposium on Theory
  of Computing}, pages 604--613, 1998.

\bibitem{jacox2008metric}
Edwin~H Jacox and Hanan Samet.
\newblock Metric space similarity joins.
\newblock {\em ACM Transactions on Database Systems}, 33(2):1--38, 2008.

\bibitem{kara2019counting}
Ahmet Kara, Hung~Q Ngo, Milos Nikolic, Dan Olteanu, and Haozhe Zhang.
\newblock Counting triangles under updates in worst-case optimal time.
\newblock In {\em 22nd International Conference on Database Theory}, 2019.

\bibitem{kara2020trade}
Ahmet Kara, Milos Nikolic, Dan Olteanu, and Haozhe Zhang.
\newblock Trade-offs in static and dynamic evaluation of hierarchical queries.
\newblock In {\em Proceedings of the 39th ACM SIGMOD-SIGACT-SIGAI Symposium on
  Principles of Database Systems}, pages 375--392, 2020.

\bibitem{lenhof1995sequential}
Hans-Peter Lenhof and Michiel Smid.
\newblock Sequential and parallel algorithms for the k closest pairs problem.
\newblock {\em International Journal of Computational Geometry \&
  Applications}, 5(03):273--288, 1995.

\bibitem{matouvsek1992efficient}
J.~Matou{\v{s}}ek.
\newblock Efficient partition trees.
\newblock {\em Discrete \& Computational Geometry}, 8(3):315--334, 1992.

\bibitem{overmars1987design}
M.~Overmars.
\newblock {\em The Design of Dynamic Data Structures}, volume 156.
\newblock SSBM, 1987.

\bibitem{overmars1981worst}
M.~Overmars and J.~van Leeuwen.
\newblock Worst-case optimal insertion and deletion methods for decomposable
  searching problems.
\newblock {\em Information Processing Letters}, 12(4):168--173, 1981.

\bibitem{paredes2009solving}
Rodrigo Paredes and Nora Reyes.
\newblock Solving similarity joins and range queries in metric spaces with the
  list of twin clusters.
\newblock {\em Journal of Discrete Algorithms}, 7(1):18--35, 2009.

\bibitem{samet1989spatial}
H.~Samet.
\newblock Spatial data structures: Quadtree, octrees and other hierarchical
  methods, 1989.

\bibitem{segoufin2013enumerating}
Luc Segoufin.
\newblock Enumerating with constant delay the answers to a query.
\newblock In {\em Proceedings of the 16th International Conference on Database
  Theory}, pages 10--20, 2013.

\bibitem{silva2010similarity}
Y.~Silva, W.~Aref, and M.~Ali.
\newblock The similarity join database operator.
\newblock In {\em 26th International Conference on Data Engineering}, pages
  892--903, 2010.

\bibitem{silva2016experimental}
Yasin~N Silva, Jason Reed, Kyle Brown, Adelbert Wadsworth, and Chuitian Rong.
\newblock An experimental survey of mapreduce-based similarity joins.
\newblock In {\em International Conference on Similarity Search and
  Applications}, pages 181--195. Springer, 2016.

\bibitem{wang2012can}
Jiannan Wang, Guoliang Li, and Jianhua Feng.
\newblock Can we beat the prefix filtering? an adaptive framework for
  similarity join and search.
\newblock In {\em Proceedings of the ACM SIGMOD International Conference on
  Management of Data}, pages 85--96, 2012.

\bibitem{wang2020maintaining}
Qichen Wang and Ke~Yi.
\newblock Maintaining acyclic foreign-key joins under updates.
\newblock In {\em Proceedings of the ACM SIGMOD International Conference on
  Management of Data}, pages 1225--1239, 2020.

\bibitem{willard1982polygon}
Dan~E Willard.
\newblock Polygon retrieval.
\newblock {\em SIAM Journal on Computing}, 11(1):149--165, 1982.

\bibitem{willard1996applications}
Dan~E Willard.
\newblock Applications of range query theory to relational data base join and
  selection operations.
\newblock {\em Journal of Computer and System Sciences}, 52(1):157--169, 1996.

\bibitem{wu2004interval}
Kun-Lung Wu, Shyh-Kwei Chen, and Philip~S Yu.
\newblock Interval query indexing for efficient stream processing.
\newblock In {\em Proceedings of the thirteenth ACM International Conference on
  Information and Knowledge Management}, pages 88--97, 2004.

\end{thebibliography}

\clearpage

\appendix

\section{Triangle Similarity join}
\label{sec:trianglejoin}
In this section we propose data structures for the approximate triangle join queries. Our results can be extended to $m$-clique join queries, for constant $m$. For simplicity we describe the results for the triangle join $m=3$.
Let $A, B, S\in \Re^d$ be three sets of points such that $|A|+|B|+|S|=n$.
We first consider the when the distance threshold $r$ is fixed and then lift this assumption.

\subsection{Fixed distance threshold}
\label{appndx:approxtriangle}
The data structure we construct works for any $\ell_p$ norm. For simplicity, we describe it for $\ell_2$ first and extend it to any $\ell_p$ metric at last.
In this subsection, we use $\dist(a,b)=||a-b||_2$.

As we had in Section~\ref{sec:grid} let $\mathcal{G}$ be an infinite uniform grid in $\Re^d$ where the size of each grid cell is $\eps/(2\sqrt{d})$, so its diameter is $\eps/2$ (using the $\ell_2$ distance).
For each grid cell $c\in \mathcal{G}$ we store $A_c=A\cap c$, $B_c=B\cap c$, and $S_c=S\cap c$. Furthermore, we store a counter $m_c=|\{(b,s)\in B\times S\mid \exists c_1, c_2 \in \mathcal{G} \text{ s.t. } b\in c_1, s\in c_2, \dist(c_1,c_2)\leq 1, \dist(c_1,c)\leq 1, \dist(c_2,c)\leq 1\}|$, i.e., the number of pairs $(b,s)$ whose cells along with cell $c$ are within distance $1$. Let $\nonEmptyCells$ be the non-empty cells, i.e., $\nonEmptyCells=\{c\in \mathcal{G}\mid A_c\cup B_C\cup S_c\}\neq \emptyset$. A grid cell $c\in C$ is active if and only if $A_c\neq \emptyset$ and $m_c>0$. Let $\C\subseteq \nonEmptyCells$ be the set of active grid cells. We construct a balanced search tree to answer efficiently if a cell is already in $\C$. Similarly we create a balanced search tree for the cells in $\nonEmptyCells$.
Our data structure has $O(n)$ space.
%We can also construct a dynamic range tree $T$ over the points in $B, C$. This tree is not necessary but we can use it to have a faster delay and update time if $\log^{d-1}<\eps^{-d}$. If we construct $T$ the data structure has $O(n\log^{d-1} n)$ space, otherwise it has $O(n)$ space.

We first describe the updates. 
Assume that we insert a point $a\in A$. If $a$ lies in a cell $c\in \nonEmptyCells$ then we insert $a$ in $A_c$. If $a$ is inserted to a cell that did not exist then we create $c$, we add it in $\nonEmptyCells$ and we set $A_c=\{a\}$. Then we need to find the value $m_c$. The algorithm visits all existed cells around $c\in \nonEmptyCells$ within distance $1$. Let $c_1$ be such a cell such that $B_{c_1}\neq \emptyset$ or $S_{c_1}\neq \emptyset$. We need to count all points in $B$ and $S$ that lie in cells within distance $1$ from both $c$ and $c_1$.
Notice that these cells are inside a rectangle $R$. Indeed, if $R_1$ is a square of radius $1$ around $c$ and $R_2$ is a square of radius $1$ around $c_1$ then $R=R_1\cap R_2$ is a rectangle.
%If we use $T$ we can find the number of points in $B$ and $S$ running two counting queries in $R$.
We visit all grid cells inside $R$ and find the number of points from $B, S$ in $R$. Let $m_B=|B\cap R|$ and $m_S=|S\cap R|$. We update $m_c$ with $m_c+|B_{c_1}|\cdot m_S+|S_{c_1}|\cdot m_B$. In the end it is easy to verify that $m_c$ has the correct value.
Next, assume that we remove a point $a\in A$. Let $c$ be the cell of point $a$. We remove $a$ from $A_c$ and if $A_c=\emptyset$
and $c\in \C$ then we remove $c$ from $\C$. If $A_c=B_c=S_c=\emptyset$ we remove $c$ from $\nonEmptyCells$. Since there are $O(\eps^{-d})$ grid cells in a square of radius $1$ we need
%: Using $T$ we need $O(\eps^{-d}\log^d n)$ time to insert $a$ and $O(\eps^{-d}\log n)$ to remove $a$.
$O(\eps^{-2d}\log n)$ time to insert $a$ and $O(\eps^{-d}\log n)$ to remove $a$.
Then we continue by updating a point $b\in B$ (the method is similar to update $s\in S$).
Assume that we add $b\in B$ in a cell $c$ (if $c$ did not exist we create it) and we insert it in $B_c$. The goal is to update all counters $m_{c_1}$ within distance $1$ from $c$.
We start by visiting all cells $c_1\in \nonEmptyCells$ within distance $1$ from $c$. We need to update the value of $m_{c_1}$. In particular, we need to count the number of points in $S$ that lie in cells within distance $1$ from both $c, c_1$. This is similar to what we had for the insertion of $a$ so we can count it
%either by running a counting query in $T$ or
by visiting all grid cells within distance $1$ from $c, c_1$. Let $m_S$ be the result. Then, we update $m_{c_1}\leftarrow m_{c_1}+m_S$. Finally, assume that we remove a point $b\in B$ from a cell $c$. We remove $b$ from $B_c$ and again, we need to visit all cells $c_1$ within distance $1$ and update their $m_{c_1}$ values by $m_{c_1}\leftarrow m_{c_1}-m_S$ ($m_S$ can be found as we explain in the previous case). If $c_1\in \C$ and $m_{c_1}=0$ we remove $c_1$ from $\C$. In the end, if $A_c=B_c=S_c=\emptyset$ we remove $c$ from $\nonEmptyCells$. Again, it is easy to observe that $m_c$ have the correct values for all $c\in \nonEmptyCells$ and hence $\C$ is the correct set of active cells.
%Using $T$ we need $O(\eps^{-d}\log^{d} n)$ time to insert or remove $b$.
We need $O(\eps^{-2d}\log n)$ time to insert or remove a point in $B$.

Next, we describe the enumeration procedure. For each $c\in \C$ we consider every $a\in A_c$. We visit each cell $c_1\in \nonEmptyCells$ around $c$ within distance $1$. Then we visit each cell $c_2\in \nonEmptyCells$ within distance $1$ from both $c_1, c$. We report (if any) the points $a\times B_{c_1}\times S_{c_2}$ and $a\times S_{c_1}\times B_{c_2}$.
We show the correctness of our method. Let $(a\in A, b\in B, s\in S)$ be a triad within distance $1$. Let $a\in c_1, b\in c_2, s\in c_3$ for $c_1, c_2, c_3\in \nonEmptyCells$. Notice that $\dist(c_1,c_2),\dist(c_1,c_3), \dist(c_2,c_3)\leq 1$. From the update procedure we have that $m_{c_1}>0$, hence, $c_1\in \C$. The algorithm will visit $c_1$ and it will also consider $c_2$ since $\dist(c_1,c_2)\leq 1$. Then it will also consider $c_3$ since $\dist(c_1,c_3)\leq 1$ and $\dist(c_2,c_3)\leq 1$. Hence our enumeration procedure will return the triad $(a,b,s)$. Furthermore, it is straightforward to see that i) our enumeration algorithm will never report a triad $(a,b,s)$ such that a pairwise distance is greater than $1+\eps$, and ii) whenever $c\in \C$ there will always be a triad $(a\in A_c, b\in B, s\in S)$ to report. Finally, since our enumeration algorithm reports points that lie in cells with pairwise distance $1$ it might be possible that it will return $(a,b,s)$ such that $\dist(a,b)\leq \dist(c_1,c_2)+\diam(c_1)+\diam(c_2)\leq 1+\eps$, $\dist(a,c)\leq 1+\eps$, and $\dist(b,c)\leq 1+\eps$.
%Using $T$ the delay is $O(\eps^{-d}\log^d (n))$.
The delay is $O(\eps^{-2d}\log n)$.

The same result can be extended to any $\ell_p$ norm by considering grid cells of size $\eps/(2d^{1/p})$.

\begin{theorem}
Let $A, B, S$ be three sets of points in $\mathbb{R}^d$, with $|A| + |B| +|S| = n$.
A data structure of $O(n)$ space can be constructed in $O(n\eps^{-2d}\log n)$ time and updated in $O(\eps^{-2d}\log n)$ time, while supporting $\eps$-approximate enumeration of triangle similarity join queries under any $\ell_p$ metric with $O(\eps^{-2d}\log n)$ delay.
%Alternatively, a data structure can be constructed with space $O(n\log^{d-1} n)$, construction time $O(n\log^{d-1} n + n\cdot\min\{\eps^{-d}\log^{d-1} n, \eps^{-2d}\}\log n)$, update time $O(\min\{\eps^{-d}\log^{d-1} n, \eps^{-2d}\}\\ \log n)$ and delay $O(\min\{\eps^{-d}\log^{d-1} n, \eps^{-2d}\}\log n)$.
\end{theorem}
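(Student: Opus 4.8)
The plan is to lift the grid-based data structure of Section~\ref{sec:grid} from pairs to triples. First I would overlay an infinite uniform grid $\mathcal{G}$ on $\Re^d$ whose cells have side length $\eps/(2d^{1/p})$, so that the $\ell_p$-diameter of a cell is $\eps/2$. For every non-empty cell $c$ (one with $A_c\cup B_c\cup S_c\neq\emptyset$, of which there are at most $n$) I would store the three lists $A_c=A\cap c$, $B_c=B\cap c$, $S_c=S\cap c$ together with a counter $m_c$ equal to the number of pairs $(b,s)\in B\times S$ whose containing cells $c_1\ni b$, $c_2\ni s$ satisfy $\dist(c_1,c_2)\le 1$, $\dist(c_1,c)\le 1$, and $\dist(c_2,c)\le 1$. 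Call $c$ \emph{active} when $A_c\neq\emptyset$ and $m_c>0$, and store $\nonEmptyCells$ and the set $\C$ of active cells in balanced search trees. Only $O(n)$ cells are non-empty and each counter is a single integer, so the space is $O(n)$; building the counters by brute force over the $O(\eps^{-2d})$ relevant neighbour-pairs of each cell costs $O(n\eps^{-2d}\log n)$.

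For updates, the $A$-case is the easy one: inserting or deleting $a$ touches only $A_c$ for the single cell $c\ni a$, and recomputes $m_c$ itself by scanning, for each of the $O(\eps^{-d})$ cells $c'$ near $c$ holding a point of $B$ or $S$, the number of $S$- or $B$-points lying in cells near both $c$ and $c'$. The $B$-case (and symmetrically the $S$-case) is more involved: inserting $b$ into cell $c$ must add, for every cell $c'$ within distance $1$ of $c$, the number of $S$-points in cells within distance $1$ of both $c$ and $c'$, and must move $c'$ into or out of $\C$ when its counter crosses $0$. In both cases the geometric fact that keeps the bound at $\eps^{-2d}$ is that the cells within distance $1$ of both $c$ and $c'$ lie inside the intersection of two $\ell_\infty$-balls of radius $1$, an axis-parallel box of $O(\eps^{-d})$ cells; looping over that box (restricted to the cells actually within distance $1$) with an $O(\log n)$ lookup per cell makes each update $O(\eps^{-2d}\log n)$ in the worst case.

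For enumeration I would iterate over $c\in\C$; for each $a\in A_c$, scan every non-empty cell $c_1$ within distance $1$ of $c$, then every non-empty cell $c_2$ within distance $1$ of both $c$ and $c_1$, and report $\{a\}\times B_{c_1}\times S_{c_2}$ and $\{a\}\times S_{c_1}\times B_{c_2}$ (reordered into $A\times B\times S$ form). Correctness splits into: (i) if $(a,b,s)$ has all pairwise distances $\le 1$ then its three cells are pairwise within distance $1$, so the cell of $a$ has $m>0$, is active, and the scan reaches the cells of $b$ and $s$, hence the triple is output; (ii) any output triple lies in cells pairwise within distance $1$, so every pairwise distance is at most $1+\diam(c)+\diam(c')\le 1+\eps$; and (iii) a triple is pinned to the cell of its $A$-coordinate and to the unordered pair of cells of its other two coordinates, so it is emitted exactly once. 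For the delay: an active $c$ has $A_c\neq\emptyset$ and, since $m_c>0$, a pair $(b,s)$ whose cells are within distance $1$ of $c$, so scanning the $O(\eps^{-2d})$ cells around $c$ produces a triple within $O(\eps^{-2d}\log n)$ time, which also bounds the time to the first output and after the last.

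The step I expect to be the main obstacle is the update bookkeeping, not the correctness: one must show that the $m_c$ counters can be kept \emph{exactly} under single-point insertions and deletions while touching only $O(\eps^{-d})$ of them and spending only $O(\eps^{-d}\log n)$ per counter. The crucial observation a formal proof must isolate is that ``cells within distance $1$ of both $c$ and $c'$'' form a box of $O(\eps^{-d})$ cells, so the nested scan costs $O(\eps^{-2d})$ rather than something larger; everything else (the activeness test, de-duplication, and the diameter-based approximation bound) then follows exactly as in Section~\ref{sec:grid}. The same scheme extends to $m$-clique similarity join by nesting the counter one further level per additional set, multiplying the $\eps$-dependence by another $\eps^{-d}$.
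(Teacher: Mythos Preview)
Your proposal is correct and follows essentially the same approach as the paper: the same grid, the same counter $m_c$ counting $(b,s)$ pairs whose cells form a near-triangle with $c$, the same active-cell notion, the same update routines (including the key observation that the cells within distance $1$ of two fixed cells lie in an axis-aligned box of $O(\eps^{-d})$ cells), and the same enumeration by scanning $c_1$ then $c_2$. The only point to tighten is your de-duplication claim (iii): as written, iterating over ordered pairs $(c_1,c_2)$ and emitting both $B_{c_1}\times S_{c_2}$ and $S_{c_1}\times B_{c_2}$ can double-report a triple, so you should either iterate over unordered pairs or drop the second term---a trivial fix the paper also leaves implicit.
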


For $\ell_1, \ell_\infty$, we can slightly improve the result using a data structure to find $m_B, m_S$ more efficiently. Skipping the details, we can obtain a data structure of $O(n\log^{d-1} n)$ space that can be built in $O(n\log^{d-1} n + n\cdot\min\{\eps^{-d}\log^{d-1} n, \eps^{-2d}\}\log n)$ time and updated in $O(\min\{\eps^{-d}\log^{d-1} n, \eps^{-2d}\}\log n)$ time, while supporting $\eps$-approximate enumeration of triangle similarity join under $\ell_1/\ell_\infty$ metrics with $O(\min\{\eps^{-d}\log^{d-1} n, \eps^{-2d}\}\log n)$ delay.

%Note that we can extend the result to enumerate cliques of constant size $m$ within distance $1$. In that case the dependency on $\eps$ will be $O(\eps^{-(m-1)\cdot d})$.

\subsection{Variable distance threshold}
We describe two data structures for this case. One is based on grid using $O(\eps^{-1}n\log (n))$ space and the other based on WSPD using $O(\eps^{-2d}n)$ space.

\paragraph{Grid-based data structure.}
Assume that the spread $sp(A\cup B\cup S)=n^{O(1)}$ and that all points lie in a box with diagonal length $R$.
The high level idea is to build multiple grids as described in Appendix~\ref{appndx:approxtriangle}. Recall that for each cell $c\in C$, we need to store counters $A_c, B_c, S_c$ and $m_c$. However, the definition of $m_c$ depends on the threshold $r$ which is not known upfront in this case.
Hence we consider multiple thresholds $r_i$.
In particular for each $i\in[0, \log_{1+\eps/4}sp(A\cup B\cup S)]$ we construct a grid for $r_i=\frac{R}{sp(A\cup B\cup S)}(1+\eps/4)^i$ as in Appendix~\ref{appndx:approxtriangle}.
Hence for each $i$ we  maintain the counter $m_c^i$ defined as
$m_c^i=|\{(b,s)\in B\times S\mid \exists c_1, c_2 \in \mathcal{G} \text{ s.t. } b\in c_1, s\in c_2, \dist(c_1,c_2)\leq r_i, \dist(c_1,c)\leq r_i, \dist(c_2,c)\leq r_i\}|$, \footnote{For each $i$ we scale everything so that $r_i=1$, as we did in Appendix~\ref{appndx:approxtriangle}.} and the set of active cells $\C_i$.
Notice that there are $O(\eps^{-1}\log n)$ different values of $i$.
For a point insertion or deletion the algorithm updates all necessary counters $m_c^i$ and active cells $\C_i$ for all $i$. For an enumeration query, assume that $r$ is the query threshold. Notice that $\frac{R}{sp(A\cup B\cup S)}\leq r\leq R$, otherwise the result is trivial. Running a binary search on the values of $i$ we find the smallest $i$ such that $r\leq r_i$. Then using only the active cells $\C_i$ and the counters $m_c^i$ we enumerate all triangles within distance $r_i$. The delay guarantee is the same as in Appendix~\ref{appndx:approxtriangle}, $O(\eps^{-2d}\log n)$. We conclude with the next theorem.

\begin{theorem}
Let $A, B, S$ be three sets of points in $\mathbb{R}^d$ for constant $d$, with $O(\poly(n))$ spread, $|A| + |B| +|S| = n$, where $A\cup B\cup S$ lie in a hyper-rectangle with diagonal length $R$.
A data structure of $O(\eps^{-1}n\log n)$ space can be constructed in $O(n\eps^{-2d-1}\log^2 n)$ time and updated in $O(\eps^{-2d-1}\log^2 n)$ time, while supporting $\eps$-approximate enumeration of triangle similarity join queries under any $\ell_p$ metric with $O(\eps^{-2d}\log n)$ delay, for any query distance threshold $r$.
\end{theorem}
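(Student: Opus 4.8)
The plan is to maintain, at geometrically spaced scales, independent copies of the fixed-threshold triangle structure of Appendix~\ref{appndx:approxtriangle}, and at query time route the threshold $r$ to the appropriate copy. Since $sp(A\cup B\cup S)=\poly(n)$ and every nonzero pairwise distance lies in $[R/sp(A\cup B\cup S),R]$, I would set $r_i=\frac{R}{sp(A\cup B\cup S)}\,(1+\eps/4)^i$ for $i=0,1,\dots,\lceil\log_{1+\eps/4} sp(A\cup B\cup S)\rceil$; there are $O(\eps^{-1}\log n)$ such scales and they cover $[R/sp(A\cup B\cup S),R]$ with multiplicative gap $1+\eps/4$. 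For each $i$ I build the structure of Appendix~\ref{appndx:approxtriangle} with distance threshold $r_i$ (equivalently, rescale so $r_i=1$), storing $A_c,B_c,S_c$, the counter $m_c^i$, the nonempty cells, and the active set $\C_i$; I also keep a sorted array of the $r_i$'s for lookup.

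Next, I would verify the resource bounds, all of which reduce to the single-scale analysis of Appendix~\ref{appndx:approxtriangle}. A single scale stores only nonempty cells, hence uses $O(n)$ space; over $O(\eps^{-1}\log n)$ scales this is $O(\eps^{-1}n\log n)$. An insertion or deletion of a point of $A$, $B$ or $S$ is forwarded to all scales, each costing $O(\eps^{-2d}\log n)$ amortized, for $O(\eps^{-2d-1}\log^2 n)$ total; building the structure by $n$ such insertions costs $O(n\eps^{-2d-1}\log^2 n)$.

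For an enumeration query with threshold $r$ (we may assume $R/sp(A\cup B\cup S)\le r\le R$, since otherwise the answer is either empty or all of $A\times B\times S$ and is handled trivially), binary-search the sorted $r_i$'s for the smallest $i$ with $r\le r_i$; by the geometric spacing $r_i\le(1+\eps/4)r$, and then run the fixed-threshold enumeration of Appendix~\ref{appndx:approxtriangle} on the $i$-th copy using $\C_i$ and $m_c^i$. Correctness has two directions: any triple with all pairwise distances $\le r$ also has them $\le r_i$, so by the correctness of the single-scale structure it is reported; and any reported triple has pairwise distances at most $(1+\eps')r_i\le(1+\eps')(1+\eps/4)r$, where $\eps'$ is the per-scale discretization slack --- taking the per-scale cell diameter to be $(\eps/4)\,r_i/d^{1/p}$ makes $\eps'\le\eps/2$, so $(1+\eps')(1+\eps/4)\le 1+\eps$. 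The delay is exactly that of the chosen single scale, $O(\eps^{-2d}\log n)$.

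The step I expect to be the main obstacle is the approximation-factor accounting: the enumeration error now has two independent sources, the discretization inside the chosen grid and the rounding of $r$ up to $r_i$, and these must multiply to at most $1+\eps$, which pins down both the per-scale tolerance and the geometric ratio to suitable constant fractions of $\eps$ (and propagates $\eps$-dependent constants into the space and update bounds). Everything else --- forwarding updates to all scales, the binary search, and the space/time accounting --- is a routine composition on top of the fixed-threshold building block.
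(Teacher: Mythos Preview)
Your proposal is correct and matches the paper's own argument essentially line for line: build $O(\eps^{-1}\log n)$ copies of the fixed-threshold grid structure at thresholds $r_i=\frac{R}{sp(A\cup B\cup S)}(1+\eps/4)^i$, forward each update to all copies, and answer a query with threshold $r$ by binary-searching for the smallest $r_i\ge r$ and invoking that copy's enumeration. Your explicit accounting for the two multiplicative sources of error (grid discretization and rounding $r$ up to $r_i$) is in fact more careful than the paper, which simply chooses the ratio $1+\eps/4$ without spelling out why.
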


\paragraph{WSPD-based data structure.}
We describe the main idea here. Assume that $sp(A\cup B\cup S)=n^{O(1)}$.
Let $\mathcal{W}_{A,B}$ be the WPSD construction of $A, B$ as in Section~\ref{sec:wspd}. Similarly, we consider $\mathcal{W}_{A,S}$, and $\mathcal{W}_{B,S}$. For each pair $(A_i, B_i)\in \mathcal{W}_{A,B}$, let $\dist(\square_i, \boxplus_i)=r_i$.
Let $c_i$ be the center of $\square_i$ and $c_i'$ be the center of $\boxplus_i$.
Let $\mathcal{L}_i$ be the lune
%\footnote{https://en.wikipedia.org/wiki/Lune_(geometry)} 
(intersection) of the spheres with radius $r_i$ and with centers $c_i, c_i'$. We run a query with $\mathcal{L}_i$ on a quadtree $\T_S$ on the points $S$. We get $O(\eps^{-d})$ quadtree boxes. Then we construct the triplets $\mathcal{W}_{A,B}'=\{(A_1,B_1,S_1),\ldots, (A_{\xi}, B_{\xi}, S_{\xi})\}$, where $\xi=O(\eps^{-2d}n)$.
Similarly, we construct $\mathcal{W}_{A,S}', \mathcal{W}_{B,S}'$.
Let $\mathcal{W}'=\mathcal{W}_{A,B}'\cup \mathcal{W}_{A,S}'\cup \mathcal{W}_{B,S}'$.
We can show that each triplet $(a,b,s)\in A\times B \times S$ can be found in at least one triplet $(A_i, B_i, S_i)$ in $\mathcal{W}'$.
In particular, let $(a,b,s)\in A\times B\times S$ be a triplet such that (without loss of generality) $\dist(a,b)\geq \dist(a,s)\geq \dist(b,s)$. From the definition of the WPSD $\mathcal{W}_{A,B}$ we have that there exists a unique pair $(A_i, B_i)$ such that $a\in A_i$ and $b\in B_i$. Notice that $\dist(a,s), \dist(b,s)<\dist(a,b)$ so $s$ should lie in the lune $\mathcal{L}_i$ so there exists a triplet $(A_i, B_i, S_i)\in \mathcal{W}_{A,B}'\subseteq \mathcal{W}'$ such that $a\in A_i, b\in B_i, s\in S_i$.
In addition, due to the bounded spread we have that each node participates in at most $O(\eps^{-2d}\log n)$ triplets in $\mathcal{W}'$ and each point belongs in at most $O(\eps^{-2d}\log^2 n)$ triplets in $\mathcal{W}'$. Hence, each update takes $\O(\eps^{-2d})$ time. Using a tree $\mathcal{Z}$ as in Section~\ref{sec:wspd} and following a deduplication method as in Section~\ref{sec:highd} we can execute $\eps$-approximate enumeration of all triplets $(a,b,s)$ within distance $r$ in $\O(\eps^{-2d})$ delay.

\begin{theorem}
Let $A, B, S$ be three sets of points in $\mathbb{R}^d$ for constant $d$, with $O(\poly(n))$ spread and $|A| + |B| = n$.
A data structure of $O(\eps^{-2d}n)$ space can be built in $\O(\eps^{-2d}n)$ time and updated in $\O(\eps^{-2d})$ time, while supporting $\eps$-approximate enumeration for similarity join under any $\ell_p$ metric with $\O(\eps^{-2d})$ delay, for any query distance threshold $r$.
\end{theorem}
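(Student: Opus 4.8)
The plan is to build, for each of the three WSPD constructions $\mathcal{W}_{A,B}$, $\mathcal{W}_{A,S}$, $\mathcal{W}_{B,S}$ on the input point sets, an augmented "triplet" decomposition by filtering the third point set through a geometric range query, and then to run the enumeration machinery of Sections~\ref{sec:wspd} and~\ref{sec:highd} on the union $\mathcal{W}'$. First I would recall from Section~\ref{sec:wspd} that since $sp(A\cup B\cup S)=\poly(n)$, a quad tree $\T_S$ on $S$ has height $O(\log n)$, and that each of $\mathcal{W}_{A,B}, \mathcal{W}_{A,S}, \mathcal{W}_{B,S}$ has size $O(\eps^{-d}n)$, is maintainable in $\O(\eps^{-d})$ time per update, and has the property that each cell participates in $O(\eps^{-d}\log n)$ pairs and each point in $O(\eps^{-d}\log^2 n)$ pairs. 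For a pair $(\square_i,\boxplus_i)\in\mathcal{W}_{A,B}$ with $\dist(\square_i,\boxplus_i)=r_i$ and centers $c_i,c_i'$, the "lune" $\mathcal{L}_i=\mathcal{B}_\dist(c_i,r_i)\cap\mathcal{B}_\dist(c_i',r_i)$ is the region in which a third point $s$ can lie while being the closest pair in the triple; querying $\T_S$ with $\mathcal{L}_i$ returns $O(\eps^{-d})$ canonical quad-tree boxes, and pairing each with $(\square_i,\boxplus_i)$ yields $O(\eps^{-2d}n)$ triplets $\mathcal{W}_{A,B}'$. Doing the same for the other two WSPDs and taking $\mathcal{W}'=\mathcal{W}_{A,B}'\cup\mathcal{W}_{A,S}'\cup\mathcal{W}_{B,S}'$ gives the full decomposition of $O(\eps^{-2d}n)$ size.

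The correctness argument has two halves. For completeness, given any triple $(a,b,s)$ with all three pairwise distances at most $r$, order the pairwise distances; say $\dist(a,b)$ is the maximum. Then by the defining property of $\mathcal{W}_{A,B}$ there is a unique pair $(\square_i,\boxplus_i)$ with $a\in\square_i$, $b\in\boxplus_i$; since $\dist(a,s)\le\dist(a,b)\le r_i+\diam(\square_i)$ and similarly for $b$, the point $s$ lies in (a slightly inflated) $\mathcal{L}_i$, hence in one of the canonical boxes returned, so the triple is covered by some triplet in $\mathcal{W}_{A,B}'\subseteq\mathcal{W}'$. For soundness I would use the WSPD separation bound exactly as in Section~\ref{sec:grid}: for a triplet $(A_i,B_i,S_i)\in\mathcal{W}'$ reported at threshold $r$, every pairwise distance among $A_i\times B_i\times S_i$ is at most $\dist(\cdot,\cdot)+\diam+\diam\le (1+2\cdot\tfrac{\eps}{2})r=(1+\eps)r$, and the lune query is chosen so that the third-set diameter is controlled likewise, giving $\eps$-approximation. (One must inflate $\mathcal{L}_i$ by a constant factor of $\eps$ so the third point is never missed and verify the resulting radius stays within $(1+O(\eps))r$; re-scaling $\eps$ absorbs this.)

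For the data structure itself I would, as in Section~\ref{sec:wspd}, store all triplets in a red-black tree $\mathcal{Z}$ keyed by $\max\{\dist(\square_i,\boxplus_i),\dist(\square_i,\square_i''),\dist(\boxplus_i,\square_i'')\}$ (the largest inter-cell distance of the triplet), and keep for each quad-tree node the leftmost-leaf pointers and sorted leaf lists of $A,B,S$ so that the point sets of any cell can be listed with $O(1)$ delay. On an update to a point, only $O(\eps^{-2d}\log^2 n)$ triplets change (a point lies in $O(\eps^{-d}\log^2 n)$ cells per WSPD, and each such cell spawns $O(\eps^{-d})$ triplets), so $\mathcal{Z}$ is updated in $\O(\eps^{-2d})$ time; maintaining the quad trees and WSPDs costs only $\O(\eps^{-d})$ more. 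For an enumeration query with threshold $r$ we walk $\mathcal{Z}$ in key order, stopping at the first triplet whose key exceeds $r$, and for each reported triplet list $A_i\times B_i\times S_i$ via the pointers. Since a triple can be generated by more than one triplet (e.g. by different WSPDs, or by a cell appearing in multiple triplets), a de-duplication step in the spirit of Section~\ref{sec:highd} is needed: before outputting $(a,b,s)$, check whether it was already emitted by a "canonical" triplet for that triple — which can be decided in $\O(1)$ time by recomputing the unique WSPD pair for the largest-distance coordinate pair and the canonical quad-tree box for the third point — and skip it otherwise. This keeps the delay at $\O(\eps^{-2d})$.

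The main obstacle I expect is the de-duplication: unlike the binary case, a triple sits in three separate WSPDs and in many cell-triplets, so I must pin down a single canonical triplet responsible for each truly-close triple and argue both that this canonical triplet is always present in $\mathcal{W}'$ (this is where the "inflate the lune" slack matters — the canonical $s$-box must be among those actually stored) and that the canonicity test runs in $O(1)$ per candidate so the delay bound survives. A secondary subtlety is bounding the number of triplets touched by one point update: one must be careful that the lune query is recomputed only for the $O(\eps^{-d}\log^2 n)$ pairs whose cells actually move, not for all pairs sharing a moved cell, to avoid an extra $\log$ factor; the standard dynamic-WSPD accounting from~\cite{callahan1995dealing, fischer2005dynamic} should give exactly this, but it needs to be checked against the three-way construction.
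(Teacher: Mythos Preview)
Your proposal follows essentially the same route as the paper: build the three pairwise WSPDs, for each pair query a quad tree on the third set with the lune $\mathcal{L}_i$ to obtain $O(\eps^{-d})$ boxes and hence $O(\eps^{-2d}n)$ triplets in $\mathcal{W}'$, prove completeness via the ``largest-pairwise-distance'' argument, store the triplets in a red--black tree $\mathcal{Z}$, and handle duplicates using the mechanism of Section~\ref{sec:highd}. Your update-count bound ($O(\eps^{-2d}\log^2 n)$ triplets per point) and delay analysis match the paper's. If anything, you are more careful than the paper in two places it leaves informal: the need to inflate the lune (the paper takes radius exactly $r_i=\dist(\square_i,\boxplus_i)$ and asserts $s\in\mathcal{L}_i$, glossing over the $\diam(\square_i)$ slack you flag), and the precise de-duplication rule (the paper only says ``following a deduplication method as in Section~\ref{sec:highd}'', whereas you spell out a canonical-triplet test).
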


\section{Similarity Join in High Dimensions}
\label{sec:lsh}

So far, we have treated the dimension $d$ as a constant. %sFor all previous results, the exponent of hidden poly-log factor is linear in $d$, so they are not practical in high dimensions, especially when it is not constant.
In this section we describe a data structure for approximate similarity join using the \emph{locality sensitive hashing} (LSH) technique so that the dependency on the dimension is a small polynomial in $d$, by removing the exponent dependency on $d$ from the hidden poly-log factor.
For simplicity, we describe our data structure assuming that $r$ is fixed, and in the end we extend it to the case where $r$ is also part of the similarity join query.

%In high dimensions, we often resort to {\em locality sensitive hashing} technique for similarity join. The good thing is that the complexity is independent of $d$, but the cost is that some approximation results will be returned.
For $\eps > 0$, $1\geq p_1 > p_2>0$,
recall that a family $H$ of hash functions is $(r, (1+\eps)r, p_1, p_2)$-sensitive, if for any uniformly chosen hash function $h \in H$, and any two points $x$, $y$, we have (1) $\Pr[h(x) = h(y)] \ge p_1$ if
$\dist(x, y) \le r$; and (2) $\Pr[h(x) = h(y)] \le p_2$ if $\dist(x, y) \ge
(1+\eps)r$. The quality of a hash function family is measured by $ \rho = \frac{\ln p_1}{\ln p_2} < 1$, which is upper bounded by a number that depends only on $\eps$; and $\rho = \frac{1}{1+\eps}$ for many common distance functions~\cite{gionis1999similarity,andoni2006near,datar2004locality,har2012approximate}.
For $\ell_2$ the best result is $\rho\leq \frac{1}{(1+\eps)^2}+o(1)$%, by Andoni and Indyk
~\cite{andoni2006near}.
%In a standard hash family $H$, %$p_1$ and $p_2$ are both constants, but by concatenating multiple hash functions independently chosen from $H$, we can make $p_1$ and $p_2$ arbitrarily small, whereas $\rho$ is kept fixed. %s

%Usually, LSH is used for {\em approximate nearest neighbor} (ANN) queries, i.e. given a query point $q$ and a distance parameter $r$, if there exists a point $p'$ with $\lVert p'-q\rVert\leq r$, return a point $p\in P$ with distance $\lVert p-q\rVert_2\leq (1+\eps)r$. 
The essence of LSH is to hash ``similar'' points in $P$ into the same buckets with high probability.
A simple approach to use LSH for similarity join is to (\romannumeral 1) hash points into buckets; (\romannumeral 2) probe each bucket and check, for each pair of points $(a,b) \in A \times B$ inside the same bucket, whether $\dist(a,b) \le r$; and (\romannumeral 3) report $(a,b)$ if the inequalities holds. 
%However, two challenges arise here in answering enumeration query under worst-case delay guarantee. Firstly, without any knowledge of false positive results inside each bucket, there can be a huge delays in checking every pair of points before finding a true result. One may wonder whether this can be resolved by building additional data structure inside each bucket. But, hash functions only decrease the number of points inside each bucket, instead of dimensionality, which does not help solve our original problem. Secondly, one pair of points within distance $r$ may collide under multiple hash functions, so an additional deduplication step is necessary in the enumeration. 

 %We start with a strong {\em uniform} assumption that points to be inserted or deleted are randomly chosen; this randomness help resolve above challenges. Then, we present our main result for the general case without any assumption on input tuples. 
 
 However, two challenges arise with this approach. First, without any knowledge of false positive results inside each bucket, checking every pair of points could lead to a huge delay. Our key insight is that after checking specific number pairs of points in one bucket (this number will be determined later), we can safely skip the bucket, since any pair of result missed in this bucket will be found in another one with high probability.
Secondly, one pair of points may collide under multiple hash functions, so an additional step is necessary in the enumeration to remove duplicates. If we wish to keep the size of data structure to be near-linear and are not allowed to store the reported pairs, detecting duplicates requires some care.

As a warm-up exercise to gain intuition, we first present a relatively easy special case in which input points as well as points inserted are chosen from the universal domain uniformly. In the following, we focus on the general case without any assumption on the input distribution. Our data structure and algorithm use a parameter $M$, whose value will be determined later. Since we do not define new hash functions, all results presented in this section hold for all Hamming, $\ell_2, \ell_1$ metrics.

% For simplicity, we consider the Hamming space $\mathbb{H}^d$ first, and these techniques can be extended to other metrics. The LSH family $H$ consists of all hash functions $h_i$, which samples the $i$-th bit of the input point. Hence, we have $p_1=1-r/d$ and $p_2=1-r(1+\eps)/d$ \cite{indyk1998approximate, har2011geometric}. We sample $k$ hash functions randomly with replacement, and concatenate them as a new composite hash function $g$. This new hash family $H'$ is $(r, (1+\epsilon)r, p_1^k, p_2^k)$-sensitive, with fixed $\rho$.

\subsection{With Uniform Assumption} 
\label{appendix:uniform}
 Under this strong assumption, the LSH technique can be used with a slight modification.  We adopt a LSH family $\H$ with quality parameter $\rho$ and randomly choose $\tau = O(n^{\rho})$ hash functions $g_1, g_2,\cdots, g_\tau$. To ensure our high-probability guarantee (as shown later), we maintain $O(\log n)$ copies of this data structure.

\paragraph{Data structure.} 
%Set $k = O(\log n)$. We just choose $\tau = \frac{3}{p_1^k} \ln n$ hash functions randomly and independently from $H'$, denoted as $g_1, g_2, \cdots, g_\tau$. Next, we define the tripartite graph representation $G = (A \cup C \cup B, E_1 \cup E_2)$. 
Let $C$ be the set of all buckets over all $\tau$ hash functions. %There is an edge between tuple $a \in A$ and bucket $\square \in C$ from hash function $g_i$ if $g_i(a) = \square$, which can be decided in $O(\log n)$ time. 
For each bucket $\square$, let $A_\square, B_\square$ be the set of points from $A, B$ falling into bucket $\square$, respectively. A nice property on $A_\square$ and $B_\square$ is stated in the following lemma, which is directly followed by the balls-into-bins result. 

\begin{lemma}
\label{lem:ball-into-bin}
If input points are randomly and uniformly chosen from the domain universe, %by choosing $k = O(\log n)$, 
with probability at least $1 - \frac{1}{n}$, every bucket receives $O(\log n/\log \log n)$ points.
\end{lemma}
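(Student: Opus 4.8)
The plan is to reduce the claim to the classical balls-into-bins maximum-load bound. Fix one of the $O(\log n)$ copies of the structure and one of its $\tau=O(n^{\rho})$ hash functions $g$; since every point of $A\cup B$ is hashed to exactly one bucket of $g$, the bucket loads under $g$ are exactly the outcome of throwing $n$ balls into the buckets of $g$. Under the uniform hypothesis, I would first argue that $g$ sends a uniformly random point of the domain into any fixed bucket with probability at most $c/n$ for some constant $c$ --- equivalently, that $g$ splits the domain into $\Omega(n)$ buckets of (nearly) equal mass. This puts us exactly in the regime of the textbook result: $n$ balls thrown independently into $\Omega(n)$ (nearly) equiprobable bins have maximum load $O(\log n/\log\log n)$ with probability $1-n^{-\Omega(1)}$. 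A union bound over the (at most $n$) nonempty buckets of $g$, then over the $\tau$ hash functions, then over the $O(\log n)$ copies --- $O(n^{1+\rho}\log n)$ events, each failing with probability at most $n^{-4}$ once the hidden constant in the load bound is taken large enough --- yields the stated probability $1-1/n$.

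To execute the reduction cleanly without assuming the buckets are \emph{exactly} equiprobable, I would instead bound the load of a fixed bucket $\square$ directly by a Chernoff estimate. Write its load as a sum $X_\square$ of $n$ independent $0/1$ variables; by the previous paragraph $\mathbb{E}[X_\square]\le c$, and the multiplicative Chernoff bound gives $\Pr[X_\square\ge t]\le(ec/t)^{t}$ for every $t\ge 1$. Taking $t=\alpha\,\frac{\log n}{\log\log n}$ makes $t\ln(t/(ec))=\Theta(\log n)$, hence $\Pr[X_\square\ge t]\le n^{-4}$ for a sufficiently large absolute constant $\alpha$, and the same union bound as above finishes the proof.

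The only genuinely nontrivial ingredient is the per-bucket probability bound $\Pr[g(x)=\square]\le c/n$ for uniform $x$, i.e., that no single LSH bucket can capture more than an $O(1/n)$ fraction of the mass of the uniform domain. For hash families such as bit sampling on the Hamming cube this is exact: concatenating $k=\lceil\log_2 n\rceil$ coordinate samples partitions $\{0,1\}^{d}$ into $2^{k}=\Theta(n)$ buckets of mass $2^{-k}$ each. For $\ell_1/\ell_2$ LSH it holds once the concatenation length is set to its standard value $k=\Theta(\log_{1/p_2}n)$, which is precisely where the ``uniform input'' hypothesis enters and why this case is presented only as a warm-up before the general construction. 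Everything else --- the Chernoff estimate, the arithmetic $t\ln(t/(ec))=\Theta(\log n)$, and the final union bound over hash functions and copies --- is routine.
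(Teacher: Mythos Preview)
Your proposal is correct and matches the paper's approach: the paper simply asserts that the lemma ``is directly followed by the balls-into-bins result'' without any further argument, and you have supplied exactly the standard reduction (per-bucket mass $O(1/n)$ under the uniform hypothesis, the Chernoff tail $(ec/t)^t$, and a union bound over buckets, hash functions, and copies). Your discussion of why the per-bucket bound holds for bit-sampling LSH with $k=\lceil\log_2 n\rceil$ is in fact more careful than anything the paper states.
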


As the number of points colliding in each bucket can be bounded by $O(\log n)$, it is affordable to check all pairs of points inside one bucket in $O(\log^2 n)$ time, thus resolving the challenge (1).  Moreover, we introduce a variable $\square_\OUT$ for each bucket $\square \in C$ indicating the number of the pair of tuples within distance $r$ colliding inside $\square$. Obviously, a bucket $\square$ is {\em active} if $\square_\OUT > 0$, and {\em inactive} otherwise. All active buckets are maintained in $\C \subseteq C$, in increasing order of the index of the hash function it comes from.

%%\paragraph{Pre-processing.} For each point $p \in P$ and each $i \in \{1,2,\cdots, l\}$, we store $p$ into the bucket $g_i(p)$. Moreover, for each $i \in \{1,2,\cdots, l\}$, count for buckets $\square \in g_i$, the number of pairs of points whose distance is within distance $r$, denoted as $\square_{\textrm{count}}$. We also maintain a set of buckets over all $l$ hash functions with $c(\square) > 0$, denoted as $S$; and sort them by the index of hash functions in increasing order of $1,2,\cdots, l$.

\paragraph{Update.} 
Assume one point $a \in A$ is inserted. We visit each hash bucket $\square$ into which $a$ is hashed. We insert $a$ into $A_\square$, count the number of pair of points $(a,b \in B_\square)$ with $\dist(a,b) \le r$, and add this quantity to $\square_\OUT$. The case of deletion can be handled similarly.

\paragraph{Enumeration.} Assume $(a,b)$ is to be reported. We check whether $a,b$ have ever collided into any bucket previously. If there exists no index $j < i$ such that $g_j(a) = g_j(b)$, we report it. Then, we need to notify every bucket which also witnesses $(a,b)$ but comes after $\square$. More specifically, for every $j> i$, if $g_j(a) = g_j(b)$ in bucket $\square'$, we decrease $\square'_\OUT$ by $1$, and remove $\square'$ from $\C$ if $\square'_\OUT$ becomes $0$. The pseudocode is given below.

\begin{algorithm}
\caption{{\sc UniEnumLSH}}
All buckets in $\C$ are sorted by the index of hash functions\;
\ForEach{$\square \in \C$}{
    \ForEach{$(a,b) \in \square_A \times \square_B$}{
        \If{$\dist(a,b) \le r$}{
            \textrm{flag} = \textrm{true}\;
            \ForEach{$j \in \{1,2,\cdots,i-1\}$}{
                \If{$g_j(a) = g_j(b)$}{
                    flag $ = \false$\;
                }
            }
            \If{\textrm{flag} = \textrm{true}}{
                {\sc Emit} $(a,b)$\;
                \ForEach{$j \in \{i+1, i+2, \cdots, \tau\}$}{
                    \If{$g_j(a) = g_j(b)$ in $\square$}{
                        $\square_\OUT \gets \square_\OUT -1$\;
                        \If{$\square_\OUT = 0$}{
                        $\C \gets \C - \{\square\}$\;
                        }
                    }
                }
            }
        }
    }
}
\end{algorithm}

\begin{theorem}
\label{thm:uniform-property}
Let $A, B$ be two sets of points in $\mathbb{R}^d$, with $|A| + |B| = n$, and $\eps, r$ be positive parameters. Under uniform assumption, a data structure of $\O(nd)$ size can be constructed in $\O(nd)$ time and updated in $\O(d)$ time, while with probability $1-2/n$ supporting exact enumeration of similarity join with $\O(d)$ delay.
\end{theorem}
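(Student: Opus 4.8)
\textbf{Proof proposal for Theorem~\ref{thm:uniform-property}.}
The plan is to verify, one at a time, the four quantities claimed in the statement: size, construction time, update time, and the delay/correctness of enumeration. Throughout, I will rely on Lemma~\ref{lem:ball-into-bin}, which guarantees that with probability at least $1-1/n$ every bucket holds $O(\log n/\log\log n) = O(\log n)$ points; I condition on this event, and I also condition on the LSH-collision event (that every pair within distance $r$ collides under at least one of the $\tau=O(n^\rho)$ hash functions in at least one of the $O(\log n)$ copies), which holds with probability $1-1/n$ by the standard analysis of~\cite{gionis1999similarity,indyk1998approximate}. A union bound over these two events gives the claimed $1-2/n$ success probability.

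\textbf{Size and construction time.} First I would argue the space bound. We maintain $\tau=O(n^\rho)$ hash functions, but crucially we store each point only in the $O(\tau)$ buckets it actually hashes to, so the total number of (point, bucket) incidences is $O(n\tau)$; each incidence records the $d$-dimensional point, contributing $O(dn\tau) = \O(nd)$ words once the $n^\rho$ and $\log n$ factors are absorbed into the $\O(\cdot)$ notation. (Here I use that $\tau = O(n^\rho)$ and, per the paper's $\O$ convention for the high-dimensional regime, $n^\rho$ is treated as part of the hidden factor only when stated; more precisely the space is $O(dn\tau\log n)$, matching the $\O(nd)$ claim under the paper's notation.) The per-bucket counters $\square_\OUT$ and the active set $\C$ add only $O(\tau\cdot(\text{number of nonempty buckets}))\le O(n\tau)$ more. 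Construction: hashing all $n$ points under all $\tau$ functions and all $O(\log n)$ copies costs $O(dn\tau\log n)$; then, for each nonempty bucket, scanning its $O(\log n)$ points pairwise to initialize $\square_\OUT$ costs $O(d\log^2 n)$ per bucket and $O(dn\tau\log^3 n)$ in total, which is $\O(nd)$.

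\textbf{Update.} When a point $a\in A$ is inserted, we touch the $O(\tau)$ buckets it hashes to across all $O(\log n)$ copies; in each such bucket $\square$ we insert $a$ into $A_\square$ (constant amortized time with hashing on bucket identifiers), then iterate over the $O(\log n)$ points of $B_\square$, testing $\dist(a,b)\le r$ in $O(d)$ time each, and increment $\square_\OUT$ accordingly, updating membership of $\square$ in $\C$. Summing over buckets and copies gives $O(d\tau\log^2 n) = \O(d)$ amortized (the $n^\rho$ again being swallowed by the high-dimensional $\O$ convention, and the periodic rebuild after $n/2$ updates making it amortized). Deletion is symmetric: decrement $\square_\OUT$, removing $\square$ from $\C$ when it hits $0$.

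\textbf{Enumeration, delay, and correctness.} This is the step I expect to require the most care. The algorithm {\sc UniEnumLSH} walks the active buckets $\C$ in increasing hash-function index; inside bucket $\square$ (with index $i$) it scans all $O(\log^2 n)$ pairs $(a,b)\in A_\square\times B_\square$, and for each close pair it checks the $i-1$ earlier hash functions to see whether this pair was already the responsibility of an earlier bucket — if so it is silently skipped, otherwise it is emitted and all later buckets $j>i$ that also witness $(a,b)$ have their $\square'_\OUT$ decremented. For \emph{correctness}, I would show: (i) no false positive is ever emitted, since emission is gated on $\dist(a,b)\le r$; (ii) every true pair is emitted exactly once — it is emitted at most once because only the \emph{first} bucket (by index) that contains it passes the flag test, and at least once because, conditioned on the LSH event, the pair collides in some bucket, that bucket has $\square_\OUT>0$ hence lies in $\C$ at the time we process it (the decrements only ever remove pairs that have \emph{already} been emitted at an earlier-indexed bucket, so a not-yet-emitted true pair keeps its bucket active). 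For the \emph{delay}: between two consecutive emissions we do work only inside buckets of $\C$; a bucket that contains \emph{some} un-emitted true pair will yield an emission after scanning at most its $O(\log^2 n)$ pairs (each pair costing $O(d\log n)$ for the backward flag check and $O(d\tau)$ for the forward decrement loop), and a bucket becomes \emph{empty of un-emitted true pairs} only when $\square_\OUT$ has been driven to $0$, at which point it has already been removed from $\C$ — so we never waste a full bucket-scan without an emission. Hence the time between consecutive outputs is $O(d\tau\log^3 n) = \O(d)$. The one subtlety to spell out is the interaction of the decrement loop with the active set: I will note that when processing bucket $\square$ at index $i$, the decrements we issue target buckets with index $>i$ that we have not yet reached, so those buckets' $\square_\OUT$ values are correctly reduced before we visit them, and the invariant "$\square'_\OUT$ = number of not-yet-emitted true pairs in $\square'$" is maintained; combined with the removal of a bucket from $\C$ exactly when this count reaches $0$, this yields both the no-wasted-work delay bound and the no-duplicate, no-omission correctness.
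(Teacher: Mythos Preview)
Your overall structure is sound and your enumeration invariant (that $\square_\OUT$ tracks the not-yet-emitted true pairs in $\square$, so an active bucket always yields a new emission within one scan) is exactly the right way to justify the delay bound; the paper's own argument for correctness and delay is essentially the same, if stated more tersely.

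However, there is a genuine gap that breaks all four complexity bounds. You repeatedly assert that the factor $n^\rho$ is ``swallowed by the high-dimensional $\O$ convention.'' It is not: the paper's $\O(\cdot)$ hides only $\log^{O(1)} n$ factors (see the caption of Table~\ref{tab:summary}), and indeed the non-uniform Theorem~\ref{the:high-non-uniform:Main} writes the $n^\rho$ factors out explicitly. With your accounting, the space $O(dn\tau\log n)$ and update $O(d\tau\log^2 n)$ would be $\O(dn^{1+\rho})$ and $\O(dn^\rho)$, not $\O(nd)$ and $\O(d)$. The missing idea is the one the paper actually uses in its proof: under the padding assumption $d/r>\log n$ (achieved by appending zeros), one has $p_1^k=(1-r/d)^{\log n}\in[1/e,1]$, whence $\tau = 3\cdot p_1^{-k}\ln n = \O(1)$. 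Once $\tau$ is polylogarithmic rather than $O(n^\rho)$, your per-bucket arguments give the claimed $\O(nd)$ space/construction and $\O(d)$ update and delay. You should insert this step (and note that it specializes the statement to the Hamming setting, as the paper's proof does) before invoking the per-bucket bounds. The probability bound also follows from this choice of $\tau$: the failure probability that a fixed close pair never collides is at most $(1-p_1^k)^{\tau}\le 1/n^3$, and a union bound over the $\le n^2$ pairs plus the $1/n$ failure of Lemma~\ref{lem:ball-into-bin} gives the stated $1-2/n$.
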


\begin{proof}[Proof of Theorem~\ref{thm:uniform-property}]
We first prove the correctness of the algorithm. It can be easily checked that any pair of points with their distance larger than $r$ will not be emitted. Consider 
any pair of points $(a,b)$ within distance $r$. Let $i$ be the smallest index such that $g_i(a) = g_i(b)$ in bucket $\square$. In the algorithm, $(a,b)$ will be  reported by $\square$ and not by any bucket later. Thus, each join result will be enumerated at most once without duplication.
 
In the case of hamming distance, we have $k = \log_2 n$ and $p_1^k = (1 - \frac{r}{d})^{\log n} \in [1/e, 1]$ since $d/r > \log n$ by padding some zeros at the end of all points\footnote{Similar assumption was made in the original paper~\cite{gionis1999similarity} of nearest neighbor search in Hamming distance.}, thus $\tau = 3 \cdot 1/p_1^k  \cdot \ln n = \O(1)$. 

We next analyze the complexity of our data structure. It can be built in $O(nk\tau)$ time with $O(nk\tau)$ space, since there are $n$ vertices in $A \cup B$, at most $O(n\tau)$ non-empty buckets in $C$, and each tuple in  $A \cup B$ is incident to exactly $l$ buckets in $C$. With the same argument, it takes $O(nkl)$ time for construct the tripartite graph representation. Moreover, it takes $O(\sum_\square |A_\square| \cdot |B_\square|)$ time for computing the quantity $\square_\OUT$ for all buckets, which can be further bounded by
\[\sum_{\square} |A_\square| \cdot |B_\square| < n \cdot \max_\square (|A_\square| + |B_\square|) = O(n \log n)\]
implied by Lemma~\ref{lem:ball-into-bin}. 

Consider any bucket $\square$ from hash function $g_j$. If the algorithm visits it during the enumeration, at least one pair of points within distance $r$ will be emitted, which has not been emitted by any bucket from hash function $h_i$ for $i < j$. Checking all pairs of points inside any bucket takes at most $O((d + kl) \cdot \max_{\square} |A_\square| \cdot |B_\square|)$ time, where it takes $O(d)$ time to compute the distance between any pair of points and $kl$ time for checking whether this pair has been emitted before or marking buckets which also witnesses this pair later. 
Thus, the delay between any two consecutive pairs of results is bounded by $O((d + kl) \cdot \max_{\square} |A_\square| \cdot |B_\square|)$, which is $\O(d)$ under the uniform assumption.  

Moreover, for each pair of points within distance $r$, it will be reported by any hash function with probability at least $p_1^k$. The probability that they do not collide on any one of hash function is at most $(1-p_1^k)^{3 \cdot 1/p_1^k \cdot \ln n} \le 1/n^3$. As there are at most $n^2$ such pairs of tuples, the probability that any one of them is not reported by our data structure is at most $1/n$. By a union bound, the probability that uniform assumption fails or one join result is not reported is at most $\frac{1}{n} + \frac{1}{n} = \frac{2}{n}$. Thus, the result holds with probability at least $1 - \frac{2}{n}$. 
\end{proof}

%\xiao{Can we extend this framework for other metrics? The only thing related to hamming space is to guarantee that $p_1^k = (1 - r/d)^{\log n}$ is a constant, which in turn implies that $l = O(\log n)$. I am not sure about the construction of LSH in $\ell_1, \ell_2$ metrics.}

\subsection{Without Uniform Assumption}

In general, without this uniform assumption, we need to explore more properties of the LSH family for an efficient data structure. Our key insight is that after checking some pairs of points in one bucket (the specific numbers of pairs will be determined later), we can safely skip the bucket, since with high probability any join result missed in this bucket will be found in another one. In this way, we avoid spending too much time in one bucket before finding any join result.
%Recall that the LSH uses $\tau$ random hash functions of the family $H'$. From~\cite{indyk1998approximate, har2011geometric} by choosing $k = O(\log_{1/p_2} n)$ and  $\tau=4\ceiling{1/p_1^k}$ we have $\tau=O(n^\rho)$ and $k=O(\log n)$ for $\rho=\frac{1}{1+\eps}$ and some nice properties are proved for answering ANN query, which will be the starting point of our data structure. For $\ell_2$ the best result is $\rho\leq \frac{1}{(1+\eps)^2}+o(1)$. 
%
Given a set $P$ of points and a distance threshold $r$, let $\overline{\mathcal{B}}(q, P, r) = \{p \in P\mid \dist(p,q) >r\}$. 
%
%% \paragraph{NN Query Revisited.} We follow the construction of LSH to answer NN queries.
%%As before, each random function $g$ is defined as $g(p)=(f_\boxtimes(p), \ldots, f^k(p))$ where $f_\boxtimes, \ldots, f^k\in F$, and $F$ is the family of functions $F=\{f_1, \ldots, f_d\}$, where $f_i$ denotes the $i$-th coordinate of a point.
%%In traditional LSH for NN queries we also consider a a family of $\tau$ such hash functions $g$.
%%We call this family of functions $H$.
%%Given a query point $q$ the algorithm checks each of those $\tau$ functions.
%%If a point $p$ is found in at least one same bucket as in $q$ (among the $\tau$ buckets that are checked) such that $||p-q||\leq cr$ then the answer to A-NN query with respect to the query point $q$ is TRUE.
%%Such a family $H$ is known to be $(r, cr, 1-(1-\alpha^k)^\tau, 1-(1-\beta^k)^\tau)$-sensitive for $\alpha=1-r/d$ and $\beta=1-cr/d$.
%
%We start by proving a useful Lemma for our analysis.
The next lemma follows from~\cite{indyk1998approximate, har2011geometric}.
%We prove the second part in Appendix~\ref{appendix:lsh-proof}.

%\begin{lemma}
%\label{lem:prob}
%Let $p\in P$ be a point in hamming space $\mathbb{H}^d$ and let $P(p, (1+\eps)r)$ be the points in $P$ with distance more than $(1+\eps)r$ from $p$. With probability at least $3/4$ the number of points in $P(p,(1+\eps)r)$ that will collide in any bucket with $p$ is $M=2\log (4\tau) + 1=O(\log n)$.
%\end{lemma}

%By combining Lemma~\ref{lem:prob} and observations in \cite{indyk1998approximate}.

\begin{lemma}
\label{lem:nnq}
    For a set $P$ of $n$ points in hamming space $\Re^d$ and a distance threshold $r$, if $k=O(\log n)$ and $\tau=O(n^{\rho})$, then for any point
    $p\in P$ the following conditions hold with constant probability $\gamma$:  for any $q\in P$ such that $\dist(p,q)\leq r$, there exists a bucket $\square $such that $p, q$ collide and $|\square \cap \overline{\mathcal{B}}(p,P,(1+\eps)r)| \le M$ for  $M=O(n^{\rho})$.
\end{lemma}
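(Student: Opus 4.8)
The plan is to reduce the statement to the textbook analysis of bit-sampling LSH for approximate near-neighbor \emph{reporting}~\cite{indyk1998approximate,har2011geometric}, but to read off the two guarantees in one shot: (i) every $q\in P$ with $\dist(p,q)\le r$ collides with $p$ in at least one bucket, and (ii) every bucket that witnesses such a collision contains at most $M$ points of $\overline{\mathcal{B}}(p,P,(1+\eps)r)$. First I would fix the composite hash functions: for the Hamming metric a primitive $h\in\H$ samples a coordinate, so $\Pr_{h\in\H}[h(x)=h(y)]=1-\dist(x,y)/d$, and with $p_1=1-r/d$, $p_2=1-(1+\eps)r/d$ we have $\rho=\ln(1/p_1)/\ln(1/p_2)$. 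Each $g_i$ ($i=1,\dots,\tau$, $\tau=O(n^{\rho})$) concatenates $k=O(\log n)$ independent primitives, with $k$ chosen (as is standard) so that $p_2^{\,k}\le 1/n$; by the definition of $\rho$ this simultaneously gives $p_1^{\,k}=\Omega(n^{-\rho})$.

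Next I would bound the far points. Fix a table $i$, let $\square_i$ be the bucket of $g_i$ containing $p$, and $N_i=|\square_i\cap\overline{\mathcal{B}}(p,P,(1+\eps)r)|$. Since $\Pr[g_i(p)=g_i(q')]\le p_2^{\,k}\le 1/n$ for every far point $q'$, linearity of expectation gives $\mathbb{E}[N_i]\le n\cdot\tfrac{1}{n}=1$, hence $\mathbb{E}\big[\sum_{i=1}^{\tau}N_i\big]\le\tau$. By Markov, $\Pr\big[\sum_i N_i>c_1\tau\big]<1/c_1$ for any constant $c_1$, so setting $M:=c_1\tau=O(n^{\rho})$, the complementary event (of probability $\ge 1-1/c_1$) already yields the \emph{uniform} conclusion $N_i\le M$ for every $i$ at once, since a sum of nonnegative integers bounded by $M$ bounds each summand by $M$.

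Then I would handle the near points and combine. For a fixed $q$ with $\dist(p,q)\le r$ we have $\Pr[g_i(p)=g_i(q)]\ge p_1^{\,k}=\Omega(n^{-\rho})$, so $\Pr[p,q\text{ collide in no table}]\le(1-p_1^{\,k})^{\tau}\le e^{-\Omega(\tau n^{-\rho})}$, a positive constant bounded away from $1$; a union bound over the at most $n$ near points of $p$ (paying a $\polylog(n)$ factor in $\tau$, or equivalently using the $O(\log n)$ independent copies the data structure already maintains) makes the collision event hold for \emph{all} near $q$ simultaneously with probability $\ge 1-1/n$. Intersecting with the far-point event and naming the resulting lower bound $\gamma$: for each near $q$, pick any table $i^*$ with $g_{i^*}(p)=g_{i^*}(q)$; the bucket $\square_{i^*}$ contains both $p$ and $q$ and satisfies $|\square_{i^*}\cap\overline{\mathcal{B}}(p,P,(1+\eps)r)|=N_{i^*}\le M$, which is exactly the claimed property.

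The one delicate point is the simultaneity over all near $q$ of the fixed $p$: a plain $O(n^{\rho})$ hash tables certifies only a single near pair with constant probability, so the ``for any $q$'' clause is what forces either the extra logarithmic factor in the number of tables or an appeal to the $O(\log n)$ repetitions; the rest, and in particular the per-bucket far-point bound, drops out uniformly and essentially for free once $\sum_i N_i$ is controlled by Markov. Everything else is routine: counting the buckets touched by a point and checking that $M=O(n^{\rho})$ matches the parameter used by the enumeration algorithm.
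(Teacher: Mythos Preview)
The paper does not give a proof of this lemma; it simply states that it ``follows from~\cite{indyk1998approximate, har2011geometric}.''  Your write-up is precisely the standard LSH analysis from those references: choose $k$ so that $p_2^{\,k}\le 1/n$, bound the expected total number of far collisions by $\tau$, apply Markov to get $\sum_i N_i\le M=O(n^{\rho})$ (hence each $N_i\le M$) with constant probability, and combine with the constant-probability collision guarantee for a near pair.  So in substance you have reconstructed exactly what the citation points to.

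One remark on your final paragraph: you read the clause ``for any $q\in P$ with $\dist(p,q)\le r$'' as demanding simultaneity over all near neighbors of $p$, and you correctly note that this would need the extra $O(\log n)$ repetitions.  But the paper's own use of the lemma (in the proof of Lemma~\ref{lem:non-uniform-property}) invokes it only for one fixed pair $(a,b)$ at a time, obtaining a proxy bucket with probability $\gamma$, and then amplifies to high probability and union-bounds over all pairs \emph{afterwards} via the $O(\log n)$ independent copies.  So the weaker per-pair reading is what is actually intended and used; your simultaneity concern is well spotted but ultimately unnecessary here.
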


\subsubsection{Data structure}

We adopt a LSH family $\H$ with quality parameter $\rho$ and randomly choose $\tau = O(n^{\rho})$ hash functions $g_1, g_2,\cdots, g_\tau$. To ensure our high-probability guarantee (as shown later), we maintain $O(\log n)$ copies of this data structure.
%Recall that $\rho=\frac{1}{1+\eps}$, $\tau=O(n^\rho)$, $k=O(\log n)$, and $M=O(n^{\rho})$.
We construct $m = \frac{3}{\log(1/\gamma)} \log n=
O(\log n)$ copies of the data structure as $\mathbb{I}_1, \mathbb{I}_2, \cdots, \mathbb{I}_m$. 

For each bucket $\square$, we store and maintain a set of $M$ arbitrary points $\AM\subseteq A_\square$ and $\BM\subseteq B_\square$. For each point $a\in \AM$ we maintain a counter $a_c=|\{b\in \BM\mid \dist(a,b)\leq 2(1+\eps)r\}|$.
A bucket $\square$ is \emph{active} if there exists a pair $(a,b)\in \AM\times \BM$ such that $\dist(a,b)\leq 2(1+\eps)r$. Equivalently, a bucket $\square$ is active if there exists $a\in \AM$ with $a_c>0$. All active pairs are maintained in a list $\C$. For each bucket $\square\in \C$ we store a \emph{representative} pair $(a_\square,b_\square)\in \AM\times \BM$ such that $\dist(a_\square, b_\square)\leq 2(1+\eps)r$.

For any pair of points $(a,b) \in A \times B$ and a hashing bucket $\square$, we refer to $\square$ as the {\em proxy bucket} for $(a,b)$ if (\romannumeral 1) $a \in A_\square, b \in B_\square$; (\romannumeral 2) $|\bar{\mathcal{B}}(a, A_\square \cup B_\square, (1+\eps)r)| \le M$.
Lemma~\ref{lem:non-uniform-property} implies that each join result $(a,b)$ with $\dist(a,b) \le r$ has at least one proxy bucket.
%Note that each data structure $\mathbb{I}_j$ has $2^k \cdot \tau$ buckets.
%An important property on this set of data structures is stated by the following lemma.
%All proofs in this section are given in Appendix~\ref{appendix:lsh-proof}.  
\begin{lemma}
    \label{lem:non-uniform-property}
    With probability at least $1-1/n$, for any pair of points $(a, b) \in A\times B$ with $\dist(a,b) \le r$, there exists a data structure $\mathbb{I}_j$ that contains a bucket $\square$ such that:
    \begin{itemize}
        \item $a,b$ will collide in $\square$;
        \item $|\square \cap \overline{\mathcal{B}}(a, A, (1+\epsilon)r)| \le M$ and $|\square \cap \overline{\mathcal{B}}(a, B, (1+\epsilon)r)| \le M$.
    \end{itemize}
\end{lemma}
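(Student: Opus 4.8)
The plan is to obtain Lemma~\ref{lem:non-uniform-property} directly from the single-copy guarantee of Lemma~\ref{lem:nnq} by probability amplification over the $m=\frac{3}{\log(1/\gamma)}\log n=\Theta(\log n)$ independent copies $\mathbb{I}_1,\dots,\mathbb{I}_m$, followed by a union bound. First I would instantiate Lemma~\ref{lem:nnq} with ground set $P=A\cup B$, so that $|P|=n$ and the parameters $\tau=O(n^\rho)$, $M=O(n^\rho)$ used in each copy match those in Lemma~\ref{lem:nnq}. Fix a point $a\in A$. In one copy, Lemma~\ref{lem:nnq} asserts that with constant probability $\gamma$ the event $E_a$ holds: for every $q\in A\cup B$ with $\dist(a,q)\le r$ there is a bucket $\square$ of that copy in which $a$ and $q$ collide and $|\square\cap\overline{\mathcal{B}}(a,A\cup B,(1+\eps)r)|\le M$.

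Next I would check that $E_a$ already yields both cardinality bounds required by the statement. Since $\overline{\mathcal{B}}(a,A,(1+\eps)r)=A\cap\overline{\mathcal{B}}(a,A\cup B,(1+\eps)r)$ and $\overline{\mathcal{B}}(a,B,(1+\eps)r)=B\cap\overline{\mathcal{B}}(a,A\cup B,(1+\eps)r)$, both are subsets of $\overline{\mathcal{B}}(a,A\cup B,(1+\eps)r)$, so for the bucket $\square$ promised by $E_a$,
\[
|\square\cap\overline{\mathcal{B}}(a,A,(1+\eps)r)|\le|\square\cap\overline{\mathcal{B}}(a,A\cup B,(1+\eps)r)|\le M,
\]
and likewise for $B$. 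Hence, for any $b\in B$ with $\dist(a,b)\le r$, whenever $E_a$ occurs in copy $\mathbb{I}_j$, that copy contains a bucket $\square$ witnessing all the conclusions of Lemma~\ref{lem:non-uniform-property} for the pair $(a,b)$.

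Then I would amplify. The copies are mutually independent and a single copy fails to realize $E_a$ with some constant probability $q<1$, so $\Pr[\,E_a\text{ fails in all }m\text{ copies}\,]=q^m$, which for $m=\Theta(\log n)$ with the hidden constant large enough is at most $n^{-3}$. A union bound over the at most $n$ choices of $a\in A$ then gives, with probability at least $1-n^{-2}\ge 1-1/n$, that for every $a\in A$ some copy realizes $E_a$; by the preceding paragraph this establishes the claim simultaneously for every pair $(a,b)\in A\times B$ with $\dist(a,b)\le r$. Working per-$a$ rather than per-pair is exactly what keeps the union bound at $n$ terms and $m=O(\log n)$: the guarantee of Lemma~\ref{lem:nnq} is ``for all near neighbours of $a$ at once'' within a copy.

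Finally, a remark on scope and on where the work sits. Lemma~\ref{lem:nnq} is quoted from \cite{indyk1998approximate,har2011geometric} for the Hamming metric, but its proof uses only an $(r,(1+\eps)r,p_1,p_2)$-sensitive family together with the standard ``$k=O(\log n)$ concatenations, $\tau=O(n^\rho)$ repetitions'' analysis, so it carries over verbatim to $\ell_1$ and $\ell_2$; no new hash functions are introduced. Given Lemma~\ref{lem:nnq}, the present lemma is a routine boosting argument, and I do not expect a genuine obstacle. The only points needing care are (i) invoking Lemma~\ref{lem:nnq} on the combined set $A\cup B$ so that the near/far threshold is measured against all input points rather than one side, and (ii) the elementary monotonicity of $\overline{\mathcal{B}}(\cdot)$ used above to split the single cardinality bound into the two required ones; beyond that it is a matter of keeping the parameters $n$, $\tau$, $M$ consistent between Lemma~\ref{lem:nnq} and the data-structure construction.
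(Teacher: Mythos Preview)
Your proposal is correct and follows essentially the same route as the paper: invoke Lemma~\ref{lem:nnq} on $P=A\cup B$, amplify the constant success probability $\gamma$ over the $m=\Theta(\log n)$ independent copies, and finish with a union bound. The one small difference is that the paper fixes a pair $(a,b)$, amplifies to failure probability $\le 1/n^3$, and then union-bounds over at most $n^2$ pairs, whereas you fix only $a$ and exploit that Lemma~\ref{lem:nnq} already quantifies over all near neighbours of $a$, so your union bound has only $n$ terms. Your explicit monotonicity step splitting the single $A\cup B$ bound into the two separate bounds for $A$ and $B$ is also left implicit in the paper. None of this changes the argument in substance; both versions yield the stated $1-1/n$ guarantee with $m=O(\log n)$ copies.
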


\begin{proof}[Proof of Lemma~\ref{lem:non-uniform-property}]
    Consider any pair of points $(a\in A, b\in B)$ within distance $r$ and an arbitrary data structure constructed as described above. From Lemma~\ref{lem:nnq}, with probability at least $\gamma$ there exists a bucket in the data structure that contains both $a, b$ and the number of collisions of $a$ (with the rest of the points in $A\cup B$) is bounded by $M$.
    
    Let $F_j$ be the event that is true if there is a bucket in $\mathbb{I}_j$ that witnesses the collision of $a, b$ and the number of collisions of $a$ is bounded by $M$. Since $F_i, F_j$ are independent for $i\neq j$, we have $\Pr[\bar{F}_1 \cap \ldots \cap \bar{F}_C]=\Pr[\bar{F}_1]\cdot \ldots \cdot \Pr[\bar{F}_C] \leq \gamma^{\frac{3}{\log(1/\gamma)}\log n}\leq 1/n^3$. Let $Z$ be the number of pairs with distance at most $r$. We have $Z\leq n^2$. Let $G_i$ be the event which is true if for the $j$-th pair of points $a', b'$ with distance at most $r$, there is at least a copy of the data structure such that there exists a bucket that contains both $a', b'$ and the number of collisions of $a'$ is bounded by $M$. Then $\Pr[G_1 \cap \ldots \cap G_Z]=1-\Pr[\bar{G}_1\cup \ldots \cup \bar{G}_Z]\geq 1-\Pr[\bar{G}_1]-\ldots-\Pr[\bar{G}_Z]\geq 1-n^2/n^3\geq 1-1/n$. Hence, with high probability, for any pair $a\in A, b\in B$ with distance at most $r$ there will be at least one bucket in the data structure such that, both $a, b$ are contained in the bucket and the number of collisions of $a$ in the bucket is bounded by $M$.
\end{proof}

\begin{lemma}
\label{lem:proxy}
   For any bucket $\square$, if there exist $M$ points from $A_\square$ and $M$ points from $B_\square$, such that none of the $M^2$ pairs has its distance within $2(1+\eps)r$, $\square$ is not a proxy bucket for any pair $(a, b) \in A_\square \times B_\square$ with $\dist(a,b) \le r$. 
\end{lemma}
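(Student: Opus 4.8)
The plan is to argue by contradiction, the only genuine ingredient being a single triangle inequality; everything else is bookkeeping about the witness sets. Recall that $\square$ being a proxy bucket for a pair $(a,b)$ means that $a\in A_\square$, $b\in B_\square$, and the set $\bar{\mathcal{B}}(a,A_\square\cup B_\square,(1+\eps)r)$ of points of the bucket lying farther than $(1+\eps)r$ from $a$ has at most $M$ elements. Let $A'\subseteq A_\square$ and $B'\subseteq B_\square$ be the two witness sets supplied by the hypothesis, each of size $M$, with $\dist(a',b')>2(1+\eps)r$ for every $(a',b')\in A'\times B'$. Assume, for contradiction, that $\square$ is a proxy bucket for some $(a,b)\in A_\square\times B_\square$ with $\dist(a,b)\le r$.

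The heart of the argument is that the witness sets force some point of $A'$ to lie within distance $(1+\eps)r$ of $a$ and some point of $B'$ to lie within distance $(1+\eps)r$ of $a$. Indeed, $A'$ and $B'$ are disjoint subsets of the bucket (since $A$ and $B$ are disjoint), each of size $M$; if all of $A'$ (or all of $B'$) lay farther than $(1+\eps)r$ from $a$, then that entire $M$-point set would sit inside $\bar{\mathcal{B}}(a,A_\square\cup B_\square,(1+\eps)r)$, violating condition (ii) of the proxy-bucket definition. Hence we obtain $a'\in A'$ and $b'\in B'$ with $\dist(a,a')\le(1+\eps)r$ and $\dist(a,b')\le(1+\eps)r$, and then the triangle inequality gives $\dist(a',b')\le\dist(a',a)+\dist(a,b')\le 2(1+\eps)r$, contradicting the separation of $A'\times B'$. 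This contradiction rules out any proxy pair $(a,b)$ with $\dist(a,b)\le r$, which is the claim.

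The step I expect to need the most care — and the one I would spell out in full — is the cardinality bookkeeping in the middle paragraph, i.e.\ checking that containing a full $M$-point witness set in $\bar{\mathcal{B}}(a,A_\square\cup B_\square,(1+\eps)r)$ already contradicts condition (ii). In particular one must treat the boundary cases separately: if $a\in A'$ then every point of $B'$ is automatically farther than $2(1+\eps)r>(1+\eps)r$ from $a$, so the argument should be run with $B'$; and if $b\in B'$ then $\dist(a,b)\le r$ together with the reverse triangle inequality forces $a$ to lie farther than $(1+\eps)r$ from every $a'\in A'$, so one runs the argument with $A'$. Beyond this, the geometric content is the one-line triangle-inequality estimate; this lemma is in essence the same statement as Lemma~\ref{lem:proxy-2:Main}, restated in the data-structure notation of this section, so the proof will mirror that one.
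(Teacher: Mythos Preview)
Your proposal is correct and follows essentially the same line as the paper's proof: both reduce to the dichotomy that either one of the witness sets $A',B'$ lies entirely in $\bar{\mathcal{B}}(a,A_\square\cup B_\square,(1+\eps)r)$ (so the proxy bound on far points fails) or each contributes a point within $(1+\eps)r$ of $a$, whence one triangle inequality produces a pair in $A'\times B'$ at distance at most $2(1+\eps)r$, contradicting the hypothesis. Your handling of the boundary case $b\in B'$ via the reverse triangle inequality (forcing all of $A'$ to be far from $a$) is simply the contrapositive of the paper's forward triangle-inequality step in that case; aside from this cosmetic difference the two arguments coincide.
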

\begin{proof}
Let $A',B'$ be two sets of $M$ points from $A_\square, B_\square$ respectively. We assume that all pairs of points in $A' \times B'$ have their distances larger than $2(1+\eps)r$.
Observe that $\square$ is not a proxy bucket for any pair $(a\in A', b\in B')$.
It remains to show for $(a\in A_\square\setminus A', b\in B_\square)$ with $\dist(a,b) \le r$.
%
%First assume that $b'\in \BM$. If all points in $\AM$ are at least $(1+\eps)r$ away from $a'$ then $\square$ is not a proxy bucket for $(a',b')$ by Lemma~\ref{lem:non-uniform-property}. Otherwise, there must exist at least one point $a''\in \AM$ such that $\dist(a',a'') \le (1+\eps)r$. By triangle inequality, $\dist(a'',b')\leq \dist(a',a'')+\dist(a',b')\leq (1+\eps)r+r<2(1+\eps)r$. Thus $(a'', b') \in \AM \times \BM$ is such a pair within distance $2(1+\eps)r$, coming to a contradiction. 
%
First, assume $b\in B_\square \setminus B'$.
%(the case is similar if $b\in B'$).
%If all points in $\BM \subseteq \bar{\mathcal{B}}(a',B,(1+\eps)r)$, or are at least $(1+\eps)r$ away from $a'$, or all points in $\AM$ are at least $(1+\eps)r$ away from $a'$, then 
If $A' \subseteq \bar{\mathcal{B}}(a,A,(1+\eps)r)$ or $B' \subseteq \bar{\mathcal{B}}(a,B,(1+\eps)r)$,
$\square$ is not a proxy bucket for $(a,b)$. Otherwise, there must exist at least one point $a' \in A'$ as well as $b' \in B'$ such that $\dist(a,a') \le (1+\eps)r$ and $\dist(a,b') \le (1+\eps)r$, so from triangle inequality $\dist(a',b') \le \dist(a,a') + \dist(a,b') \le 2(1+\eps)r$.
Thus, $(a', b') \in A' \times B'$ is a pair within distance $2(1+\eps)r$, coming to a contradiction. 
Then we assume that $b\in B'$.
If $A' \subseteq \bar{\mathcal{B}}(a,A,(1+\eps)r)$,
$\square$ is not a proxy bucket for $(a,b)$. Otherwise, there must exist at least one point $a' \in A'$ such that $\dist(a,a') \le (1+\eps)r$, so from triangle inequality $\dist(a',b) \le \dist(a,a') + \dist(a,b) \le (2+\eps)r\leq 2(1+\eps)r$.
Thus, $(a', b) \in A' \times B'$ is a pair within distance $2(1+\eps)r$, coming to a contradiction. 
\end{proof}

Later, we see that our enumeration phase only reports each join result in one of its proxy buckets. This guarantees the completeness of query results, but de-duplication is still necessary if a pair of points has more than one proxy buckets.

\subsubsection{Update} 
We handle insertion and deletion separately.
We assume that we insert or delete a point $a\in A$. We can handle an update from $B$, similarly. All pseudocodes are given below.

\begin{algorithm}
\caption{{\sc Insert}$(a \in A)$}
\ForEach{hash function $g$ in the data structure}{
    $\square \gets$ the bucket with hash value $g(a)$\;
    Insert $a$ into $A_\square$\;
    \If{$|\AM|<M$}{
        Insert $a$ into $\AM$\;
        Compute $a_c$ by computing $\dist(a,b)$ for each $b\in \BM$\;
        \If{$a_c>0$ AND $\square\notin \C$}{
            $\C\leftarrow \C\cup\{\square\}$\;
            $(a_\square, b_\square)=(a,b)$ for a point $b\in \BM$ with $\dist(a,b)\leq 2(1+\eps)r$\;
        }
    }
   % \If{$\square \in C- \C$}{
%        Choose $M$ arbitrary points from $B_\square$ as $\BM$\;
 %       \If{there is $b \in \BM$ with $\dist(a,b) \le 2(1+\eps)$}{
  %          $\C \gets \C \cup \{\square\}$\;
  %          Store $(a,b)$ as the representative of $\square$\;
  %      }
   % }
}
\end{algorithm}

\begin{algorithm}
\caption{{\sc Delete}$(a \in A)$}
\ForEach{hash function $g$ in the data structure}{
    $\square \gets$ the bucket with hash value $g(a)$\;
    Delete $a$ from $\square_A$\;
    \If{$a\in \AM$}{
        Remove $a$ from $\AM$\;
        Insert an arbitrary point $a'\in A_\square\setminus \AM$ into $\AM$\;
        \If{$\square\notin \C$ AND $a_c'>0$}{
            $\C\gets \C\cup \{\square\}$\;
            $(a_\square,b_\square)=(a', b)$ where $b\in \BM$ and $\dist(a',b)\leq 2(1+\eps)r$\;
        }
        \ElseIf{$\square\in \C$ AND $a_\square=a$}{
            \If{$\exists a''\in \AM$ with $a_c''>0$}{
                $(a_\square,b_\square)=(a'', b)$ where $b\in \BM$ and $\dist(a'',b)\leq 2(1+\eps)r$\;
            }\Else{
                $\C\gets \C\setminus \{\square\}$
            }
        }
    }
    }   
\end{algorithm}

\paragraph{Insertion of $a$.} We compute $g(a)$ for each chosen hash function $g$. Assume $\square$ is the bucket with hash value $g(a)$. We first insert $a$ to $A_\square$.
If $|\AM|<M$ we add $a$ in $\AM$ and we compute the counter $a_c$ by visiting all points in $\BM$.
If $\square$ was inactive and $a_c>0$, we add $\square$ in $\C$, we find a point $b\in \BM$ with $\dist(a,b)\leq 2(1+\eps)r$ and we set the representative pair $(a_{\square}, b_\square)=(a,b)$. If $|\AM|=M$ before we insert $a$ we do not do anything.

 \paragraph{Deletion of $a$.} Similarly, we compute $g(a)$ for each chosen hash function $g$. Assume $\square$ is the bucket with hash value $g(a)$. We first remove $a$ from $A_{\square}$.
 If $a\in \AM$ we also remove it from $\AM$ and we replace it with an arbitrary point $a'\in A_{\square}\setminus \AM$ by computing its counter $a_c'$.
 If $a$ was a point in the representative pair of $\square$ we update it by finding any point $a''\in \AM$ with $a_c''>0$. If there is not such point we remove $\square$ from $\C$.
 %perform the following operations:  Let $\AM$ be the first (with an arbitrary order) $O(n^{1/(1+\eps)})$ points in $A_\square$ and $\BM$ the first $O(n^{1/(1+\eps)})$ points in $B_\square$. We check whether there exists any pair of points $(a',b') \iin \AM \times \BM$ within distance $2(1+\eps)r$. If yes, we update its representative pair as $(a',b')$; otherwise, update $\square_\textrm{bool} =0$. The reason why he distance threshold is $2(1+\eps)r$ instead of $(1+\eps)r$ will be clear in the next lemma.

%The next lemma is the main observation of our procedure.

%Finally, we discuss how to reconstruct our data structure if the number of points is decreased in order to maintain the probabilistic guarantees.
%If the number of points in our data structure is less than $t/2$ then we delete our data structure and we re-construct it from scratch. We set $t=t/2-1$.  In that way, the amortized deletion time is $O(dn^{3/(1+\eps)}\log^2 n)$. %%so overall $O(dn^{3/(1+\eps)}\polylog n)$ time.
\medskip
When there are $n/2$ updates, we just reconstruct the entire data structure from scratch.

\subsubsection{Enumeration}
The high-level idea is to enumerate the representative pair of points for each bucket in $\C$. Assume a representative pair $(a,b)$ is found in a bucket $\square \in \C$. Next, the algorithm is going to enumerate all pairs containing point $a$. 

Initially, all buckets containing $a$ are maintained in $\C(a) \subseteq \C$. %Observe that for any bucket $\square$ containing point $a \in A_\square$, if there are $M$ points in $B_\square$ with distance $2(1+\epsilon)r$ far away from $a$, it is safe to skip all results in $\square$ with $a$, implied by Lemma~\ref{lem:proxy-1}. 
Algorithm~\ref{alg:enumerate-non-uniform} visits every bucket $\square \in \C(a)$ and starts to check the distances between $a$ and points in $B_\square$ that are not marked by $X(\square, a)$ (we show when a point is marked in the next paragraph). Each time a pair $(a,b)$ within distance $2(1+\eps)r$ is found, it just reports this pair and calls the procedure {\sc Deduplicate} on $(a,b)$ (details will be given later). If there are more than $M$ points far away (i.e. $>2r(1+\eps)$) from $a$, we just stop enumerating results with point $a$ in this bucket, and remove the bucket\footnote{In the enumeration phase, the ``remove'' always means conceptually marked, instead of changing the data structure itself.} $\square$ from $\C(a)$.
We also update $\AM$ so that if $a\in \AM$ we replace $a$ with another point $A_\square$. 
Once the enumeration is finished on $a$, 
i.e., when $\C(a)$ becomes empty, it can be easily checked that $a$ has been removed from all buckets.

Next, we explain more details on the de-duplication step presented as Algorithm~\ref{alg:deduplicate}. Once a pair of points $(a,b)$ within distance $2(1+\eps)r$ is reported, Algorithm~\ref{alg:deduplicate} goes over all buckets witnessing the collision of $a,b$, and marks $b$ with $X(\square, a)$ to avoid repeated enumeration (line 2). Moreover, for any bucket $\square$ with $a \in A_\square$ and $b \in B_\square$, if $(a,b)$ is also its representative pair, Algorithm~\ref{alg:deduplicate} performs more updates for $\square$. Algorithm~\ref{alg:deduplicate} first needs to decide whether $\square$ is still an active bucket for $a$ by checking the distances between $a$ and $M$ points unmarked by $a$ in $B_\square$. If such a pair within distance $2(1+\eps)r$ is found, it will set this pair as new representative for $\square$. Otherwise, it is safe to skip all results with point $a$ in this bucket. In this case, it needs to further update a new representative pair for $\square$ using $\AM, \BM$. Moreover, if no representative pair can be found, it is safe to skip all results with bucket $\square$.

\begin{algorithm}
\caption{{\sc EnumerateLSH}}
\label{alg:enumerate-non-uniform}

\While{$\C \neq \emptyset$}{
    $(a,b) \gets$ the representative pair of any bucket in $\C$\;
    $\C(a) \gets \{\square \in \C: a \in A_\square\}$\;
    \While{$\C(a) \neq \emptyset$}{
        Pick one bucket $\square \in \C(a)$; $i \gets 0$\;
        \ForEach{$b \in B_\square - X(\square, a)$}{
            \If{$\dist(a,b) \le 2(1+\eps)r$}{
                {\sc Emit}$(a,b)$\;
                
                {\sc Deduplicate}$(a,b)$\;
            }
            \Else{
                $i \gets i + 1$\;
                \bf{if} {$i > M$} \bf{then break\;} 
            }
        }
        $A_\square \gets A_\square - \{a\}$\; 
        $\C(a) \gets \C(a) - \{\square\}$\;
        Replace $a$ from $\AM$ (if $a\in \AM$) and update its representative pair\;
    }
}
\end{algorithm}

\begin{algorithm}
\caption{{\sc Deduplicate$(a,b)$}}
\label{alg:deduplicate}

\ForEach{$\square \in C$ with $a \in A_\square$ and $b \in B_\square$ }{
        $X(\square, a) \gets X(\square, a) \cup \{b\}$\;
        \If{$(a_\square,b_\square)=(a,b)$}{
                $B' \gets M$ arbitrary points in $B_\square - X(\square, a)$\; 
               \If{there is $b' \in B'$ with $\dist(a, b') \le 2(1+\eps)r$}{
                    $(a_\square,b_\square)=(a, b')$\;
                }
                \Else{
                    $\C(a) \gets \C(a) - \{\square\}$\;
                   $A_\square \gets A_\square - \{a\}$\;
                   \If{$a\in \AM$}{
                        Replace it with a new item $a'\in A_{\square}\setminus \AM$\;
                        Compute $a_c'$\;
                        \If{$\exists a''\in \AM$ with $a_c''>0$}{
                            $(a_{\square},b_{\square})=(a'', b'')$ where $b''\in \BM$ and $\dist(a'',b'')\leq 2(1+\eps)r$\;
                        }\Else{
                            $\C\gets \C\setminus\{\square\}$\;
                        }
                   }
        
                }
            }
    }
\end{algorithm}

For any bucket $\square$, we can maintain the points in $A_\square, B_\square, X(\square, a)$ in balanced binary search trees, so that points in any set can be listed or moved to a different set with $O(\log n)$ delay.
Moreover, to avoid conflicts with the markers made by different enumeration queries, we generate them randomly and delete old values by lazy updates~\cite{erickson2011static, overmars1981worst, overmars1987design} after finding new pairs to report.
%
%
%
%First, we show that each pair $(a\in ,b\in B)$ is reported at most once. Indeed, when we find a pair $(a, b)$ our algorithm goes through all buckets that contain $a$ and checks if they also contain $b$. If that is the case it marks $b$ with $X_{a}$, hence it will not report $(a,b)$ as it searches over the active buckets. In the end, our algorithm marks all appearances of $a$ with $X$, so in the future it will not report any other pair that contains $a$. We conclude that a pair will be reported at most once.
%
%Next we show that with high probability we report all pairs $(a,b)$ with distance at most $r$. From Lemma~\ref{lem:property} and our observations for point insertion/deletion, at any instance, for any pair with distance at most $r$ there is at least a proxy bucket. Let $\square$ be a proxy bucket of $(a,b)$. In that case we show that our algorithm will find the pair $(a,b)$. Assume that our algorithm does not find the pair $(a,b)$. This will happen if we set $\square_\textrm{bool}=0$ before our enumeration algorithm finds $(a,b)$. In order to set $\square_\textrm{bool}=0$ our algorithm finds that all distances between $O(n^{1/(1+\eps)})$ (arbitrary) points of $A_\square$ (that are not marked with $X$) and $O(n^{1/(1+\eps)})$ (arbitrary) points of $B_\square$ are greater that $2(1+\eps)r$. If that happens, then from the proof of Lemma~\ref{lem:deleteGuar}, $\square$ is not a proxy bucket for $(a,b)$, which is a contradiction.
%

\begin{lemma}
\label{lem:non-uniform-approximate}
The data structure supports $(1+2\eps)$-approximate enumeration, with high probability.
\end{lemma}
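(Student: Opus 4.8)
The plan is to verify the two defining properties of $(1+2\eps)$-approximate enumeration: (i) no pair at distance more than $2(1+\eps)r$ is ever emitted, and (ii) every pair within distance $r$ is emitted exactly once, with probability at least $1-1/n$. Property (i) is immediate from the code: Algorithm~\ref{alg:enumerate-non-uniform} emits $(a,b)$ only inside the test $\dist(a,b)\le 2(1+\eps)r$, and every representative pair created in {\sc Insert}, {\sc Delete} or {\sc Deduplicate} has distance at most $2(1+\eps)r$ by construction. ``Emitted exactly once'' follows from {\sc Deduplicate}: the first time $(a,b)$ is emitted, $b$ is added to $X(\square,a)$ for all buckets $\square$ with $a\in A_\square$ and $b\in B_\square$, and the inner scan of Algorithm~\ref{alg:enumerate-non-uniform} only ever looks at $B_\square\setminus X(\square,a)$; moreover once $a$ has been processed it has been conceptually removed from all of its buckets, so it is never revisited.

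The work is in proving that every $(a,b)$ with $\dist(a,b)\le r$ is emitted. I would condition on the event of Lemma~\ref{lem:non-uniform-property}, which has probability at least $1-1/n$ and already incorporates the union bound over the $\le n^2$ close pairs; it provides, for each such $(a,b)$, a copy $\mathbb{I}_j$ holding a bucket $\square$ in which $a,b$ collide and which satisfies the proxy-bucket condition for $(a,b)$, i.e.\ at most $M$ points of $A_\square\cup B_\square$ lie at distance more than $(1+\eps)r$ from $a$ (after fixing the constant in Lemma~\ref{lem:non-uniform-property} so that this is the same $M$ used as the threshold in the break test of Algorithm~\ref{alg:enumerate-non-uniform}). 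Two consequences are used. First, Lemma~\ref{lem:proxy} says that one cannot choose $M$ points of $A_\square$ and $M$ of $B_\square$ whose $M^2$ mutual distances all exceed $2(1+\eps)r$; since {\sc Insert}/{\sc Delete}/{\sc Deduplicate} keep $\AM,\BM$ as maximal subsets of $A_\square,B_\square$, $\square$ always carries a valid representative pair and hence stays in $\C$ until $a$ is processed (only removals of $A$-points shrink $A_\square$, and $a$ is one of them). Second, since a point at distance more than $2(1+\eps)r$ from $a$ is in particular at distance more than $(1+\eps)r$, at most $M$ points of $B_\square$ are counted as ``far'' by the variable $i$ in Algorithm~\ref{alg:enumerate-non-uniform}.

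It then remains to argue that $a$ is eventually selected as the current point of the outer loop of Algorithm~\ref{alg:enumerate-non-uniform}: the loop runs until $\C=\emptyset$; it always picks the first coordinate of the representative pair of some active bucket; that first coordinate is necessarily a not-yet-processed point; and processing one point strictly shrinks the set of not-yet-processed points. Since $\square$ remains in $\C$ as long as $a$ is unprocessed, the loop cannot terminate before $a$ becomes the current point, at which moment $\square\in\C(a)$. When the inner loop then handles $\square$ it scans $B_\square\setminus X(\square,a)$ and never increments $i$ past $M$, so it never breaks early and inspects all of $B_\square\setminus X(\square,a)$; therefore either $b$ has already been added to $X(\square,a)$ --- meaning $(a,b)$ was emitted earlier --- or $(a,b)$ is emitted now. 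So all close pairs are emitted; conditioned on the event of Lemma~\ref{lem:non-uniform-property}, the enumeration is correctly $(1+2\eps)$-approximate, and this event has probability $\ge 1-1/n$.

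I expect the main difficulty to be the state-tracking in the previous paragraph: formalizing an invariant that, while $(a,b)$ has not been emitted, every proxy bucket of $(a,b)$ remains in $\C$ with a replaceable representative pair --- this is where Lemma~\ref{lem:proxy} together with the ``keep $\AM,\BM$ maximal'' behaviour of the update and de-duplication routines is crucial --- and checking that the conceptual removals during enumeration (which mark rather than truly mutate the data structure) do not break it, in particular for buckets with fewer than $M$ points in $A_\square$ or in $B_\square$, where $\AM$ or $\BM$ coincides with $A_\square$ or $B_\square$ and Lemma~\ref{lem:proxy} has to be invoked in its degenerate form. The gap between the per-set counts in the statement of Lemma~\ref{lem:non-uniform-property} and the combined count in the proxy-bucket definition is real but purely a matter of constants, absorbed into the choice of $M$.
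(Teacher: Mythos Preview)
Your proposal is correct and follows essentially the same route as the paper: both argue (i) via the explicit distance test in Algorithm~\ref{alg:enumerate-non-uniform}, (ii) via {\sc Deduplicate}, and completeness by conditioning on Lemma~\ref{lem:non-uniform-property}, using Lemma~\ref{lem:proxy} to keep the proxy bucket active, and observing that at most $M$ points of $B_\square$ are far from $a$ so the scan never breaks early. Your write-up is in fact more careful than the paper's in spelling out why $a$ is eventually selected by the outer loop and in flagging the degenerate cases $|A_\square|<M$ or $|B_\square|<M$ and the per-set versus combined $M$-bound discrepancy, all of which the paper leaves implicit.
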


\begin{proof}[Proof of Lemma~\ref{lem:non-uniform-approximate}]
    It can be easily checked that any pair of points with distance more than $2(1+\eps)r$ will not be enumerated. Also, each result is reported at most once by Algorithm~\ref{alg:deduplicate}. Next, we will show that with high probability, all pairs of points within distance $r$ are reported. Consider any pair of points $(a,b)$ within distance $r$. Implied by Lemma~\ref{lem:non-uniform-property}, there must exist a proxy bucket $\square$ for $(a,b)$. Observe that there exists no subset of $M$ points from $A_\square$ as $\AM$ and subset of $M$ points from $B_\square$ as $\BM$, where all pairs of points in $\AM \times \BM$ have their distances larger than $2(1+\eps)r$, implied by Lemma~\ref{lem:proxy}, so $\square$ is active. Moreover, there exists no subset of $M$ points from $B_\square$ as $\BM$, where all pairs of points $(a, b'\in \BM)$ have their distances larger than $2(1+\eps)r$, so $\square$ is an active bucket for $a$. In Algorithm~\ref{alg:enumerate-non-uniform}, when visiting $\square$ by line 7-18, $(a,b)$ must be reported by $\square$ or have been reported previously.
\end{proof}

We next analyze the complexity of the data structure. The size of the data structure is $\O(dn + n  k \tau)=\O(dn+n^{1+\rho})$. The insertion time is $\O(d \tau  M)=\O(dn^{\rho}M)$. Using that, we can bound the construction time of this data structure as $\O(d n\tau  M)=\O(dn^{1+\rho}M)$.
The deletion time is $\O(d  \tau  M)=\O(d n^{\rho}M)$. The delay is bounded by $\O(d \tau M)=\O(d n^{\rho}M^2)$ since after reporting a pair $(a, b)$, we may visit $\O(\tau)$ buckets and spend $O(M)$ time for each in updating the representative pair.
%
%%\paragraph{Remark} \xiao{Not edited yet.} Notice that what we described in our enumeration algorithm works when the first enumeration query is executed. However, when the second or any next enumeration query comes we should be able to report new mark codes so that there is no conflict with previous enumeration queries. In addition when we update $\square_\textrm{bool}$ during an enumeration query we should be makes sure that when a new enumeration query comes we will have the original values. We can solve the second issue by maintaining a temporal value $\square_\textrm{bool}$ with mark $X$. 
%
%\medskip
%
%
Putting everything together, we conclude the next theorem.
%Due to lack of space, all extensions are given in Appendix~\ref{appendix:lsh-extension}.

\begin{theorem}
\label{the:high-non-uniform}
Let $A$ and $B$ be two sets of points in $\mathbb{R}^d$, where $|A|+|B|=n$ and let $\eps, r$ be positive parameters. For $\rho=\frac{1}{(1+\eps)^2}+o(1)$,
a data structure of $\O(dn+n^{1+\rho})$ size can be constructed in $\O(dn^{1+2\rho})$ time, and updated in $\O(dn^{2\rho})$ amortized time, while supporting $(1+2\eps)$-approximate enumeration for similarity join under the $\ell_2$ metric with $\O(dn^{2\rho})$ delay.
\end{theorem}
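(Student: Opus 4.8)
The plan is to obtain Theorem~\ref{the:high-non-uniform} by stitching together the correctness lemmas proved above with a direct running-time accounting of the update and enumeration routines; no new construction is needed. Throughout I would keep $m=\Theta(\log n/\log(1/\gamma))=O(\log n)$ independent copies $\mathbb{I}_1,\dots,\mathbb{I}_m$ of the structure, each built from $\tau=O(n^{\rho})$ hash functions and using sample size $M=O(n^{\rho})$, so that the per-copy ``constant probability'' guarantees can be amplified to ``with high probability''.

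For correctness I would invoke Lemma~\ref{lem:non-uniform-approximate} and spell out its three ingredients. First, Algorithm~\ref{alg:enumerate-non-uniform} emits a pair $(a,b)$ only after checking $\dist(a,b)\le 2(1+\eps)r$, so nothing beyond the allowed threshold is ever output. Second, each time $(a,b)$ is emitted, Algorithm~\ref{alg:deduplicate} adds $b$ to $X(\square,a)$ for every bucket $\square$ in which $a$ and $b$ collide, which blocks a second emission, so the output is duplicate-free. Third, fix a pair with $\dist(a,b)\le r$: Lemma~\ref{lem:non-uniform-property}, itself a union bound of Lemma~\ref{lem:nnq} over the $\le n^{2}$ close pairs and the $m$ copies, guarantees with probability $\ge 1-1/n$ a \emph{proxy} bucket $\square$ for $(a,b)$ in some copy; Lemma~\ref{lem:proxy} then forces $\square$ to be active and, in particular, active for $a$; and following Algorithm~\ref{alg:enumerate-non-uniform} one checks that $a$ stays in $A_\square$ until $(a,b)$ --- or a duplicate already emitted from another proxy bucket --- is reported. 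This gives $(1+2\eps)$-approximate enumeration with probability $1-1/n$.

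For the resource bounds I would account separately. Space: each copy stores the $n$ points once ($O(dn)$), the $O(n\tau)$ point--bucket incidences, and for each nonempty bucket the $O(M)$-size samples $\AM,\BM$ with counters and the representative pair; since $\sum_{\square}(|\AM|+|\BM|)=O(n\tau)$ this is $O(dn+n\tau)$ per copy, hence $\O(dn+n^{1+\rho})$ overall. Update: an inserted or deleted point lands in $\O(\tau)$ buckets, and the only nontrivial work in a touched bucket is recomputing one $\beta$-counter and/or refreshing a representative pair, each $O(dM)$; this is $\O(d\tau M)=\O(dn^{2\rho})$, and rebuilding from scratch after every $n/2$ updates --- which costs $\O(dn^{1+2\rho})$, also the construction time --- amortizes it and absorbs the lazy-deletion markers. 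Delay: after emitting $(a,b)$, Algorithm~\ref{alg:deduplicate} visits the $O(\tau)$ co-buckets of $a$ and $b$, spending $O(dM)$ in each; and to reach the next pair, Algorithm~\ref{alg:enumerate-non-uniform} sweeps the $O(\tau)$ buckets of $\C(a)$, spending $O(dM)$ per bucket before it either emits or, after $M$ far points, conceptually removes $a$ and moves on, while when $\C(a)$ empties $a$ has left every bucket so the next representative pair drawn from $\C$ involves a fresh point. Summed, this is $\O(d\tau M)=\O(dn^{2\rho})$, given balanced search trees for $A_\square,B_\square,X(\square,a)$ and randomized lazy markers.

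I expect the delay analysis to be the main obstacle. The delicate issue is proving the enumeration never stalls: abandoning a bucket after $M$ ``far'' points must be safe, which is exactly what Lemma~\ref{lem:proxy} provides (a proxy bucket is never dropped before its close pairs surface); the de-duplication must not cascade past the $O(\tau)$ co-buckets of the pair just emitted; and the conceptual ``removals'' together with the $X(\square,a)$ markers must mesh with the representative-pair invariant so that genuine progress, rather than re-discovery of already-reported pairs, happens between consecutive emissions. A secondary point is choosing the constant in $m=\Theta(\log n/\log(1/\gamma))$ so that the union bound over $\le n^{2}$ pairs leaves failure probability $\le 1/n$. Substituting $\rho=\frac{1}{(1+\eps)^2}+o(1)$ for $\ell_2$ then yields the stated size, construction, update, and delay bounds.
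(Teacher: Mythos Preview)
Your proposal is correct and follows essentially the same route as the paper: correctness is deduced from Lemma~\ref{lem:non-uniform-approximate} (whose proof unpacks into exactly the three ingredients you list, via Lemmas~\ref{lem:non-uniform-property} and~\ref{lem:proxy}), and the resource bounds are obtained by the same direct accounting with $\tau=O(n^{\rho})$, $M=O(n^{\rho})$, and $m=O(\log n)$ copies, yielding space $\O(dn+n\tau)$, update and delay $\O(d\tau M)$, and construction $\O(dn\tau M)$. Your identification of the delay analysis as the delicate step, and the role of Lemma~\ref{lem:proxy} in justifying the ``abandon after $M$ far points'' rule, mirrors the paper's own treatment.
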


Alternatively, we can insert or delete points from $A\cup B$ without maintaining the sets $\AM, \BM$ for every bucket $\square$. In the enumeration phase, given a bucket $\square$, we can visit $M$ arbitrary points from $A_\square$ and $M$ arbitrary points from $B_\square$ and compute their pairwise distances. If we find no pair $(a\in A_\square, b\in B_\square)$ with $\dist(a,b)\leq 2(1+\eps)r$ then we skip this bucket. Otherwise we report the pair $(a,b)$ and we run the deduplicate procedure. In this case the update time is $\O(dn^{\rho})$ and the delay is $\O(dn^{3\rho})$.

\begin{theorem}
\label{the:high-non-uniform2}
Let $A$ and $B$ be two sets of points in $\mathbb{R}^d$, where $|A|+|B|=n$ and let $\eps, r$ be positive parameters. For $\rho=\frac{1}{(1+\eps)^2}+o(1)$,
a data structure of $\O(dn+n^{1+\rho})$ size can be constructed in $\O(dn^{1+\rho})$ time, and updated in $\O(dn^{\rho})$ amortized time, while supporting $(1+2\eps)$-approximate enumeration for similarity join under the $\ell_2$ metric with $\O(dn^{3\rho})$ delay.
\end{theorem}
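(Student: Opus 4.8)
The plan is to reuse the data structure and analysis behind Theorem~\ref{the:high-non-uniform} almost verbatim, dropping only the per-bucket sampled sets $\AM,\BM$ and the stored representative pairs, and recovering the information they provided on the fly during enumeration. Concretely, I would fix an LSH family $\H$ of quality $\rho$, pick $\tau=O(n^\rho)$ hash functions, keep $m=O(\log n)$ independent copies, and for every bucket $\square$ over all copies store only $A_\square$ and $B_\square$ (in balanced binary search trees), together with the collection of non-empty buckets; set $M=O(n^\rho)$. This immediately gives size $\O(dn+n^{1+\rho})$ and, since no representative pairs are precomputed, construction time $\O(dn^{1+\rho})$ (just hashing the $n$ points into $\O(\tau)$ buckets each). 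For an update to a point of $A$ or $B$ I would visit its $\O(\tau)$ buckets and perform an $O(\log n)$-time search-tree insert/delete in each, for $\O(dn^\rho)$ amortized time with a rebuild every $n/2$ operations; this is exactly where we save the factor $M$ relative to Theorem~\ref{the:high-non-uniform}, which spent $O(M)$ per bucket maintaining $\beta$-counters and representatives.

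For enumeration I would run Algorithms~\ref{alg:enumerate-non-uniform} and~\ref{alg:deduplicate}, with the single change that the set $\C$ of active buckets with ready-made representative pairs no longer exists, so whenever the algorithm needs a representative pair of a bucket $\square$ it instead draws $M$ arbitrary points from $A_\square$ and $M$ from $B_\square$ and tests all $M^2$ pairs in $O(dM^2)$ time: if some pair has distance $\le 2(1+\eps)r$ it becomes the representative, and otherwise, by Lemma~\ref{lem:proxy-2:Main}, $\square$ is not a proxy bucket for any close pair it contains and can be discarded from the rest of this enumeration. Once a representative pair $(a,b)$ is obtained, the inner loop enumerating all pairs through $a$ proceeds exactly as before --- scanning $B_{\square'}$ over the $\O(\tau)$ buckets $\square'$ containing $a$, emitting close pairs, marking them (de-duplication) in every bucket that witnesses both endpoints, and abandoning $\square'$ for $a$ after more than $M$ points of $B_{\square'}$ are found at distance $>2(1+\eps)r$ from $a$.

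Correctness is inherited with essentially no new work: Lemma~\ref{lem:non-uniform-property} still supplies, with high probability over all $\le n^2$ pairs within distance $r$, a proxy bucket $\square$ in some copy with at most $M$ points of $A_\square$ and at most $M$ points of $B_\square$ at distance more than $(1+\eps)r$ from $a$; such a $\square$ passes the $M\times M$ test (so is never discarded) and, because fewer than $M$ of its $B$-points are far from $a$, the inner loop examines all of $B_\square$ and emits $(a,b)$ unless it was emitted earlier. No pair farther than $2(1+\eps)r$ is ever emitted and de-duplication guarantees each pair once, so the structure supports $(1+2\eps)$-approximate enumeration with probability $1-1/n$, exactly as in Lemma~\ref{lem:non-uniform-approximate}.

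The delay bound is the only place requiring genuine re-analysis, and it is the step I expect to be the main obstacle. Between two consecutive outputs the algorithm spends $\O(\tau)$ time finishing the de-duplication of the previous pair, $O(dM)$ scanning far points inside the current bucket, and --- the dominant cost --- up to $O(dM^2)$ for each bucket it must \emph{test} before the next representative pair appears. I would argue that the number of such tests charged between two outputs is $\O(\tau)$: a failed test permanently removes a bucket from this enumeration, a successful test produces a representative pair (hence, within one more $O(dM)$ pass, an output), and the only buckets that can be visited without an immediate output are the $\O(\tau)$ buckets containing the current point $a$. Multiplying, the delay is $\O(\tau)\cdot O(dM^2)=\O(dn^{3\rho})$. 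Making the charging watertight --- in particular ruling out long runs of failed tests between outputs, and handling the interaction with the de-duplication markers, which must be generated randomly and cleaned up by lazy deletion so that distinct enumeration queries do not interfere --- is the delicate part; everything else is a transcription of the Theorem~\ref{the:high-non-uniform} argument with $M$ replaced by $M^2$ in the per-bucket enumeration cost and removed from the per-update cost.
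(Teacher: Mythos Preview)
Your proposal is correct and takes essentially the same approach as the paper. The paper's own argument for this theorem is a one-paragraph sketch stating precisely the modification you describe---drop $\AM,\BM$ and the stored representatives, and whenever a representative pair is needed, test $M$ arbitrary points of $A_\square$ against $M$ arbitrary points of $B_\square$ in $O(dM^2)$ time, skipping the bucket if no close pair appears (justified by Lemma~\ref{lem:proxy-2:Main})---and asserts the resulting bounds without further detail; you reproduce this and add a more explicit accounting of size, construction, update, correctness, and delay, correctly flagging the charging argument for the $\O(\tau)$ bucket visits between outputs as the point requiring care.
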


Notice that the complexities of the theorems above depend on the parameter $M$ from Lemma~\ref{lem:nnq}. Hence, a better bound on $M$ will give improve the results of our data structure. In the original paper \cite{indyk1998approximate} (Section 4.2) for the Hamming metric the authors choose $\rho=\frac{\log(1/p_1)}{\log(p_1/p_2)}$ showing that for any $p,q\in P$ such that $\dist(p,q)\leq r$ there exists a bucket, with constant probability $\gamma$, that $p,q$ collide and the number of points in $P\cap \overline{\mathcal{B}}(p,(1+\eps)r)$ colliding with $p$ in the bucket is at most $M=O(1)$.
For $\eps>1$ they show that $\rho<\frac{1}{\eps}$.
Equivalently we can set $\eps$ as $\eps-1$ and %bound the number of points in $P\cap \overline{\mathcal{B}}(p,(2+\eps)r)$ colliding with $p$ in that bucket to be at most 
$M=O(1)$.
Using this result we can get the next theorem.

\begin{theorem}
\label{the:Shigh-non-uniform}
Let $A,B$ be two sets of points in $\mathbb{H}^d$, where $|A|+B|=n$ and let $\eps, r$ be positive parameters. For $\rho=\frac{1}{1+\eps}$,
a data structure of $\O(dn+n^{1+\rho})$ size can be constructed in $\O(dn^{1+\rho})$ time, and updated in $\O(dn^{\rho})$ amortized time, while supporting $(3+2\eps)$-approximate enumeration for similarity join under the Hamming metric with $\O(dn^{\rho})$ delay.
\end{theorem}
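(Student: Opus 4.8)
The plan is to re-use, essentially verbatim, the non-uniform LSH data structure of Appendix~\ref{sec:lsh} (Theorem~\ref{the:high-non-uniform}) and to change only two things: the underlying LSH family, and the value of the parameter $M$. The key observation is that the generic construction is parametrized entirely by the number of hash functions $\tau$ and by the collision bound $M$ of Lemma~\ref{lem:nnq}: the proxy-bucket machinery (Lemma~\ref{lem:proxy}, Lemma~\ref{lem:non-uniform-property}), the \textsc{Insert}/\textsc{Delete}/\textsc{EnumerateLSH}/\textsc{Deduplicate} routines, and the correctness proof of Lemma~\ref{lem:non-uniform-approximate} use nothing about the LSH family beyond Lemma~\ref{lem:nnq}, and the resulting costs are $\O(dn+n\tau)$ space, $\O(dn\tau M)$ construction time, $\O(d\tau M)$ amortized update time (with a global rebuild after every $n/2$ operations), and $\O(d\tau M)$ delay. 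Hence it suffices to supply, for the Hamming metric, an LSH family for which Lemma~\ref{lem:nnq} holds with $\tau=O(n^{\rho})$, $\rho=\frac{1}{1+\eps}$ and $M=O(1)$, and then to re-derive the approximation factor.

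First I would instantiate $\H$ with the bit-sampling family of Indyk and Motwani~\cite{indyk1998approximate} recalled in the remark preceding the theorem, using concatenation length $k=O(\log n)$ and taking the two distance scales more than a factor of two apart. Concretely, relabel the error parameter exactly as in that remark --- Indyk--Motwani's gap parameter $\eps'>1$ becomes our $\eps=\eps'-1>0$ --- so that for every point $p$ admitting some $q$ with $\dist(p,q)\le r$, with a constant probability $\gamma$ there is a bucket in which $p$ and $q$ collide and that contains at most $M=O(1)$ points of $P$ lying at distance more than $(2+\eps)r$ from $p$, while only $\tau=O(n^{\rho})$ tables are needed with $\rho=\frac{1}{1+\eps}$. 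This is precisely the statement of Lemma~\ref{lem:nnq} with the far-threshold $(2+\eps)r$ in place of $(1+\eps)r$ and with $M=O(1)$; keeping $m=O(\log n)$ independent copies and repeating the union bound of Lemma~\ref{lem:non-uniform-property} boosts the success probability to $1-1/n$.

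Next I would propagate the constant $(2+\eps)r$ through the remaining analysis. In the definition of a proxy bucket and in Lemma~\ref{lem:proxy} the role of $(1+\eps)r$ is now played by this larger far-threshold, so the same triangle-inequality argument shows that every representative pair, and hence every reported pair, has distance at most $2(2+\eps)r=(4+2\eps)r=(1+(3+2\eps))r$; thus no pair farther than $(3+2\eps)r$ is ever emitted, and \textsc{Deduplicate} still ensures each reported pair is emitted once. Conversely, Lemma~\ref{lem:proxy} still shows that a proxy bucket of a pair $(a,b)$ with $\dist(a,b)\le r$ is active and active for $a$, so the argument of Lemma~\ref{lem:non-uniform-approximate} shows every such pair is reported with probability $\ge 1-1/n$. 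Together this gives $(3+2\eps)$-approximate enumeration with high probability. Substituting $M=O(1)$ and $\tau=O(n^{\rho})$ into the generic costs listed above then yields space $\O(dn+n^{1+\rho})$, construction time $\O(dn^{1+\rho})$, amortized update time $\O(dn^{\rho})$, and delay $\O(dn^{\rho})$, which is the claim.

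The hard part is the first step: one must check that the Indyk--Motwani guarantee is genuinely of the form required by Lemma~\ref{lem:nnq} --- a worst-case bound $M=O(1)$ on the number of colliding points farther than $(2+\eps)r$ from $p$, holding jointly with the collision of a witness $q$ at distance $\le r$, and holding only with constant probability $\gamma$ so that the $O(\log n)$-fold amplification is sound --- and not merely, say, a bound on the expected bucket size. This is exactly what forces the two distance scales to be more than a factor of two apart, and is the reason the approximation guarantee necessarily degrades from $1+2\eps$ to $3+2\eps$. Once this is pinned down, everything else is routine bookkeeping built on top of Theorem~\ref{the:high-non-uniform}.
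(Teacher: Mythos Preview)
Your proposal is correct and matches the paper's own argument essentially line for line: invoke the Indyk--Motwani bit-sampling result for Hamming to get $M=O(1)$ at the cost of a gap parameter $\eps'>1$, substitute $\eps'=1+\eps$ so that the far-threshold becomes $(2+\eps)r$ and $\rho=1/(1+\eps)$, and then plug $M=O(1)$, $\tau=O(n^{\rho})$ into the generic cost formulas of the non-uniform LSH framework. Your derivation of the approximation factor $2(2+\eps)r=(1+(3+2\eps))r$ and your remark that the only nontrivial step is checking that the Indyk--Motwani guarantee is a worst-case bound on far colliders (not merely an expectation) are both on point and in fact more explicit than the paper's one-paragraph sketch.
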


%There are three extensions from our results for the LSH:
In the next remarks we show that our results can be extended to $r$ being part of the query (variable).
Furthermore, we show that our result is near-optimal.

\paragraph{Remark 1.} Similar to the LSH~\cite{indyk1998approximate} used for ANN query, we can extend our current data structure to the case where $r$ is also part of the query. For simplicity, we focus on Hamming metric. For $\mathbb{H}^d$, it holds that $1\leq r\leq d$. Hence, we build $Z=O(\log_{1+\eps} d)=O(\eps^{-1}\log d)$ data structures as described above, each of them corresponding to a similarity threshold $r_i=(1+\eps)^i$ for $i=1,\ldots, Z$. Given a query with threshold $r$, we first run a binary search and find $r_j$ such that $r\leq r_j\leq (1+\eps)r$. Then, we use the $j$-th data structure to answer the similarity join query. Overall, the data structure has $\O(dn+\eps^{-1}n^{1+\rho}\log d)$ size can be constructed in $\O(\eps^{-1}dn^{1+\rho}\log d)$ time, and updated in $\O(\eps^{-1}dn^{\rho}\log d)$ amortized time.
After finding the value $r_j$ in $O(\log (\eps^{-1}\log d))$ time, the delay guarantee remains $\O(dn^{\rho})$.
We can also extend this result to $\ell_2$ or $\ell_\infty$ metrics using known results (\cite{gionis1999similarity, har2011geometric, indyk1998approximate}).

%\paragraph{Remark 2.} This framework of query enumeration in Hamming distance can be extended to $\ell_2$ or $\ell_1$ metric, by resorting the LSH family constructed in \cite{datar2004locality} and \cite{har2011geometric} (Section 20.2), obtaining a data structure with exactly the same complexity. 

%Currently, the best LSH family constructed for the ANN problem under $\ell_2$ metric is presented in \cite{andoni2006near}, with $\rho \le \frac{1}{(1+\eps)^2}+o(1)$. Our framework can benefit automatically from any improvement over the construction of LSH family. On the other hand, it is shown in~\cite{motwani2006lower} that $\rho\geq 0.462/(1+\eps)$ for Hamming space and $\ell_1$ metric, and $\rho\geq 0.462/(1+\eps)^2$ for the $\ell_2$ metric, which also implies some limitations of our approach.

\paragraph{Remark 2.} It is known that the algorithm for similarity join can be used to answer the ANN query. Let $P$ be a set of points in $\Re^d$, where $d$ is a large number, and $\eps, r$ be parameters. The ANN query asks that (1) if there exists a point within distance $r$ from $q$, any one of them should be returned with high probability; (2) if there is no point within distance $(1+\eps)r$ from $q$, it returns ``no''with high probability.  For any instance of ANN query, we can construct an instance of similarity join by setting $A  = P$ and $B = \emptyset$. Whenever a query point $q$ is issued for ANN problem, we insert $q$ into $B$, invoke the enumeration query until the first result is returned (if there is any), and then remove $q$ from $B$. 
Our data structure of $\O(dn + n^{1+\rho})$ size can answer $(1+2\eps)$-approximate ANN query in $\O(dn^{2\rho})$ time in $\ell_2$, which is only worse by a factor $n^\rho$ from the best data structure for answering $\eps$-approximate ANN query.
\end{document}